\definecolor{darkblue}{rgb}{0, 0, 0.5}
\newtheorem{theorem}{Theorem}
\theoremstyle{definition}
\theoremstyle{plain}
\newtheorem{assumption}{Assumption}
\newtheorem{corollary}{Corollary}
\newtheorem{lemma}{Lemma}
\newtheorem{proposition}{Proposition}
\numberwithin{equation}{section}
\theoremstyle{definition}
\newtheorem{remark}{Remark}[section]
\newcommand{\G}{H}
\newcommand{\EL}{K}
\newcommand{\K}{K}
\newcommand{\Z}{\Psi}
\newcommand{\E}{\mathbb E}
\newcommand{\br}{L}
\newcommand{\est}{\hat\theta_N}
\newcommand{\true}{\theta^*}
\newcommand{\fp}{\bar\theta}
\newcommand{\param}{\theta}
\newcommand{\dparam}{{d_X+d_D}}
\newcommand{\dpl}{J}
\renewcommand{\equiv}{:=}
\begin{document}
\title[\textbf{Decentralization Estimators for IVQR}]{\textbf{Decentralization Estimators for Instrumental Variable Quantile Regression Models}\\}
\author[Kaido and W\"uthrich]{Hiroaki Kaido and Kaspar W\"uthrich \vspace{1cm} }

\date{\today. Kaido: Economics Department, Boston University, Boston, MA, email: \texttt{\hyperlink{hkaido@bu.edu}{hkaido@bu.edu}}. W\"uthrich: Department of Economics, University of California San Diego, La Jolla, CA, email: \texttt{\hyperlink{kwuthrich@ucsd.edu}{kwuthrich@ucsd.edu}}.  We are grateful to Alfred Galichon for many helpful discussions and valuable inputs. We would like to thank Victor Chernozhukov, Patrik Guggenberger, Hidehiko Ichimura, Yuichiro Kamada, David Kaplan, Hiroyuki Kasahara, Kengo Kato, Ivana Komunjer, Blaise Melly, Klaus Neusser, Quang Vuong (discussant), the editor (Christopher Taber), anonymous referees, and conference and seminar participants at Boston College, Columbia University, Penn State, the University of Arizona, the University of Bern, UCLA, UCR, UCSB, UCSD, UNC, and the Workshop for Advances in Microeconometrics for helpful comments. We are grateful to Chris Hansen for sharing the data for the empirical application. This project was started when both authors were visiting the MIT in 2015. }

\begin{abstract}
The instrumental variable quantile regression (IVQR) model \citep{CH2005} is a popular tool for estimating causal quantile effects with endogenous covariates. However, estimation is complicated by the non-smoothness and non-convexity of the IVQR GMM objective function. This paper shows that the IVQR estimation problem can be decomposed into a set of conventional quantile regression sub-problems which are convex and can be solved efficiently. This reformulation leads to new identification results and  to fast, easy to implement, and tuning-free estimators that do not require the availability of high-level ``black box'' optimization routines. 
 \vspace{0.2in}\\
{\footnotesize \textbf{Keywords:} instrumental variables, quantile regression, contraction mapping, fixed point estimator, bootstrap.}\\
{\footnotesize \textbf{JEL Codes:} C21, C26}\\

\end{abstract}

\bigskip

\maketitle

\clearpage
\section{Introduction}

Quantile regression (QR), introduced by \citet{KoenkerBassett1978}, is a widely-used method for estimating the effect of regressors on the whole outcome distribution. QR is flexible, easy to interpret, and can be computed very efficiently as the solution to a convex problem. However, in many applications, the variables of interest are endogenous, rendering QR inconsistent for estimating causal quantile effects. The instrumental variable quantile regression (IVQR) model of \citet{CH2005,CH2006} generalizes QR to accommodate endogenous regressors. Unfortunately, in sharp contrast to QR and other IV estimators such as two-stage least squares (2SLS), estimation of IVQR models is computationally challenging because the resulting estimation problem, formulated as a generalized method of moments (GMM) problem, is non-smooth and non-convex. From an applied perspective, this issue is particularly troublesome as resampling methods are often used to avoid the choice of tuning parameters when estimating the asymptotic variance of IVQR estimators.

In this paper, we develop a new class of estimators for linear IVQR models. The proposed estimators are fast, easy to implement, tuning-free, and do not require the availability of high-level ``black box'' optimization routines. They are particularly suitable for settings with many exogenous regressors, a moderate number of endogenous regressors, and a large number of observations, which are ubiquitous in applied research. The key insight underlying our estimators is that the complicated and nonlinear IVQR estimation problem can be ``decentralized'', i.e., decomposed into a set of more tractable sub-problems, each of which is  solved by a ``player'' who best responds to  the other players' actions. Each subproblem is a conventional (weighted) QR problem, which is convex and can be solved very quickly using robust algorithms. The IVQR estimator is then characterized as a fixed point of such sub-problems, which can be viewed as the pure strategy Nash equilibrium of the ``game''. Computationally, this reformulation allows us to recast the original non-smooth and non-convex optimization problem as the problem of finding the fixed point of a low dimensional map, which leads to substantial reductions in computation times. 

Implementation of our preferred procedures is straightforward and only requires the availability of a routine for estimating quantile regressions and in some cases a univariate root-finder. The resulting estimation algorithms attain significant computational gains. For example, we show that in problems with two endogenous variables, a version of our estimator that uses a contraction algorithm is 149--308 times faster than the most popular existing approach for estimating IVQR models, the inverse quantile regression (IQR) estimator of \citet{CH2006}. Another version that uses a nested root-finding algorithm, which is guaranteed to converge under a milder condition, is 84--134 times faster than the IQR estimator. Importantly, these computational gains do not come at a cost in terms of the finite sample performance of our procedures, which is very similar to IQR. The computational advantages of our estimators are even more substantial with more than two endogenous variables. The reason is that the dimensionality of the grid search underlying IQR corresponds to the number of endogenous variables, which renders IQR computationally prohibitive in empirically relevant settings whenever the number of endogenous variables exceeds two or three. 

The fixed point reformulation also provides new insights into global identification of IVQR models. In particular, it allows us to study identification and stability of the algorithms (at the population level) in the same framework. Exploiting the equivalence of global identification and uniqueness of the fixed point, we give a new identification result and population algorithms based on the contraction mapping theorem. We then compare our identification conditions to those of \citet{CH2006}. Further, our reformulation is shown to be useful beyond setups where the contraction mapping theorem applies as long as the parameter of interest is globally identified. For such settings, algorithms based on root-finding methods are proposed. Finally, we show that, by recursively nesting fixed point problems, it is  possible to recast the IVQR estimation problem as a univariate root-finding problem. While adding nests incurs additional computational costs, our Monte Carlo experiments suggest that  estimation procedures based on nesting perform well and are computationally reasonable when the number of endogenous variables is moderate.

We establish consistency and asymptotic normality of the proposed estimators and prove validity of the empirical bootstrap for estimating the limiting laws. The bootstrap is particularly attractive in conjunction with our computationally efficient estimation algorithms as it allows us to avoid the choice of tuning parameters inherent to estimating the asymptotic variance based on analytic formulas. The key technical ingredient for deriving our theoretical results is the Hadamard differentiability of the fixed point map. This result may be of independent interest.

To illustrate the usefulness of our estimation algorithms, we revisit the analysis of the impact of 401(k) plans on savings in \citet{CH2004}. Based on this application, we perform extensive Monte Carlo simulations, which demonstrate that our estimation and inference procedures have excellent finite sample properties.

\subsection{Literature}

We contribute to the literature on estimation and inference based on linear IVQR models. \citet{ChernozhukovHong2003} have proposed a quasi-Bayesian approach which can accommodate multiple endogenous variables but, as noted by \citet{CH2013}, requires careful tuning in applications. \citet{CH2006} have developed an inverse QR algorithm that combines grid search with convex QR problems. Because the dimensionality of the grid search equals the number of endogenous variables, this approach is computationally feasible only if the number of endogenous variables is very low. \citet{ChernozhukovHansen2008} and \citet{Jun2008} have studied weak instrument robust inference procedures based on  the inversion of Anderson-Rubin-type tests. \citet{Chernozhukovetal2009} have proposed a finite sample inference approach. \citet{AndrewsMikusheva2016} have developed a general conditional inference approach and derived sufficient conditions for the IVQR model. \citet{KaplanSun2017} and \citet{deCastroetal2018} have suggested to use smoothed estimating equations to overcome the non-smoothness of the IVQR estimation problem, although the non-convexity remains.
More recently, \citet{ChenLee2018} have proposed to reformulate the IVQR problem as a mixed-integer quadratic programming problem that can be solved using well-established algorithms. However, efficiently solving such a problem is still challenging even for low-dimensional settings. By replacing the $\ell_2$ norm by the $\ell_\infty$ norm, \citet{Zhu2018} has shown that the problem admits a reformulation as a mixed-integer linear programming problem, which can be computed more efficiently than the quadratic program in \citet{ChenLee2018}. This procedure typically requires an early termination of the algorithm to ensure computational tractability which is akin to a tuning parameter choice. In addition, \citet{Zhu2018} has proposed a $k$-step approach that allows for estimating models with multiple endogenous regressors based on large datasets, but requires estimating the gradient. \citet{pouliot2018} proposes a mixed integer linear programming formulation that allows for subvector inference via the inversion of a distribution-free rankscore test and can be modified to accommodate weak instruments. An important drawback of the estimation approaches based on mixed integer reformulations is that they rely on the availability of high-level ``black box'' optimization routines such as Gurobi and often require careful tuning in applications. Finally, imposing a location-scale model for the potential outcomes, \citet{Machado2018} propose moment-based estimators for the structural quantile function.

Compared to the existing literature on the estimation of linear IVQR models, the main advantages of the proposed estimation algorithms are the following. First, by relying on convex QR problems, our estimators are easy to implement, robust, and computationally efficient in settings with many exogenous variables, a moderate number of endogenous variables, and a large number of observations. Second, by exploiting the specific structure of the IVQR estimation problem, our estimators are tuning-free and do not require the availability of high-level ``black box'' optimization routines. Third, our estimators are based on the original IVQR estimation problem and thus avoid the choice of  smoothing bandwidths and do not rely on additional restrictions on the structural quantile function.

Semi- and nonparametric estimation of IVQR models has been studied by \citet{ChernozhukovImbensNewey2007}, \citet{HorowitzLee2007}, \citet{ChenPouzo2009}, \citet{ChenPouzo2012}, \citet{GagliardiniScaillet2012}, and \citet{Wuthrich2019}. \citet{CH2013} and \citet{Chernozhukov+17handbook} have provided surveys on the IVQR model including references to empirical applications.

\citet{AAI2002} have proposed an alternative approach to the identification and estimation of quantile effects with binary endogenous regressors, which builds on the local average treatment effects framework of \citet{AngristImbens1994}. Their approach has been extended and further developed by \citet{Frandsenetal2012}, \citet{FrolichMelly2013jbes}, and  \citet{Bellonietal2017} among others. We refer to \citet{MellyWuthrich2017} for a recent review of this approach and to \citet{Wuthrich2020} for a comparison between this approach and the IVQR model. Identification and estimation in nonseparable models with continuous endogenous regressors have been studied by \citet{Chesher2003}, \citet{KoenkerMa2006}, \citet{Lee2007}, \citet{Jun2009}, \citet{ImbensNewey2009}, \citet{DHaultfoeuilleFevrier2015}, and \citet{Torgovitsky2015} among others.

On a broader level, our paper contributes to the literature which proposes estimation procedures that rely on decomposing computationally burdensome estimation problems into several more tractable subproblems. This type of procedure, which we call decentralization, has been applied in different contexts. Examples include the estimation of single index models with unknown link function \citep{WeisbergWelsh94}, general maximum likelihood problems \citep{Smyth96}, linear models with high-dimensional fixed effects \citep[e.g.,][and the references therein]{GuimaraesPortugal10}, sample selection models \citep{MarraRadice13}, peer effects models \citep{Arcidiacono+12}, interactive fixed effects models \citep[e.g.,][]{Chen+14panel,MoonWeidner15}, and random coefficient logit demand models \citep{LeeSeo15ablp}. Most of these papers decompose a single estimation problem into two subproblems. The present paper explicitly considers cases in which the number of subproblems may exceed two. 

\subsection{Organization of the Paper}
The remainder of the paper is structured as follows. Section \ref{sec:setup} introduces the setup and the IVQR model. Section \ref{sec:decentralization} shows that the IVQR estimation problem can be decentralized into a series of (weighted) conventional QR problems. In Section \ref{sec:population_algorithms}, we introduce population algorithms based on the contraction mapping theorem and root-finders. Section \ref{sec:sample_algorithms} discusses the corresponding sample algorithms. In Section \ref{sec:theory}, we establish the asymptotic normality of our estimators and prove the validity of the bootstrap. Section \ref{sec:empirical_example} presents an empirical application. In Section \ref{sec:simulation_study}, we provide simulation evidence on the computational performance and the finite sample properties of our methods. Section \ref{sec:conclusion} concludes. All proofs as well as some additional theoretical and simulation results are collected in the appendix.

\section{Setup and Model}
\label{sec:setup}

Consider a setup with a continuous outcome variable $Y$, a $d_X\times 1$ vector of exogenous covariates $X$, a $d_D\times 1$ vector of endogenous treatment variables $D$, and a $d_Z\times 1$ vector of instruments $Z$.
The IVQR model is developed within the standard potential outcomes framework \citep[e.g.,][]{Rubin1974}. Let $\{Y_d\}$ denote the (latent) potential outcomes. The object of primary interest is the conditional quantile function of the potential outcomes, which we denote by $q(d,x,\tau)$. Having conditioned on covariates $X=x$, by the Skorokhod representation of random variables, potential outcomes can be represented as
\begin{align*}
Y_d=q(d,x,U_d)~~\text{with}~~U_d\sim U(0,1).
\end{align*}
This representation lies at the heart of the IVQR model.
With this notation at hand, we state the main conditions of the IVQR model \citep[][Assumptions A1-A5]{CH2005}.
\begin{assumption}
\label{ass:ivqr}
Given a common probability space $(\Omega,F,P)$, the following conditions hold jointly with probability one:
\begin{enumerate}
\item Potential outcomes: Conditional on $X=x$, for each $d$, $Y_d=q(d,x,U_d)$, where $q(d,x,\tau)$ is strictly increasing in $\tau$ and $U_d \sim U(0,1)$. \label{ass:ivqr1}
\item Independence: Conditional on $X=x$, $\{U_d\}$ are independent of $Z$.\label{ass:ivqr2}
\item Selection: $D:= \delta(Z,X,V)$ for some unknown function $\delta(\cdot)$ and random vector $V$.\label{ass:ivqr3}
\item Rank invariance or Rank similarity: Conditional on $X=x$, $Z=z$,
\begin{itemize}
\item[(a)]  $\{U_d\}$ are equal to each other; or, more generally,
\item[(b)]  $\{U_d\}$ are identically distributed, conditional on $V$.
\end{itemize}
\label{ass:ivqr4}
\item Observed variables: Observed variables consist of $Y:= q(D,X,U_D)$, $D$, $X$, and $Z$.\label{ass:ivqr5}
\end{enumerate}
\end{assumption}
We briefly discuss the most important aspects of Assumption \ref{ass:ivqr} and refer the interested reader to \citet{CH2005,CH2006,CH2013} for more comprehensive treatments. Assumption \ref{ass:ivqr}.\ref{ass:ivqr1} states the Skorohod representation of $Y_d$ and requires strict monotonicity of the potential outcome quantile function, which rules out discrete outcomes. Assumption \ref{ass:ivqr}.\ref{ass:ivqr2} imposes independence between the potential outcomes and the instrument. Assumption \ref{ass:ivqr}.\ref{ass:ivqr3} defines a general selection mechanism. The key restriction of the IVQR model is Assumption \ref{ass:ivqr}.\ref{ass:ivqr4}. Rank invariance (a) requires individual ranks $U_d$ to be the same across treatment states. Rank similarity (b) weakens this condition, allowing for random slippages of $U_d$ away from a common level $U$. Finally, Assumption \ref{ass:ivqr}.\ref{ass:ivqr5} summarizes the observables. 
\begin{remark} 
Assumption \ref{ass:ivqr} does not impose any restrictions on how the instrument $Z$ affects the endogenous variable $D$. As a consequence, Assumption \ref{ass:ivqr} alone does not guarantee point identification of the structural quantile function. In Sections \ref{sec:decentralization} and \ref{sec:population_algorithms} we discuss sufficient conditions for global (point) identification. Due to the nonlinearity of the IVQR problem, these conditions are stronger than the usual first stage assumptions in linear instrumental variables models and require the instrument to have a nontrivial impact on the joint distribution of $(Y,D)$. The identification conditions are particularly easy to interpret when $D$ and $Z$ are binary; see \citet[][Section 2.4]{CH2005} and Appendix \ref{app:local_contractions} for a further discussion. \qed
\end{remark}

The main implication of Assumption \ref{ass:ivqr} is the following conditional moment restriction \citep[][Theorem 1]{CH2005}:
\begin{equation}
P\left( Y\leq q(D,X,\tau) \mid X,Z \right)=\tau,  \quad \tau\in (0,1).\label{eq:cond_mr}
\end{equation}
In this paper, we focus on the commonly used linear-in-parameters model for $q(\cdot)$ \citep[e.g.,][]{CH2006}:
\begin{equation}
q(d,x,\tau)=x^{\prime }\param_X(\tau)+d'\param_D(\tau),
\end{equation}
where   $\param(\tau):=(\param_X(\tau)',\param_D(\tau)')'\in \mathbb{R}^{d_X+d_D}$ is the finite dimensional parameter vector of interest. The conditional moment restriction \eqref{eq:cond_mr} suggests GMM estimators based on the following unconditional population moment conditions:
\begin{equation}
\Z_{P}\left(\param(\tau) \right):=E_P\left[ \left( 1\left\{ Y\leq X^\prime \param_X(\tau)+D'\theta_D(\tau)
\right\} -\tau\right)\begin{pmatrix}X\\Z\end{pmatrix} \right].\label{eq:uncond_mr}
\end{equation}

Our primary goal here is to obtain estimators in a computationally efficient and reliable manner. We therefore focus on just-identified moment restrictions where $d_Z=d_D$, for which the construction of an estimator is straightforward. A potential caveat of this approach is that  estimators based on these restrictions do not achieve the pointwise (in $\tau$) semiparametric efficiency bound implied by the conditional moment restrictions \eqref{eq:cond_mr}. Appendix \ref{app:overid} provides a discussion of overidentified GMM problems and presents a two-step approach for constructing efficient estimators based on the proposed algorithms. 

In what follows, we will often suppress the dependence on $\tau$ to lighten-up the exposition.
We then define the true parameter value $\true$ as the solution to the moment conditions, i.e.,
\[
\Z_P\left(\true\right)=0.
\]
The resulting GMM objective function reads
\begin{eqnarray}
\mathcal{Q}^{GMM}_N\left(\param\right)=-\frac{1}{2}\left(\frac{1}{\sqrt{N}}\sum_{i=1}^Nm_i\left(\param\right) \right)'W_N\left(\param\right)\left(\frac{1}{\sqrt{N}}\sum_{i=1}^Nm_i\left(\param\right) \right),\label{eq:gmm}
\end{eqnarray}
where $m_i\left(\param\right):=\left( 1\left\{ Y_i\leq X_i^\prime \param_X + D_i'\theta_D
\right\} -\tau\right)(X_i',Z_i')'$ and $W_N\left(\param\right)$ is a positive definite weighting matrix. Estimation based on \eqref{eq:gmm} is complicated by the non-smoothness and, most importantly, the non-convexity of $\mathcal{Q}_N^{GMM}$. This paper proposes a new set of algorithms to address these challenges.

\section{Decentralization}
\label{sec:decentralization}

Here we describe the basic idea behind our decentralization estimators. To simplify the exposition, we first illustrate our approach with the  population problem of finding the true parameter value $\true$ in the IVQR model. Our estimator then adopts the analogy principle, which will be presented in Section \ref{sec:sample_algorithms}.
The key insight is that the complicated nonlinear IVQR estimation problem can be ``decentralized'', i.e., decomposed into a set of more tractable sub-problems, each of which is  solved by a ``player'' who best responds to  other players' actions.
Specifically, we first split the parameter vector $\theta$ into $J$ subvectors $\param_1,\dots,\param_{J}$, where $J=d_D+1$.
We then decompose the grand estimation problem into $J$ subproblems. Each of the subproblems is allocated to a distinct player.
For each $j$, player $j$'s choice variable is the $j$-th subvector $\theta_j$.
 Her problem is to find the value of $\theta_j$ such that a subset  of the moment restrictions is satisfied given the other players' actions $\theta_{-j}$, where $\theta_{-j}$  stacks the components of $\theta$ other than $\theta_j$. This reformulation allows us to view the estimation problem as a game of complete information and to characterize $\true$ as the game's pure strategy Nash equilibrium. 

We start by describing the parameter subvectors.
First, let $\theta_1=\theta_X\in\mathbb R^{d_X}$ denote the vector of coefficients on the exogenous variables. We allocate this subvector to player 1. Similarly, for each $j=2,\dots,J$,  let $\theta_j\in\mathbb R$ denote the coefficient on the $(j-1)$-th endogenous variable, which is allocated to player $j$. The coefficient vector for the endogenous variable can therefore be written as $\theta_D=(\theta_2,\dots,\theta_J)'$.
For each $\theta\in\mathbb{R}^{d_X+d_D}$, define the following (weighted) QR objective functions: 
\begin{eqnarray}
Q_{P,1}\left(\theta\right)&:= &E_P\left[\rho_\tau( Y-X'\theta_1-D_1\theta_2-\dots-D_{d_D}\theta_J)\right],\label{eq:Q1}\\
Q_{P,j}\left(\theta\right)&:=& E_P\left[\rho_\tau( Y-X'\theta_1-D_1\theta_2-\dots-D_{d_D}\theta_J)(Z_{j-1}/D_{j-1})\right],\label{eq:Qj} ~j=2,\dots,J,
\end{eqnarray}
where $\rho_\tau(u)=u(\tau-1\{u<0\}$) is the ``check-function''. We assume that the model is parametrized such that $Z_\ell/D_\ell$ is positive for all $\ell=1,\dots,d_D$.  Under our assumptions, we can always reparametrize the model such that this condition is met; see Appendix \ref{app_sec:reparametrization}  for more details.

The players then solve the following optimization problems:
\begin{align}
&\min_{\tilde\theta_1 \in \mathbb{R}^{d_X}}Q_{P,1}\left( \tilde\theta_1,\theta_{-1}\right)\label{eq:ivqr_br1},\\
&\min_{\tilde\theta_{j} \in \mathbb{R}}Q_{P,j}\left(\tilde\theta_{j} ,\theta_{-j}\right),\quad j=2,\dots,J\label{eq:ivqr_br2}.
\end{align}
Observe that each player's problem is a weighted QR problem, which is convex in its choice variable. For the sample analogues of these problems fast solution algorithms exist \citep[e.g.,][]{Koenker2017}.

For each $j$, let $\tilde L_j(\theta_{-j})$ denote the set of minimizers. Borrowing the terminology from game theory, we refer to these maps as \emph{best response (BR) maps}.
Under an assumption we specify below, the first-order optimality conditions imply that, for each $j$, any element $\tilde\theta^*_j\in \tilde L_j(\theta_{-j})$ of the BR map satisfies
\begin{align}
\Z_{P,1}(\tilde\theta_1^*,\theta_{-1})&\equiv E_P\left[ \left( 1\left\{ Y\leq X^\prime \tilde\theta_1^* 
+D'\param_{-1}\right\} -\tau\right)X \right]=0, \label{eq:ivqr_moment_br1}\\
\Z_{P,j}(\tilde\theta_j^*,\theta_{-j})&\equiv E_P\left[ \left( 1\left\{ Y\leq (X',D_{-(j-1)}')^\prime \param_{-j} + D_{j-1}\tilde\theta^*_j
\right\} -\tau\right)Z_{j-1} \right]=0, \quad j=2,\dots,J, \label{eq:ivqr_moment_br2}
\end{align}
where $D_{-(j-1)}$ stacks as a vector all endogenous variables except $D_{j-1}$.
Note that $\Z_{P}\left(\param\right)=\left(\Z_{P,1}(\param)',\dots, \Z_{P,J}(\param)\right)'$ is the set of unconditional IVQR moment conditions. Hence, $\true$ satisfies
\begin{align}
	\true_j\in \tilde\br_j(\true_{-j}),\quad j=1,\dots,J,
\end{align}
which implies that $\true$ is a fixed point  of the BR-maps (i.e.\ a Nash equilibrium of the game). 

In what follows, we work with conditions that ensure the existence of singleton-valued BR maps $L_j$, $j=1,\dots,J$, such that, for each $j$, $\Z_{P,j}\left(\br_j(\param_{-j}),\param_{-j}\right)=0$.\footnote{While it may be interesting to work with set-valued maps,  the existence of the BR functions greatly simplifies our analysis of identification and inference.} We say that the IVQR estimation problem admits \emph{decentralization} if there exist such BR functions defined over domains for which the moment conditions can be evaluated.\footnote{In Appendix \ref{app_sec:local_decentralization}, we also provide weaker conditions under which the decentralization holds on a local neighborhood of $\true$. We call such a result \emph{local decentralization}, which is sufficient for analyzing the (local) asymptotic behavior of the estimator.}
To ensure decentralization, we make the following assumption.

\begin{assumption}
\label{ass:global_decentralization}
The following conditions hold.
\begin{enumerate}
	\item $\Theta$ is a closed rectangle in $\mathbb{R}^{d_X+d_D}$. $\true$ is in the interior of $\Theta$. \label{ass:global_decentralization1}
	\item $E[|Z_\ell|^2]<\infty$ for $\ell=1,\dots,d_D$. $E[|X_k|^2]<\infty$ for all $k=1,\dots,d_X$. For each $\ell=1,\dots,d_D$, $D_\ell$ has a compact support;\label{ass:global_decentralization2}
	\item The conditional cdf $y\to F_{Y|D,X,Z}(y)$ is continuously differentiable for all $y\in\mathbb R$ $a.s.$ The conditional density $f_{Y|D,Z,X}$ is  uniformly bounded $a.s.$; \label{ass:global_decentralization3}
	\item For any $\theta\in\Theta$, the matrices
	\begin{eqnarray*}
	E_P[f_{Y|D,X,Z}\left(D'\param_{-1}+X^\prime \param_1\right) X X']
	\end{eqnarray*}
	and
	\begin{eqnarray*}
	E_P[f_{Y|D,X,Z}\left(D'\param_{-1} +X^\prime \param_1\right)D_{\ell}Z_{\ell} ],\quad \ell=1,\dots,d_D,
	\end{eqnarray*}
	are positive definite. \label{ass:global_decentralization4}
\end{enumerate}
\end{assumption}

For each $j$, let $\Theta_{-j}\subset \mathbb{R}^{d_{-j}}$ denote the parameter space for $\param_{-j}$.
Assumption \ref{ass:global_decentralization}.\ref{ass:global_decentralization1} ensures that $\Theta$ is compact. This assumption also ensures that each $\Theta_{-j}$ is a closed rectangle, which we use to show that $L_j$ is well-defined on a suitable domain.
Assumptions \ref{ass:global_decentralization}.\ref{ass:global_decentralization2} and  \ref{ass:global_decentralization}.\ref{ass:global_decentralization3} impose standard regularity conditions on the conditional density and the moments of the variables in the model. We assume $D_\ell$ has a compact support, which allows us to always reparameterize the model so that the objective function in \eqref{eq:Qj} is well-defined and convex (cf.\ Appendix \ref{app_sec:reparametrization}).
The first part of Assumption \ref{ass:global_decentralization}.\ref{ass:global_decentralization4} is a standard full rank condition which is a natural extension of the local full rank condition required for local identification and decentralization (cf.\ Assumption \ref{ass:local_decentralization} in the appendix). For the second part of Assumption \ref{ass:global_decentralization}.\ref{ass:global_decentralization4}, it suffices that the model is parametrized such that, for each $\ell\in\{1,\dots,d_D\}$, $D_\ell Z_\ell$ (and $Z_\ell/D_\ell$) is positive with probability one.

For each $j$, define
\begin{align}
R_{-j}\equiv \{\theta_{-j}\in \Theta_{-j}:\Z_{P,j}(\theta)=0,\text{ for some }\theta=(\theta_j,\theta_{-j})\in \Theta\}.\label{eq:defRj}
\end{align}
This is the set of  subvectors $\theta_{-j}$ for which one can find $\theta_j\in\Theta_j$ such that $\theta=(\theta_j,\theta_{-j})'$ solve the $j$-th moment restriction.
We take this set as the domain of player $j$'s best response function $L_j$.

The following lemma establishes that the IVQR model admits decentralization.
\begin{lemma}
\label{lem:global_decentralization}
\text{ } Suppose that Assumptions \ref{ass:ivqr} and \ref{ass:global_decentralization} hold. Then,
there exist functions $\br_j:R_{-j}\to \mathbb{R}^{d_j},j=1,\dots,\dpl$  such that, for $j=1,\dots,J$,
\begin{eqnarray}
\Z_{P,j}\left(\br_j(\param_{-j}),\param_{-j}\right)=0,\quad \text{for all}~\param_{-j}\in R_{-j}.
\end{eqnarray}
Further, $\br_j$ is continuously differentiable on the interior of $R_{-j}$ for all $j=1,\dots,J$.
\end{lemma}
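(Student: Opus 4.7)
The plan is to build each $L_j$ as the unique solution in $\theta_j$ of the $j$-th unconditional moment equation, and then obtain $C^1$ smoothness by the implicit function theorem. I would proceed in two main steps, plus a check of domain-level details.

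First, fix $\theta_{-j}\in R_{-j}$ and consider $\Z_{P,j}(\cdot,\theta_{-j})$. Existence of a zero in $\theta_j$ is immediate from the definition of $R_{-j}$, so the issue is uniqueness. Under Assumption \ref{ass:global_decentralization}.\ref{ass:global_decentralization3} the population objectives $Q_{P,j}$ in \eqref{eq:Q1}--\eqref{eq:Qj} are twice continuously differentiable in $\theta$: integrating the subgradient of $\rho_\tau$ against the bounded conditional density $f_{Y|D,X,Z}$ via dominated convergence yields gradient equal to $\Z_{P,j}(\theta)$ (with the positive weight $Z_{j-1}/D_{j-1}$ absorbed) and block-Hessian with respect to $\theta_j$ equal to $E_P[f_{Y|D,X,Z}(X'\theta_1+D'\theta_{-1})XX']$ for $j=1$ and $E_P[f_{Y|D,X,Z}(X'\theta_1+D'\theta_{-1})D_{j-1}Z_{j-1}]$ for $j\ge 2$. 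Both blocks are positive definite by Assumption \ref{ass:global_decentralization}.\ref{ass:global_decentralization4}, so $Q_{P,j}(\cdot,\theta_{-j})$ is strictly convex in $\theta_j$ and therefore has a unique critical point, which is the unique zero of $\Z_{P,j}(\cdot,\theta_{-j})$. Declaring $L_j(\theta_{-j})$ to be that zero produces a single-valued function on $R_{-j}$ satisfying $\Z_{P,j}(L_j(\theta_{-j}),\theta_{-j})=0$.

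Second, to obtain continuous differentiability on $\mathrm{int}(R_{-j})$, I would apply the implicit function theorem to the identity $\Z_{P,j}(L_j(\theta_{-j}),\theta_{-j})=0$. The map $\theta\mapsto\Z_{P,j}(\theta)$ is $C^1$ by the same dominated-convergence argument, and its Jacobian with respect to $\theta_j$ equals the block-Hessian above, which is nonsingular by Assumption \ref{ass:global_decentralization}.\ref{ass:global_decentralization4}. The IFT then yields a local $C^1$ solution near every interior point of $R_{-j}$, which by the uniqueness established in the first step must coincide with $L_j$. Domain-level requirements are handled by Assumption \ref{ass:global_decentralization}.\ref{ass:global_decentralization1}--\ref{ass:global_decentralization2} and by the reparametrization in Appendix \ref{app_sec:reparametrization} that ensures $Z_\ell/D_\ell>0$ (hence genuine convexity of the weighted QR objective) and finiteness of $Q_{P,j}$ on $\Theta$.

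I expect the main obstacle to be rigorously justifying the passage from the nondifferentiable check-function $\rho_\tau$ to smooth population derivatives: one needs to interchange two derivatives with an expectation even though $\rho_\tau$ is only subdifferentiable at the origin. The standard remedy, using the uniform boundedness of $f_{Y|D,X,Z}$ together with the finite second-moment conditions in Assumption \ref{ass:global_decentralization}.\ref{ass:global_decentralization2} to dominate the difference quotients, is routine but must be executed carefully because it is exactly this smoothing that simultaneously delivers strict convexity of the player subproblems and the nonsingular IFT Jacobian used in the second step.
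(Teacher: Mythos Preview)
Your proposal is correct and takes a more elementary route than the paper. The paper does not argue uniqueness via strict convexity; instead, for each $j$ it introduces the auxiliary map $\Xi(\theta)=(\Psi_{P,j}(\theta)',\theta_{-j}{}')'$, checks that every principal minor of $J_\Xi(\theta)$ has positive determinant on the closed rectangle $\Theta$ (so $J_\Xi$ is a $P$-matrix), and then invokes the Gale--Nikaido global univalence theorem \citep{gale1965jacobian} to conclude that $\Xi$ is injective, defining $L_j$ as the first block of $\Xi^{-1}(0,\cdot)$. Your argument bypasses this machinery by noting that Assumption~\ref{ass:global_decentralization}.\ref{ass:global_decentralization4} makes $\partial^2 Q_{P,j}/\partial\theta_j\partial\theta_j'$ positive definite on $\Theta$, so $Q_{P,j}(\cdot,\theta_{-j})$ is strictly convex in $\theta_j$ on the convex set $\Theta_j$ and any zero of its gradient $\Psi_{P,j}(\cdot,\theta_{-j})$ in $\Theta_j$ is automatically unique. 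This is cleaner and exploits the specific structure that $\Psi_{P,j}$ is the gradient of a convex objective; the paper's $P$-matrix route, by contrast, would transfer to moment restrictions that are not gradients of convex criteria. Both proofs then finish identically, obtaining $C^1$ smoothness on the interior from the local implicit function theorem and matching the local implicit function with the globally defined $L_j$ via uniqueness.
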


We now introduce maps that represent all players' (joint) best responses. We consider two basic choices of such maps; one represents simultaneous responses and the other represents sequential responses. In what follows, for any subset $a\subset\{1,\dots,J\}$,  let $\theta_{-a}$ denote the subvector of $\theta$ that stacks the components of $\theta_j$'s for all $j\notin a$. If $a$ is a singleton (i.e. $a=\{j\}$ for some $j$), we simply write $\theta_{-j}$. For each $j$ and $a\subseteq\{1,\dots,J\}\setminus\{j\}$, let  $\pi_{-a}:\Theta_{-j}\to \prod_{k\in \{1,\dots,J\}\setminus(\{j\}\cup a)}\Theta_k$ be the coordinate projection of $\theta_{-j}$ to a further subvector that stacks all components of $\theta_{-j}$ except for those of $\theta_k$ with $k\in a$.

Let $D_K\equiv\{\theta\in\Theta: \pi_{-j}\theta\in R_{-j},~ j=1,\dots,J\}$. Let $K:D_K\to\mathbb{R}^{d_X+d_D}$ be a map defined by
\begin{eqnarray}
K(\param)=\begin{pmatrix} K_1(\param) \\  \vdots \\ K_J(\param)\end{pmatrix}=\begin{pmatrix} \br_1(\param_{-1})\\ \vdots \\ \br_{J}(\param_{-J})\end{pmatrix}.
\end{eqnarray}
This can be interpreted as the players' simultaneous best responses to the initial strategy $(\theta_1,\dots,\param_{J})$.  With one endogenous variable, this map simplifies to
\begin{eqnarray}
	K(\theta)=\begin{pmatrix}\br_1(\theta_{2})\\ \br_2(\theta_{1})\end{pmatrix}.
\end{eqnarray}
Here, $K$ maps $\theta=(\theta_1,\theta_2)$ to a new parameter value through the simultaneous best responses of players 1 and 2.

Similarly, let $D_M\subset \mathbb R^{d_D}$ and let $M:D_M\to\mathbb R^{d_D}$ be a map such that
\begin{eqnarray}
M\left(\theta_{-1} \right)=\begin{pmatrix}M_1(\theta_{-1})\\M_2(\theta_{-1})\\ \vdots \\M_{d_D}(\theta_{-1}) \end{pmatrix}=
\begin{pmatrix}L_2\left( L_1(\theta_{-1}),\theta_{-\{1,2\}}\right)\\ L_3\left( L_1(\theta_{-1}),L_2( L_1(\theta_{-1}),\theta_{-\{1,2\}}),\theta_{-\{1,2,3\}}\right)\\\vdots \\ L_J\left(L_{1}( \theta_{-1}),L_2( L_1(\theta_{-1}),\theta_{-\{1,2\}}),\cdots\right)\end{pmatrix},\label{eq:def_M}
\end{eqnarray}
which can be interpreted as the players' sequential responses (first by player 1, then player 2, etc.) to an initial strategy $\theta_{-1}=(\theta_{2},\dots,\theta_{J})$.\footnote{One may define $M$ by changing the order of responses as well. For theoretical analysis, it suffices to consider only one of them. Once the fixed point $\true_{-1}$ of $M$ is found, one may also obtain $\true_1$ using $\true_1=L_1(\true_{-1})$.} Note that the argument of $M$ is not the entire parameter vector. Rather, it is a subvector of $\theta$ consisting of the coefficients on the endogenous variables. In order to find a fixed point, this feature is particularly attractive  when the number of endogenous variables is small.
 With one endogenous variable (i.e. $\theta_2\in \mathbb R$ is a scalar),
the map simplifies to
\begin{eqnarray*}
	M(\theta_2)=L_2\left(L_1\left(\theta_2\right)\right),
\end{eqnarray*}
which is a univariate function whose fixed point is often straightforward to compute.

Define
\begin{align}
	\tilde R_1\equiv \big\{\theta_{-1}\in\Theta_{-1}:~&\Z_{P,1}(\theta_1,\theta_{-1})=0,\notag\\
	&\Z_{P,2}(\theta_1,\theta_2,\pi_{-\{1,2\}}\theta_{-1})=0,~\exists (\theta_1,\theta_2)\in\Theta_1\times\Theta_2\big\}.\label{eq:tildeR1}
\end{align}
This is the set on which the map $\theta_{-1}\to L_2\left( L_1\left(\theta_{-1}\right),\pi_{-\{1,2\}}\theta_{-1}\right)$, the first component of $M$, is well-defined. We then recursively define $\tilde R_j$ for $j=2,\dots,d_D$ in a similar manner. A precise definition of these sets is  given in Appendix \ref{app_sec:decentralization}. Now define
\begin{align}
	D_M\equiv \bigcap_{j=1}^{d_D}\tilde R_j=\tilde R_{d_D},\label{eq:def_DM}
\end{align}
where the second equality follows because $\tilde R_{d_D}$ turns out to be a subset of $\tilde R_j$ for all $j\le d_D$. The following corollary ensures that $K$ and $M$ are well-defined on $D_K$ and $D_M$ respectively.

\begin{corollary}\label{cor:global_decentralization}
	The maps $K:D_K\to \mathbb{R}^{d_X+d_D}$ and $M: D_M\to\mathbb{R}^{d_D} $ exist and are continuously differentiable on the interior of their domains.
\end{corollary}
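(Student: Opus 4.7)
The plan is to bootstrap everything from Lemma \ref{lem:global_decentralization}, which already supplies the existence and continuous differentiability of each player's best-response map $L_j$ on the interior of $R_{-j}$. The corollary is then a statement that the two aggregation rules---simultaneous responses ($K$) and sequential responses ($M$)---preserve these properties on their respective domains. In both cases the argument reduces to (i) checking that the compositional/product structure is well-defined on the stated domain, and (ii) applying the chain rule together with smoothness of the coordinate projections $\pi_{-a}$ to conclude $C^1$ smoothness.

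For $K$ the argument is essentially definitional. Given $\theta\in D_K$, each component $K_j(\theta)=L_j(\pi_{-j}\theta)$ is well-defined because $\pi_{-j}\theta\in R_{-j}$ by the defining condition of $D_K$. Existence is therefore immediate. For smoothness, $\pi_{-j}$ is a linear (hence $C^\infty$) map, and under the (routine) observation that $\pi_{-j}$ sends $\interior D_K$ into $\interior R_{-j}$, Lemma \ref{lem:global_decentralization} combined with the chain rule yields $K\in C^1(\interior D_K)$.

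For $M$ I would work inductively through the nested structure displayed in \eqref{eq:def_M}. Define, for $k=1,\ldots,\dparam_D$, an intermediate map $\varphi_k:\tilde R_k\to \R^{\dparam_k}$ that returns the output of the $k$-th sequential response, built from $L_1,\ldots,L_k$ (so $\varphi_1=L_1$, $\varphi_2(\theta_{-1})=L_2(L_1(\theta_{-1}),\pi_{-\{1,2\}}\theta_{-1})$, and so on). The inductive hypothesis is that $\varphi_k$ is well-defined on $\tilde R_k$ and $C^1$ on $\interior \tilde R_k$. The recursive definition of $\tilde R_{k+1}$ in Appendix \ref{app_sec:decentralization} is precisely the set of $\theta_{-1}$ on which the tuple fed into $L_{k+1}$---which combines the previously computed outputs $\varphi_1(\theta_{-1}),\ldots,\varphi_k(\theta_{-1})$ with the fresh coordinates $\pi_{-\{1,\ldots,k+1\}}\theta_{-1}$---lies in $R_{-(k+1)}$. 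This yields well-definedness of $\varphi_{k+1}$ on $\tilde R_{k+1}$, and the chain rule, applied to the composition of $L_{k+1}$ with the $C^1$ maps $\varphi_1,\ldots,\varphi_k$ and $\pi_{-\{1,\ldots,k+1\}}$, delivers $\varphi_{k+1}\in C^1(\interior \tilde R_{k+1})$. Since $D_M=\tilde R_{\dparam_D}$ by \eqref{eq:def_DM} and $M=(\varphi_2,\ldots,\varphi_J)$ (up to indexing) on $D_M$, the conclusion for $M$ follows by taking $k=\dparam_D$.

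The main obstacle is the bookkeeping in the inductive step for $M$: one must verify that the recursively defined $\tilde R_k$ are in fact the correct domains, i.e., that when $\theta_{-1}\in\tilde R_{k+1}$ the argument passed to $L_{k+1}$ genuinely lies in $R_{-(k+1)}$ even after substituting the previously computed $\varphi_\ell$'s for the first $k$ coordinates. This is largely a definitional unwinding but requires care in distinguishing ``fresh'' input coordinates from coordinates that are outputs of earlier best responses. Once this is done, the $C^1$ conclusion reduces to an application of the chain rule, and the identity $D_M=\tilde R_{\dparam_D}$ together with the nesting $\tilde R_{\dparam_D}\subseteq \tilde R_j$ for $j\le \dparam_D$ (noted in the text preceding the corollary) shows that the common domain on which every $\varphi_k$ and hence $M$ is defined is exactly $D_M$.
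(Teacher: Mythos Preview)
Your proposal is correct and follows essentially the same approach as the paper: for $K$ the result is immediate from Lemma~\ref{lem:global_decentralization} and the definition of $D_K$, and for $M$ the paper likewise argues recursively through the intermediate maps $M_1,\ldots,M_{\dparam_D}$ on the domains $\tilde R_1,\ldots,\tilde R_{\dparam_D}$, invoking the chain rule for continuous differentiability. The only nuance is that the paper resolves the bookkeeping you flag---that the inputs fed to $L_{k+1}$ lie in $R_{-(k+1)}$---by using the uniqueness of each $L_j$ established in Lemma~\ref{lem:global_decentralization}: the $(\theta_1,\ldots,\theta_k)$ guaranteed to exist by the definition of $\tilde R_k$ must coincide with $L_1(\theta_{-1}),\varphi_2(\theta_{-1}),\ldots$, which is exactly the ``definitional unwinding'' you anticipate.
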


The key insight that we exploit is that, by construction of the BR maps, the problem of finding a solution to $\Z_P(\theta)=0$ is equivalent to the problem of finding a fixed point of $K$ (or $M$). The following proposition states the formal result.
\begin{proposition}
\label{prop:equivalence}
Suppose that Assumptions \ref{ass:ivqr} and \ref{ass:global_decentralization} hold. Then,
\begin{enumerate}[label=(\roman*)]
\item $\Z_{P}(\true)=0$ if and only if $K\left(\true \right)=\true$
\item $\Z_{P}(\true)=0$ if and only if $M\left(\true_{-1}\right)=\true_{-1}$ and $\true_1=L_1(\true_{-1})$.
\end{enumerate}
\end{proposition}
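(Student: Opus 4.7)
The plan is to deduce both equivalences directly from the defining property of the best response maps $L_j$, namely $\Z_{P,j}(L_j(\theta_{-j}),\theta_{-j})=0$, together with their single-valuedness established in Lemma \ref{lem:global_decentralization}. Part (i) is essentially a one-step argument; part (ii) reduces to part (i) via a finite induction along the coordinates of $\theta_{-1}$.

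For part (i), I would begin with the forward direction: if $\Z_P(\true)=0$, then each coordinate equation $\Z_{P,j}(\true_j,\true_{-j})=0$ shows $\true_{-j}\in R_{-j}$ by the definition in \eqref{eq:defRj}. Because $L_j$ is a single-valued function on $R_{-j}$ (a consequence of strict convexity of the weighted population QR objective under Assumption \ref{ass:global_decentralization}.\ref{ass:global_decentralization4}), the equation $\Z_{P,j}(\cdot,\true_{-j})=0$ has the unique root $L_j(\true_{-j})$, forcing $\true_j=L_j(\true_{-j})$ for every $j$ and hence $K(\true)=\true$. The converse is immediate: $K(\true)=\true$ means $\true_j=L_j(\true_{-j})$ for all $j$, and plugging into the defining identity of $L_j$ yields $\Z_{P,j}(\true)=0$ for each $j$.

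For part (ii), the forward direction first invokes (i) to obtain $\true_1=L_1(\true_{-1})$. I would then expand the definition \eqref{eq:def_M} of $M$ component by component. The first component equals $L_2(L_1(\true_{-1}),\pi_{-\{1,2\}}\true_{-1})=L_2(\true_1,\pi_{-\{1,2\}}\true_{-1})=L_2(\true_{-2})=\true_2$, where the last equality re-uses (i). Proceeding inductively, having shown that the first $j-1$ components of $M(\true_{-1})$ coincide with $\true_2,\dots,\true_j$, one substitutes these values into the nested expression for the $j$-th component and concludes via $L_{j+1}(\true_{-(j+1)})=\true_{j+1}$. This yields $M(\true_{-1})=\true_{-1}$. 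The converse runs the same induction in parallel: $\true_1=L_1(\true_{-1})$ gives $\Z_{P,1}(\true)=0$; matching the first component of $M(\true_{-1})=\true_{-1}$ delivers $L_2(\true_{-2})=\true_2$ and therefore $\Z_{P,2}(\true)=0$; and so on through all $J$ moment conditions.

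The main obstacle, insofar as there is one, is bookkeeping around domains: one has to verify that $\true_{-j}\in R_{-j}$ for every $j$ (so that $L_j(\true_{-j})$ is defined) and that $\true_{-1}\in D_M$ (so that $M(\true_{-1})$ is defined), and moreover that the intermediate arguments produced inside the nested composition in \eqref{eq:def_M} lie in the domains of the respective $L_k$. Each of these memberships follows from the assumption that $\true$ solves $\Z_P=0$ combined with the recursive construction of $\tilde R_j$ and the inclusion $\tilde R_{\dparam_D}\subset \tilde R_j$ for $j\le \dparam_D$ used in \eqref{eq:def_DM}, so the work is essentially checking containment at each step of the induction rather than any genuine analytic difficulty.
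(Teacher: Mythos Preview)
Your proposal is correct and follows essentially the same route as the paper's own proof: both directions of (i) come directly from the defining identity $\Z_{P,j}(L_j(\theta_{-j}),\theta_{-j})=0$ together with the single-valuedness of $L_j$ from Lemma \ref{lem:global_decentralization}, and (ii) is handled by unfolding the sequential composition defining $M$ coordinate by coordinate and matching each component to the corresponding moment equation. Your treatment is, if anything, slightly more explicit than the paper's about the inductive step and the domain-membership checks, which the paper leaves implicit.
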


In view of Proposition \ref{prop:equivalence},  the original IVQR estimation problem can be reformulated as the problem of finding the fixed point of $K$ (or $M$).
This reformulation naturally leads to discrete dynamical systems associated with these maps, which in turn provide straightforward iterative algorithms for computing $\true$.

\begin{enumerate}
\item \textsc{Simultaneous dynamical system:}\footnote{This algorithm is akin to the Jacobi computational procedure.}
\begin{align}
\theta^{(s+1)}=K\left(\theta^{(s)}\right),~~s=0,1,2,\dots,~~\param^{(0)}~\text{given}.\label{eq:simultaneous_ds}
\end{align}
\item \textsc{Sequential dynamical system:}\footnote{\citet{Smyth96} considers  this type of algorithm for $J=2$ and calls it ``zigzag'' algorithm. It is akin to a Gauss-Seidel procedure.}
\begin{align}
\param_{-1}^{(s+1)}&=M\left(\param_{-1}^{(s)} \right),~~s=0,1,2,\dots,~~\param_{-1}^{(0)}~\text{given},\label{eq:sequential_ds}
\end{align}
where $\param_1^{(s+1)}=L_1\left(\param_{-1}^{(s)}\right)$.
\end{enumerate}
These discrete dynamical systems constitute the basis for our estimation algorithms.\footnote{These discrete dynamical systems can also be viewed as learning dynamics in a game \citep{LiBasar87,FudenbergLevine07}.} 

\section{Population Algorithms}
\label{sec:population_algorithms}
In this section, we explore the implications of the fixed point reformulation for constructing population-level algorithms for computing fixed points.

\subsection{Contraction-based Algorithms} \label{ssec:contraction_algorithms}
We first consider conditions under which $K$ and $M$ are contraction mappings. They ensure that the discrete dynamical systems induced by $K$ and $M$ are convergent to unique fixed points. Moreover,
in view of Proposition \ref{prop:equivalence}, (point) identification is equivalent to the uniqueness of the fixed point of $K$ (or $M$). Therefore, the conditions we provide below are also sufficient for the point identification of $\true.$ We will discuss the relationship between our conditions and existing ones in the next section.

For any vector-valued map $E$, let $J_E(x)$ denote its Jacobian matrix evaluated at its argument $x$. For any matrix $A$, let $\|A\|$ denote its operator norm. We provide conditions in terms of the Jacobian matrices of $K$ and $M$, which are well-defined by Corollary \ref{cor:global_decentralization}.

\begin{assumption} \label{ass:contraction_global}
There exist open  strictly convex sets $\tilde D_K\subseteq D_K$ and $\tilde D_M\subseteq D_M$ such that
\begin{enumerate}
	\item	 \label{ass:contraction_global_K} $\|J_K\left(\param\right)\|\le \lambda$ for some $\lambda<1$ for all $\theta \in \tilde D_K$;
	\item \label{ass:contraction_global_M} $\|J_M\left(\param_{-1} \right) \|\le \lambda$ for some $\lambda<1$ for all $\theta_{-1} \in \tilde D_M$.
\end{enumerate}
\end{assumption}
Under this additional assumption, the iterative algorithms are guaranteed to converge to the fixed point. We summarize this result below.
\begin{proposition} \label{prop:contraction_global}Suppose that Assumptions \ref{ass:ivqr} and \ref{ass:global_decentralization} hold.
\begin{enumerate}[label=(\roman*)]
\item Suppose further that Assumption \ref{ass:contraction_global}.\ref{ass:contraction_global_K} holds. Then $K$ is a contraction on the closure of $\tilde{D}_K$. The fixed point $\true\in cl(\tilde{D}_K)$ of $K$ is unique. For any $\theta^{(0)}\in \tilde D_K,$ the sequence $\{\theta^{(s)}\}_{s=0}^\infty$ defined in \eqref{eq:simultaneous_ds} satisfies
$\theta^{(s)}\to \true$ as $s\to\infty$.
\item Suppose further that Assumption \ref{ass:contraction_global}.\ref{ass:contraction_global_M} holds. Then $M$ is a contraction on the closure of $ \tilde{D}_M$. The fixed point $\true_{-1} \in  cl(\tilde{D}_M)$ of $M$ is unique.  For any $\theta^{(0)}_{-1}\in \tilde D_M,$ the sequence $\{\theta^{(s)}_{-1}\}_{s=0}^\infty$ defined in \eqref{eq:sequential_ds} satisfies
$\theta^{(s)}_{-1}\to \true_{-1}$ as $s\to\infty$.
\end{enumerate}
\end{proposition}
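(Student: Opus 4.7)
My plan is to derive both parts from the Banach fixed point theorem. The key step is to convert the Jacobian bounds of Assumption \ref{ass:contraction_global} into Lipschitz bounds with constant $\lambda$; existence of the fixed point is then supplied by Proposition \ref{prop:equivalence}, and uniqueness plus iterative convergence follow from standard contraction arguments.

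For part (i), I would establish the Lipschitz estimate as follows. Since $\tilde D_K\subseteq D_K$ is open and convex and $K$ is continuously differentiable on $D_K$ by Corollary \ref{cor:global_decentralization}, for any $\theta,\theta'\in\tilde D_K$ the segment $\{(1-t)\theta+t\theta':t\in[0,1]\}$ lies in $\tilde D_K$, and the fundamental theorem of calculus gives
\begin{equation*}
K(\theta')-K(\theta)=\int_0^1 J_K\bigl((1-t)\theta+t\theta'\bigr)(\theta'-\theta)\,dt.
\end{equation*}
Taking norms and invoking Assumption \ref{ass:contraction_global}.\ref{ass:contraction_global_K} yields $\|K(\theta')-K(\theta)\|\le\lambda\|\theta'-\theta\|$, and continuity of $K$ extends this estimate to the complete metric space $cl(\tilde D_K)$. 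Uniqueness of any fixed point in $cl(\tilde D_K)$ is then immediate, since two fixed points $\theta,\theta'$ would satisfy $\|\theta-\theta'\|\le\lambda\|\theta-\theta'\|$ with $\lambda<1$. Existence of a fixed point $\true\in cl(\tilde D_K)$ comes from Proposition \ref{prop:equivalence}(i) applied to $\Z_P(\true)=0$.

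For the convergence of \eqref{eq:simultaneous_ds}, I would apply the Lipschitz bound along the segment from $\theta^{(s)}$ to $\true$ in $cl(\tilde D_K)$ to deduce $\|\theta^{(s+1)}-\true\|\le\lambda\|\theta^{(s)}-\true\|$, and iterate to obtain $\|\theta^{(s)}-\true\|\le\lambda^s\|\theta^{(0)}-\true\|\to 0$. Part (ii) is handled identically, with $K$ replaced by $M$, $\tilde D_K$ by $\tilde D_M$, $\true$ by $\true_{-1}$, and Proposition \ref{prop:equivalence}(ii) supplying the fixed point, while Assumption \ref{ass:contraction_global}.\ref{ass:contraction_global_M} provides the Jacobian bound.

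The main subtlety I anticipate is the invariance question underlying the convergence induction: Assumption \ref{ass:contraction_global} only bounds the Jacobian on $\tilde D_K$, so one must verify that successive iterates remain in a region where the bound is available. Because $K$ contracts distances to $\true$ by a factor strictly less than one, any closed ball around $\true$ contained in $cl(\tilde D_K)$ is forward invariant under $K$; strict convexity of $\tilde D_K$ together with $\true\in cl(\tilde D_K)$ supplies such a ball, and the induction can be run inside it to deliver the geometric decay at rate $\lambda^s$ uniformly in $s$. The same invariance argument carries over to $M$ in part (ii).
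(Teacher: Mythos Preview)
Your approach is essentially the same as the paper's: convert the uniform Jacobian bound into a Lipschitz constant $\lambda<1$ via the mean value inequality on the convex set $\tilde D_K$, then invoke the contraction mapping principle. The paper's own proof is a two-sentence affair that simply notes this and cites Theorem~2.2.16 of Hasselblatt--Katok for the conclusion; you spell out the mechanics explicitly and substitute Proposition~\ref{prop:equivalence} for the existence half of Banach.

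One point worth flagging: you correctly identify the invariance subtlety, but your proposed fix---working inside a closed ball around $\true$ contained in $cl(\tilde D_K)$---does not actually cover the full statement, since the proposition asserts convergence for \emph{every} $\theta^{(0)}\in\tilde D_K$, and such a starting point need not lie in any ball about $\true$ that fits inside $cl(\tilde D_K)$. What is really needed is that $K$ maps $cl(\tilde D_K)$ into itself, and neither Assumption~\ref{ass:contraction_global} nor your argument supplies this directly. The paper's proof glosses over the same point by outsourcing it to the cited theorem, so this is not a defect unique to your write-up; but if you want a self-contained argument you would either need to add the self-map hypothesis or restrict the basin of attraction accordingly.
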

In the case of a single endogenous variable,  the Jacobian matrices of $K$ and $M$   are given by
\begin{eqnarray}
J_K(\param)=\begin{pmatrix}0 & J_{L_1}(\param_2)\\J_{L_2}(\param_1)&0 \end{pmatrix}
\qquad\text{and}\qquad
J_{M}(\param_2)=J_{L_2}\left(L_1 (\param_2)\right)J_{L_1}\left(\param_2 \right),
\end{eqnarray}
where
\begin{eqnarray}
J_{L_{-j}}\left( \theta_j\right) =-\left(\frac{\partial\Z_{P,-j}(\theta_j,\theta_{-j})}{\partial\theta_{-j}'}\bigg|_{\theta=(\theta_j,L_{-j}( \theta_j))} \right)^{-1}\frac{\partial\Z_{P,-j}(\theta_j,\theta_{-j})}{\partial\theta_j'}\bigg|_{\theta=(\theta_j,L_{-j}(\theta_j))},~ j=1,2.
\label{eq:def_JL}
\end{eqnarray}
One may therefore check the high-level condition through the Jacobians of the original moment restrictions. In Appendix \ref{app:local_contractions}, we illustrate a simple primitive condition for a local version of Assumption \ref{ass:contraction_global}. In practice, we found that violations of Assumption \ref{ass:contraction_global} lead to explosive behavior of the estimation algorithms and, thus, are very easy to detect numerically.

\subsection{Connections to the Identification Conditions in the Literature} \label{ssec:identification_algorithms}

In view of Proposition \ref{prop:equivalence},
 identification of $\true$ is equivalent to  uniqueness of the fixed points of $K$ and $M$, which is ensured by Proposition \ref{prop:contraction_global}.
Here, we discuss how the conditions required by Proposition \ref{prop:contraction_global} relate to the ones in the literature.

We start with local identification.
The parameter vector $\true$ is said to be locally identified if there is a neighborhood $\mathcal{N}$ of $\true$ such that $\Z_P(\param)\ne 0$ for all $\param\ne \true$ in the neighborhood. Local identification in the IVQR model follows from standard results \citep[e.g.,][]{Rothenberg71,Chen+14}. For example, if $\Z_P(\param)$ is differentiable, \citet[][Section 2.1]{Chen+14} show that full rank of $J_{\Z_P}(\param)$ at $\true$ is sufficient for local identification.

It is interesting to compare this full rank condition to Assumption \ref{ass:contraction_local}.\ref{ass:contraction_local_K} in the appendix, which is a local version of Assumption \ref{ass:contraction_global}.\ref{ass:contraction_global_K}. Assumption \ref{ass:contraction_local}.\ref{ass:contraction_local_K}  requires that $\rho\left(J_K(\true) \right)<1$, where $\rho(A)$ denotes the spectral radius of a square matrix $A$. We highlight the connection in the case with a single endogenous variable. Full rank of $J_{\Z_P}(\true)$ is equivalent to $\det\left(J_{\Z_P}(\true)\right)\ne 0$.  Observe that, for any $\theta$,
\begin{eqnarray*}
\det\left(J_{\Z_P}(\theta)\right)&=&
\det\begin{pmatrix}
\partial\Z_{P,1}(\param_1,\param_2)/\partial\param_1' & \partial\Z_{P,1}(\param_1,\param_2)/\partial\param_2'\\ \partial\Z_{P,2}(\param_1,\param_2)/\partial\param_1' & \partial\Z_{P,2}(\param_1,\param_2)/\partial\param_2'\end{pmatrix}\\
&=&
\det\left(\begin{pmatrix}
 \partial\Z_{P,1}(\param_1,\param_2)/\partial\param_1'&0\\
0&\partial\Z_{P,2}(\param_1,\param_2)/\partial\param_2'
\end{pmatrix}
\begin{pmatrix}I_{d_1} & -J_{L_1}(\param_2)\\
    -J_{L_2}(\param_1)& I_{d_2}
\end{pmatrix}\right)\\
 &=&\det\begin{pmatrix}
 \partial\Z_{P,1}(\param_1,\param_2)/\partial\param_1'&0\\
 0&\partial\Z_{P,2}(\param_1,\param_2)/\partial\param_2'
  \end{pmatrix}
   \det\begin{pmatrix}I_{d_1} & -J_{L_1}(\param_2)\\
   -J_{L_2}(\param_1)&I_{d_2} \end{pmatrix}.
\end{eqnarray*}
If $\partial\Z_{P,j}(\theta)/\partial\theta_j'|_{\theta=\true} $ is invertible for $j=1,2$ (which is true under Assumption \ref{ass:global_decentralization}.\ref{ass:global_decentralization4}), $J_{\Z_P}(\true)$ is full rank if and only if
\begin{eqnarray}
0\ne \det
 \begin{pmatrix}I_{d_1} & -J_{L_1}(\true_2)\\ -J_{L_2}(\true_1)&I_{d_2} \end{pmatrix}=\det(I_{\dparam}-J_{K}(\true)). \label{eq:full_rank_eig_K}
\end{eqnarray}
That is, it requires that none of the eigenvalues of $J_K$ has modulus one.
Therefore, Assumption \ref{ass:contraction_local}.\ref{ass:contraction_local_K} is sufficient but not necessary for condition \eqref{eq:full_rank_eig_K} to hold. Specifically,
Assumption \ref{ass:contraction_local}.\ref{ass:contraction_local_K} requires all eigenvalues of $J_{K}(\true)$ to lie strictly within the unit circle, while  local
identification only requires all eigenvalues not to be on the unit circle. In terms of the dynamical system induced by $K$, the former ensures that the dynamical system has a unique \emph{asymptotically stable fixed point}, while the latter  ensures that the system has a unique \emph{hyperbolic fixed point}, which is a more general class of fixed points \citep[e.g.][]{galor2007discrete}.\footnote{The argument above also applies to settings with multiple endogenous variables. A similar result can also be shown for $M$.}
Under the former condition, iteratively applying the contraction map induces convergence, while the latter generally requires a root finding method to obtain the fixed point.

Now we turn to global identification and compare Proposition \ref{prop:contraction_global} to the global identification result in \citet{CH2006}.

\begin{lemma}[Theorem 2 in \citet{CH2006}] \label{lem:CH}Suppose that Assumption \ref{ass:ivqr} holds. Moreover, suppose that (i) $\Theta$ is compact and convex and $\true$ is in the interior of $\Theta$; (ii) $f_{Y|D,Z,X}$ is uniformly bounded a.s.; (iii) $J_{\Z_P}(\theta)$ is continuous and has full rank uniformly over $\Theta$; and (iv) the image of $\Theta$ under the mapping $\theta \to \Z_P(\theta)$ is simply connected. Then, $\true$ uniquely solves $\Z_P(\param)=0$ over $\Theta$.
\end{lemma}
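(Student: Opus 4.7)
The plan is to invoke a Hadamard--Cacciopoli-type global univalence theorem: a continuously differentiable map between Euclidean domains whose Jacobian is nowhere singular, which is proper onto its image, and whose image is simply connected must be injective. Conditions (i)--(iv) of the lemma are tailor-made for this approach, since the target has the same dimension as the parameter space and the hypotheses directly furnish local invertibility, compactness, and simple connectedness of the image.

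First I would use the full-rank condition in (iii) together with the inverse function theorem to conclude that $\Z$ is a local $C^1$ diffeomorphism at every $\theta \in \interior(\Theta)$; in particular $\Z$ is an open map and $\Z(\interior(\Theta))$ is open in $\mathbb{R}^d$. Next, compactness of $\Theta$ from (i) together with continuity of $\Z$ (a consequence of (iii)) implies that $\Z$ restricted to $\Theta$ is proper: preimages under $\Z$ of compact subsets of $\Z(\Theta)$ are closed subsets of the compact set $\Theta$, hence compact.

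Third, I would invoke the standard differential-topology fact that a proper local diffeomorphism between connected Hausdorff spaces is a covering map. Since $\Z(\Theta)$ is simply connected by (iv) and $\interior(\Theta)$ is connected (being an open rectangle by (i)), any such covering must be one-sheeted, so $\Z$ is injective on $\interior(\Theta)$. Because $\Z(\true)=0$ with $\true \in \interior(\Theta)$ by (i), no other interior point can be a zero. To rule out zeros on $\partial \Theta$, I would use a continuity-plus-local-diffeomorphism argument: any such boundary zero would be approachable by interior points whose $\Z$-images tend to $0$, forcing them to collide with $\true$ under local injectivity, a contradiction.

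The main obstacle is the topological bookkeeping at the boundary and justifying a Hadamard-type theorem for a set-with-boundary rather than for an open domain. A clean workaround is to slightly thicken $\Theta$: by continuity of $J_\Z$ in (iii), the full-rank property persists on a small open neighborhood of $\Theta$, on which $\Z$ remains a local diffeomorphism and to which the classical Hadamard global inverse theorem applies directly. Uniqueness on the original $\Theta$ then follows by restriction, and one could alternatively bypass covering-space language entirely by a monodromy/path-lifting argument in $\Z(\Theta)$: the local inverse of $\Z$ defined near $0$ extends analytically along every path in the simply connected image to a single-valued map, which composes with $\Z$ to give the identity and thereby establishes injectivity.
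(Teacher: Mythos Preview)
Your proposal takes the same approach the paper invokes: the paper does not supply its own proof of this lemma (it is quoted as Theorem~2 in \citet{CH2006}) and simply remarks that the result follows from Hadamard's global univalence theorem, citing Theorem~1.8 in \citet{ambrosetti1995primer}. Your covering-space/monodromy sketch is exactly the standard route to that theorem; the boundary and image-versus-interior subtleties you correctly flag are real and are precisely why the paper treats this as a cited black-box result rather than reproving it in place.
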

Under Conditions (i)--(iv), which are substantially stronger than the local identification conditions discussed above, the result in Lemma \ref{lem:CH} follows from an application of Hadamard's global univalence theorem (e.g. Theorem 1.8 in \citet{ambrosetti1995primer}).

Comparing Lemma \ref{lem:CH} to Proposition \ref{prop:contraction_global}, we can see that the result in Lemma \ref{lem:CH} establishes identification over the whole parameter space $\Theta$, while Proposition \ref{prop:contraction_global} establishes identification over the sets $\tilde{D}_K$ and $\tilde{D}_M$, which will generally be subsets of $\Theta$. Regarding the underlying assumptions, Conditions (i) and (ii) in Lemma \ref{lem:CH} correspond to our Assumptions \ref{ass:global_decentralization}.\ref{ass:global_decentralization1} and \ref{ass:global_decentralization}.\ref{ass:global_decentralization3}. Moreover, our Assumption \ref{ass:global_decentralization}.\ref{ass:global_decentralization3} constitutes an easy-to-interpret sufficient condition for continuity of $J_{\Z_P}$ as required in Condition (iii). To apply Hadamard's global univalence theorem, \citet{CH2006} assume the simple connectedness of the image of $\Z_P$ (Condition (iv)). By contrast, we use a different univalence theorem by \citet{gale1965jacobian} (applied to the map $\Xi$ defined in \eqref{eq:xi} that arises from each subsystem), which does not require further conditions. However, when establishing global identification based on the contraction mapping theorem, we need to impose an additional condition on the Jacobian (Assumption \ref{ass:contraction_global}). In sum, our conditions are somewhat stronger in terms of  restrictions on the Jacobian, but they are relatively easy to check and allow us to dispense with an abstract condition (simple connectedness of the image of a certain map) to apply a global univalence theorem.

\subsection{Root-Finding Algorithms and Nesting} \label{sec:root_finder} Assumption \ref{ass:contraction_global} is a sufficient condition for the uniqueness of the fixed point and the convergence of the contraction-based algorithms. Even in settings where this assumption fails to hold, one may still identify $\true$ and design an algorithm that is able to find it under weaker conditions on the Jacobian. This is the case under the assumptions in the general (global) identification result of \citet[][]{CH2006}; see Lemma \ref{lem:CH}.

Note that, for the simultaneous dynamical system, $\true$ solves
\begin{align}
  (I_{\dparam}-K)(\true)=0,
\end{align}
where $I_{\dparam}$ is the identity map. Similarly, in the sequential dynamical system, $\true_{-1}$ solves
\begin{align}
  (I_{d_D}-M)(\true_{-1})=0.
\end{align}
 Therefore, standard root-finding algorithms can be used to compute the fixed point. 

For implementing root-finding algorithms, we find that reducing the dimension of the fixed point problem is often helpful.
Toward this end, we briefly discuss another class of dynamical systems and associated population algorithms which can be used for the purpose of dimension reduction.
Namely, with more than two players, one can construct \emph{nested} dynamical systems, which induce nested fixed point algorithms. Nesting is useful as it allows for transforming any setup with more than two players into a two-player system.

To fix ideas, consider the case of three players ($J=3$). Fix player 3's action $\theta_3\in\Theta_3\subset \mathbb R$ and consider the associated ``sub-game'' between players 1 and 2. To describe the sub-game, define $M_{1,2|3}(\cdot\mid \param_3):\Theta_2\to\Theta_2$ pointwise by
\begin{align}
M_{1,2|3}(\theta_2\mid \param_3)\equiv L_2\left(L_1\left(\theta_2,\param_3\right), \param_3\right).
\end{align}
This map gives the sequential best responses of players 1 and 2, while taking player 3's strategy given.
Define the fixed point $L_{12}:\Theta_3\to \Theta_1\times\Theta_2$ of the sub-game by
\begin{align}
L_{12}(\param_3)\equiv \begin{pmatrix}
\fp_1(\param_3)\\
\fp_2(\param_3)
\end{pmatrix}=\begin{pmatrix}\br_1(\fp_2(\theta_3),\theta_3)\\
M_{1,2|3}(\fp_2(\param_3)\mid \param_3)\\
\end{pmatrix}.
\end{align}
This map then defines a new ``best response'' map. Here, given $\param_3$, the players in the sub-game (i.e., players 1 and 2) collectively respond  by choosing the Nash equilibrium of the sub-game.  The overall dynamical system induced by the nested decentralization is then given by
\begin{align}
M_3(\theta_3)=L_3\left(\br_{12}(\theta_3)\right).\label{eq:nestedbr}
\end{align}
Hence, we can interpret the nested algorithm as a two-player dynamical system where one player solves an internal fixed point problem. The nesting algorithms require existence and uniqueness of the fixed points in the sub-game between players 1 and 2. In Appendix \ref{app:nesting}, we discuss the formal conditions required for the existence and uniqueness of such fixed points.\footnote{For an equilibrium of the sub-game to be well-defined, one may directly assume that \cite{CH2006}'s global identification condition holds within the sub-game, given player 3's action $\theta_3$.   Alternatively, if Assumption \ref{ass:contraction_global} holds, we fix  $\theta_3$ to a value such that $(\theta_2,\theta_3)\in \tilde D_M$ for some $\theta_2\in\Theta_2$; see Appendix \ref{app:nesting}.}

This nesting procedure is generic and can be extended to more than three players by sequentially adding additional layers of nesting.\footnote{In the current example, consider adding player 4 and letting players 1-3 best respond by  returning the fixed point of the sub-game through $M_3$  given $\theta_4$. One can repeat this for additional players.
This procedure can also be applied to the simultaneous dynamical system induced by $K$.} It follows that  any decentralized estimation problem with more than two players can be reformulated as a nested dynamical system with two players: player $J$ and all others $-J$.
The resulting dynamical system $M_J(\theta_J)=L_{J}\left(\br_{-J}(\theta_J)\right)$ is particularly useful when $M_J$ is not necessarily a contraction map since $\theta_J$ is a scalar such that, as we see below,  its fixed point can be computed using univariate root-finding algorithms.

\section{Sample Estimation Algorithms}
\label{sec:sample_algorithms}

\subsection{Sample Estimation Problem}
Let $\left\{(Y_i,D_i',X_i',Z_i')\right\}_{i=1}^N$ be a sample generated from the IVQR model. Our estimators are constructed using the analogy principle.
For this, define the sample payoff functions for the players as
\begin{align}
  Q_{N,1}\left(\theta\right)&\equiv \frac{1}{N}\sum_{i=1}^N\rho_\tau( Y_i-X_i'\theta_1-D_{1,i}\theta_2-\dots-D_{d_D,i}\theta_J),\label{eq:QN1}\\
  Q_{N,j}\left(\theta\right)&\equiv \frac{1}{N}\sum_{i=1}^N\rho_\tau( Y_i-X_i'\theta_1-D_{1,i}\theta_2-\dots-D_{d_D,i}\theta_J)\frac{Z_{j-1,i}}{D_{j-1,i}},\quad j=2,\dots,J.\label{eq:QNj}
\end{align}
For each $j=1,\dots, J$, let the sample BR function $\hat\br_j(\param_{-j})$ be a function such that
\begin{eqnarray}
\hat \br_1\left(\param_{-1} \right)&\in&\arg\min_{\tilde\theta_1 \in \mathbb{R}^{d_X}}Q_{N,1}(\tilde\theta_1,\theta_{-1}),\label{eq:brN1}\\
 \hat \br_j\left(\theta_{-j} \right)&\in&\arg\min_{\tilde\theta_j \in \mathbb{R}} Q_{N,j}(\tilde{\theta_j},\theta_{-j}),\quad j=2,\dots,J.\label{eq:brNj}
\end{eqnarray}
Assuming that the model is parametrized in such a way that $Z_{\ell,i}/D_{\ell,i}$, $\ell=1,\dots,d_D$, is positive, these are convex (weighted) QR problems for which fast solution algorithms exist. In our empirical applications and simulations, we use the \texttt{R}-package \texttt{quantreg} to estimate the QRs \citep{quantreg2018}.
For example, $\hat \br_2(\theta_{-2})$ can be computed by running a QR with weights $Z_{1,i}/D_{1,i}$ in which one regresses $Y_i-X_i'\theta_1-D_{2,i}\theta_3-\dots-D_{d_D,i}\theta_J$ on $D_{1,i}$ without a constant. These sample BR functions also approximately solve the sample analog of the moment restrictions in \eqref{eq:ivqr_moment_br1}--\eqref{eq:ivqr_moment_br2}; see Lemma \ref{lem:sample_BR} in the appendix.

\begin{remark}
The proposed estimators rely on decentralizing the original non-smooth and non-convex IVQR GMM problem into a series of convex QR problems. The quality and the computational performance of our procedures therefore crucially depends on the choice of the underlying QR estimation approach, which deserves some further discussion. The interested reader is referred to \citet{Koenker2017} for an excellent overview on the computational aspects of QR. In this paper, we use the Barrodale and Roberts algorithm which is implemented as the default in the \texttt{quantreg} package and described in detail in \citet{KoenkerdOrey1987,KoenkerdOrey1994}. This algorithm is computationally tractable for problems up several thousand observations. For larger problems, we recommend using interior point methods, potentially after preprocessing; see \citet{Portnoy:1997aa} for a detailed description. These methods are conveniently implemented in the \texttt{quantreg} package. For very large problems, one can resort to first-order gradient descent methods, which are amenable to modern parallelized computation; see Section 5.5 in \citet{Koenker2017} for an introduction and simulation evidence on the performance of such methods. \qed
\end{remark}


We focus on estimators constructed based on the dynamical system $M$. Lemma \ref{lem:asy_equiv} in the appendix shows that the estimators based on $M$ and $K$ are asymptotically equivalent. However, in our simulations, we found that, while the convergence properties of contraction algorithms based on $M$ are excellent, the convergence properties of contraction algorithms based on $K$ can be  sensitive to the choice of starting value. This may be attributed to the fact that the algorithm based on $K$ requires all $d_X+d_D$ components of the starting value  to be in the domain of the contraction map, while the algorithm  based on $M$ only requires that the same condition is satisfied by the starting value for $\theta_{-1}$. Our simulations suggest that it is often easier to satisfy this requirement with $M$ since the number of components in $\theta_{-1}$, $d_D$, is  typically much smaller than $d_X+d_D$.
Also, for root-finding algorithms, we prefer the sequential dynamical system (induced by $M$) because it again leads to a substantial dimension reduction: it reduces the original $(d_X+d_D)$-dimensional GMM estimation problem to a $d_D$-dimensional root-finding problem. 

We construct estimation algorithms by mimicking the population algorithms. Let  $\hat M$ denote a sample analog of $M$:
\begin{align}
\hat M\left(\param_{-1}\right):=
\begin{pmatrix}\hat M_1(\theta_{-1})\\\hat M_2(\theta_{-1})\\ \vdots \\\hat M_{d_D}(\theta_{-1}) \end{pmatrix}=
\begin{pmatrix}\hat L_2\left(\hat L_1(\theta_{-1}),\theta_{-\{1,2\}}\right)\\\hat L_3\left(\hat L_1(\theta_{-1}),\hat L_2( \hat L_1(\theta_{-1}),\theta_{-\{1,2\}}),\theta_{-\{1,2,3\}}\right)\\\vdots \\ \hat L_J\left(\hat L_{1}( \theta_{-1}),\hat L_2(\hat L_1(\theta_{-1}),\theta_{-\{1,2\}}),\cdots\right)\end{pmatrix},
\end{align}
where $\param_1=\hat\br_1\left(\param_{-1}\right)$.
This map induces a sample analog of the sequential dynamical system in Section \ref{sec:decentralization}.
\begin{enumerate}
\item[] \textsc{Sample sequential dynamical system:}
\begin{align}
\param_{-1}^{(s+1)}&=\hat M\left(\param_{-1}^{(s)} \right),~~s=0,1,2,\dots,~~\param_{-1}^{(0)}~\text{given},\label{eq:sample_sequential_ds}
\end{align}
where $\param_1^{(s+1)}=\hat{L}_1\left(\param_{-1}^{(s)}\right)$.
\end{enumerate}

\subsection{Contraction-based Algorithms}\label{ssec:contraction_alg}

The first set of algorithms exploits that, under Assumption \ref{ass:contraction_global},  $\hat M$ is a contraction mapping with probability approaching one. In this case, we iterate the dynamical system or \eqref{eq:sample_sequential_ds} until $\|\param_{-1}^{(s)}-\hat M(\param_{-1}^{(s)})\|$
is within a numerical tolerance $e_N$.\footnote{In the next section, we require $e_N=o(N^{-1/2})$, which ensures that the numerical error does not affect the asymptotic distribution.}  This iterative algorithm is known to converge at least linearly. The approximate sample fixed point $\est=(\hat\theta_{N,1},\hat\theta_{N,-1})$ that meets the convergence criterion  then serves as an estimator for $\param$.

\subsection{Algorithms based on Root-Finders and Optimizers}
\label{sec:algo_root_finder}
We construct an estimator $\est$ of $\true$ as an approximate fixed point to the sample problem:
\begin{align}\big\|\hat \param_{N,-1} - \hat M(\hat \param_{N,-1})\big\|\le e_N,\end{align}
where $\hat\param_{N,1}=\hat\br_1(\hat \param_{N,-1} )$ and $e_N$ is a numerical tolerance. This problem can be solved efficiently using well-established root-finding algorithms since $\hat M$ is easy to evaluate as the composition of standard QRs. When $d_D=1$, one may use Brent's method \citep{Brent1971} whose convergence is superlinear. When $d_D>1$, one could apply the Newton-Raphson method, which achieves quadratic convergence but requires an estimate or a finite difference approximation of the derivative. The corresponding approximation error may affect the performance. Alternatively, one can compute the fixed point by minimizing $\| \hat M(\param) - \param \|^2$. The potential issue with this approach is that translating the root-finding problem into a minimization problem can lead to local minima in the objective function. Therefore, it is important to use global optimization strategies. 

As described in Section \ref{sec:root_finder}, nesting can be used to reduce the dimensionality even further. In particular, the  problem can be reformulated as a one-dimensional fixed point problem, which can be solved using existing methods. We found that Brent's method performs very well in our context. Nesting is suitable when the number of endogenous variables is moderate. While adding nests incurs  additional computational costs, our Monte Carlo experiments suggest that they are not excessive when the number of endogenous variables is moderate.\footnote{See Section \ref{sec:simulation_study} and Appendix \ref{app:additional_simulation}.}

\subsection{Profiling Algorithms} \label{sec:profiling} A key insight underlying our estimation algorithms is that, given $\theta_{-1}$, the estimation problem becomes a standard convex QR problem:
\[
\hat\br_1\left(\param_{-1} \right) \in \arg\min_{\tilde\param_1 \in \mathbb{R}^{d_X}}Q_{N,1}(\tilde\param_1,\param_{-1})
\]
This insight suggests a profiling estimator.\footnote{We thank an anonymous referee for suggesting this alternative estimator.} In particular, $\hat\param_{N,-1}$ can be obtained as the approximate root of the function
\begin{equation}
f(\theta_{-1})=\frac{1}{N}\sum_{i=1}^N\left(1\left\{ Y_i\le X_i'\hat\br_1(\param_{-1})+D_{i}'\param_{-1}\right\}-\tau \right)Z_i\label{eq:profiling}
\end{equation}
and $\hat\param_{N,1}$ can be estimated as $\hat\param_{N,1}\in \arg\min_{\tilde\param_1 \in \mathbb{R}^{d_X}}Q_{N,1}(\tilde\param_1,\hat\param_{N,-1})$. 
When there is only one endogenous variable, $f$ is scalar-valued and univariate root-finders can be used.  With multivariate endogenous variables, one can either directly apply multivariate root-finders or construct nested algorithms as described in Sections \ref{sec:root_finder} and \ref{sec:algo_root_finder}.

Relative to the root-finding methods described in Section \ref{sec:algo_root_finder}, the profiling estimator has the advantage that evaluating $f$ only requires estimating one QR. On the other hand, the root-finding algorithms in Section \ref{sec:algo_root_finder} efficiently exploit the convexity of the subproblems for players $j=2,\dots,J$, and demonstrate a better computational performance than the profiling estimators with multiple endogenous variables (cf.\ Table \ref{tab:cpu2}). 

\section{Asymptotic Theory}\label{sec:theory}

\subsection{Estimators}
We  define an estimator $\est$ of $\true$  as an approximate fixed point of $\hat M$ in the following sense:%
\begin{align}
\|\hat\theta_{N,-1}-\hat M(\hat\theta_{N,-1})\|\le \inf_{\theta'_{-1}\in\Theta_{-1}}\|\theta'_{-1}-\hat M(\theta'_{-1})\|+o_p(N^{-1/2}).\label{eq:hattheta1}
\end{align}
An estimator of $\true_1$ can be constructed by setting
\begin{align}
\hat\theta_{N,1}\equiv \hat L_1(\hat\theta_{N,-1})\label{eq:hattheta2}.
\end{align}
 In what follows, we call $\est=(\hat\theta_{N,1},\hat\theta_{N,-1})$ the \emph{fixed point estimator} of $\true.$
 Under the conditions we introduce below, $\est$ is also an approximate fixed point of $\hat K$; see Lemma \ref{lem:asy_equiv} in the appendix. This turns out to be useful for stating theoretical results in a concise manner. While we mostly focus on algorithms based on $\hat M$ below, some of our theoretical results will be stated using $K$.
$\hat M$ (or $\hat K$) is defined similarly for the nested dynamical system in which one player solves a fixed point problem in a sub-game.

Consistency and parametric convergence rates of $\est$ can be established using existing results.
When  $\hat M$ is asymptotically  a contraction map, one may construct an estimator $\est$ satisfying \eqref{eq:hattheta1}--\eqref{eq:hattheta2} using the contraction algorithm in Section \ref{ssec:contraction_alg} with tolerance $e_N=o(N^{-1/2}).$ One may then apply the result of \cite{Dominitz:2005aa} to obtain the root-$N$ consistency of the estimator.\footnote{Satisfying $e_N=o(N^{-1/2})$ requires the number of iterations to increase as the sample size tends to infinity, which in turn satisfies requirement (ii) in Theorem 2  of \citet{Dominitz:2005aa}.}
For completeness, this result is summarized in Appendix \ref{sec:consistency_contraction}.

More generally, if $\hat M$ is not guaranteed to be a contraction, one may use root-finding algorithms  that solve $\theta_{-1}-\hat M(\theta_{-1})=0$ up to approximation errors of $o(N^{-1/2})$.
The root-$N$ consistency of $\theta_{N,-1}$ then follows from the standard argument for extremum estimators, in which we take $\mathcal Q_N(\theta_{-1})=\|\theta_{-1}-\hat M(\theta_{-1})\|$ as a criterion function.\footnote{The key conditions for these results, uniform convergence (in probability) of $\hat K$ and its stochastic equicontinuity, are established in Lemma \ref{lem:sequi}.}
Since these results are standard, we omit details and focus below on the asymptotic distribution and bootstrap validity of the fixed point estimators. Our contributions are two-fold. First, we establish the asymptotic distribution of the fixed point estimator without assuming that $\hat M$ is an asymptotic contraction map, which therefore allows the practitioner to conduct inference using the estimator based on the general root-finding algorithm and complements the result of   \cite{Dominitz:2005aa}. Second, to our knowledge, the bootstrap validity of the fixed point estimators is new. These results are established by showing that, under regularity conditions, the population fixed point is  Hadamard-differentiable and hence admits the use of the functional $\delta$-method, which may be of independent theoretical interest.

\begin{remark}
To establish the asymptotic properties, one could try to reformulate our estimator as an estimator that approximately solves a GMM problem. Here, instead of relying on another reformulation, which would require establishing a sample analog version of Proposition \ref{prop:equivalence}, we develop and directly apply an asymptotic theory for fixed point estimators. We take this approach as the theory itself contains generic results (Theorem \ref{thm:asynorm} and Lemmas \ref{lem:H_hausdorff}--\ref{lem:H_dot}) surrounding the Hadamard-differentiability of fixed points, which allow for applying the functional $\delta$-method to obtain the asymptotic distribution of $\est$ and bootstrap validity.
These results can potentially be used to analyze decentralized estimators outside the IVQR class. \qed
\end{remark}

\subsection{Asymptotic Theory and Bootstrap Validity}

\label{ssec:asymptotics_bootstrap}
The following theorem gives the limiting distribution of our estimator. For each $w=(y,d',x',z')'$ and $\theta\in\Theta,$ let $f(w;\theta)\in \mathbb R^{d_X+d_D}$ be a vector whose sub-vectors are given by
\begin{align*}
f_1(w;\theta)&=(1\{ y\leq d'\theta_{-1}+x'\theta_1 \} -\tau)x,\\
f_j(w;\theta)&=(1\{ y\leq d'\theta_{-1}+x'\theta_1 \} -\tau)z_{j-1},\quad j=2,\dots,J,
\end{align*}
and let $g(w;\theta)=(g_1(w;\theta)',\dots,g_J(w;\theta))'$ be a vector such that
\begin{align}
	g_j(w;\theta)=\left(\frac{\partial^2}{\partial\theta_j\partial\theta_j'}Q_{P,j}(L_j(\theta_{-j}),\theta_{-j})\right)^{-1}f_j(w;L_j(\theta_{-j}),\theta_{-j}),~j=1,\dots,J.
\end{align}

\begin{theorem}\label{thm:asynorm}
Suppose that Assumptions \ref{ass:ivqr} and \ref{ass:global_decentralization} hold. Let $\left\{W_i\right\}_{i=1}^N$ be an i.i.d.\ sample generated from the IVQR model, where $W_i=(Y_i,D_i',X_i',Z_i')$.   Then,
	\begin{align}
	\sqrt N(\est-\true)	\stackrel{L}{\to}N(0,V)~,\label{eq:asynorm0a}
	\end{align}
	with
	\begin{align}
	V=(I_{d_X+d_D}-J_K(\true))^{-1}E[\mathbb W(\true)\mathbb W(\true)'][(I_{d_X+d_D}-J_K({\true}))^{-1}]',\label{eq:asyvar}
	\end{align}
	where $\mathbb W$ is a tight Gaussian process in $\ell^\infty(\Theta)^\dparam$ with mean zero and the covariance kernel
\begin{align}
	\emph{Cov}(\mathbb W(\theta),\mathbb W(\tilde\theta))=E_P\big[(g(W;\theta)- E_P[g(W;\theta)])(g(w;\tilde\theta)-E_P[g(w;\tilde\theta)])'\big].\label{eq:cov_kernel}
\end{align}
\end{theorem}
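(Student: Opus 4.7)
The plan is to apply the functional delta method based on Hadamard differentiability of the fixed-point operator, exactly as foreshadowed by the remark preceding the theorem. I would view $\hat\theta_N$ as $\phi(\hat K)$ where $\phi$ sends a map close to $K$ to its (approximate) fixed point, so it suffices to establish two ingredients: (a) $\sqrt N(\hat K-K)$ converges weakly to a Gaussian process $\mathbb W$ in $\ell^\infty(\Theta)^d$; and (b) $\phi$ is Hadamard differentiable at $K$ tangentially to continuous perturbations, with derivative $h\mapsto(I_d-J_K(\theta^*))^{-1}h(\theta^*)$. The delta method then yields convergence of $\sqrt N(\hat\theta_N-\theta^*)$ to a centered Gaussian vector whose covariance has the sandwich form in \eqref{eq:asyvar}.

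For (a) I would derive a uniform Bahadur-type expansion for each sample best response $\hat L_j$. Since $\hat L_j(\theta_{-j})$ minimizes a convex (weighted) quantile regression criterion, Knight's identity combined with a stochastic equicontinuity argument over $\theta_{-j}\in\Theta_{-j}$ delivers
\[
\sqrt N\bigl(\hat L_j(\theta_{-j})-L_j(\theta_{-j})\bigr) = -\sqrt N(\mathbb P_N-P)\,g_j(\cdot;\theta)+o_p(1),
\]
uniformly in $\theta_{-j}$, with $g_j$ as in the statement; the Hessian factor in $g_j$ comes from the usual inversion of the first-order condition and the indicator part of $f_j$ yields the score. Assumption \ref{ass:global_decentralization} (compactness of $\Theta$ and of the supports of the $D_\ell$, moment bounds, conditional density smoothness, positive-definite Hessian) supplies precisely the hypotheses required, while Donskerness of the indicator classes entering $f_j$ is standard. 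Stacking over $j$ gives $\sqrt N(\hat K-K)\Rightarrow\mathbb W$ in $\ell^\infty(\Theta)^d$ with the covariance kernel \eqref{eq:cov_kernel}; the stochastic equicontinuity piece is essentially the content of Lemma \ref{lem:sequi}.

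Step (b) is the most delicate piece and is where the generic results (Lemmas \ref{lem:H_hausdorff}--\ref{lem:H_dot}) in the appendix enter. For perturbations $K_n=K+t_n h_n$ with $t_n\downarrow 0$ and $h_n\to h$ uniformly (with $h$ continuous), let $\theta_n$ be a fixed point of $K_n$ close to $\theta^*$; writing
\[
\theta_n-\theta^* = K_n(\theta_n)-K(\theta^*) = t_n h_n(\theta_n) + \bigl(K(\theta_n)-K(\theta^*)\bigr)
\]
and mean-value expanding $K(\theta_n)-K(\theta^*)$ along the segment between $\theta^*$ and $\theta_n$ gives $(I_d-\tilde J_n)(\theta_n-\theta^*) = t_n h_n(\theta_n)$ for some averaged Jacobian $\tilde J_n\to J_K(\theta^*)$. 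Invertibility of $I_d-J_K(\theta^*)$ (which holds under Assumption \ref{ass:contraction_global} and, more generally, under the local identification condition \eqref{eq:full_rank_eig_K}), together with continuity of $J_K$ from Corollary \ref{cor:global_decentralization} and $\theta_n\to\theta^*$, identifies the Hadamard derivative as $(I_d-J_K(\theta^*))^{-1}h(\theta^*)$. The main obstacle is to make this argument rigorous in a function-space topology and to ensure that the approximate fixed point is uniquely defined locally around $\theta^*$; this is exactly what the auxiliary Hadamard-differentiability lemmas are designed to supply. Combining (a), (b), and the slack condition \eqref{eq:hattheta} (which absorbs the $o_p(N^{-1/2})$ numerical error) via the functional delta method yields $\sqrt N(\hat\theta_N-\theta^*)\stackrel{L}{\to}(I_d-J_K(\theta^*))^{-1}\mathbb W(\theta^*)$, a centered Gaussian vector with covariance $V$ as in \eqref{eq:asyvar}.
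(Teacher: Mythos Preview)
Your proposal is correct and follows essentially the same two-ingredient strategy as the paper: (a) weak convergence $\sqrt N(\hat K-K)\leadsto\mathbb W$ via a uniform Bahadur expansion of the sample best responses (this is exactly the content of Lemma~\ref{lem:sequi} and Corollary~\ref{cor:LMdist}), and (b) Hadamard differentiability of the fixed-point functional with derivative $h\mapsto(I_d-J_K(\true))^{-1}h(\true)$, followed by the functional delta method.

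The one execution difference worth noting is how step~(b) is organized. You sketch a direct argument on the fixed-point map $\phi:K\mapsto$ (its fixed point), expanding $K(\theta_n)-K(\true)$ by the mean-value theorem. The paper instead recasts the problem as a $Z$-estimation problem for $H\equiv I_d-K$: it defines $\varphi(H,r)$ to be an $r$-approximate minimizer of $\|H(\theta)\|^2$, so that $\est=\varphi(\hat H,r_N)$ with $r_N=o_p(N^{-1/2})$, and then invokes the Hadamard-differentiability machinery for $Z$-estimators from \citet{Chernozhukov+13} (their Condition~$Z$ and Lemmas~E.2--E.3). Lemmas~\ref{lem:H_hausdorff} and~\ref{lem:H_dot} are precisely the verification of Condition~$Z$ in this context. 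This buys two things over your direct sketch: it cleanly absorbs the $o_p(N^{-1/2})$ slack in \eqref{eq:hattheta} as a second argument of $\varphi$ (so one does not need a separate argument that approximate and exact fixed points differ by $o_p(N^{-1/2})$), and it avoids having to establish existence of an exact fixed point of the perturbed map $K_n$. Your route would work too, but the $Z$-estimation reformulation lets the paper outsource the delicate function-space part to an existing result rather than redo it from scratch.
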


In the theorem above, the asymptotic variance of $\hat\theta_N$ is characterized in terms of the Jacobian of $K$ and a Gaussian process $\mathbb W$. This can be reformulated to show its asymptotic relationship to a GMM estimator. Let $\tilde\theta_N$ be an estimator that  solves the following estimating equations
\begin{align}
\frac{1}{N}\sum_{i=1}^N1\{Y_i\le X_i'\tilde\theta_{N,1}+D_{1,i}\tilde\theta_{N,2}+\dots+D_{d_{D},i}\tilde\theta_{N,J}\}-\tau)\begin{pmatrix}
	X_i\\
	Z_i
\end{pmatrix}=o_p(N^{-1/2}).\label{eq:est_eq_tilde}
\end{align}
Let $\Psi(\tau)=(X',Z')'.$ As shown in \cite{CH2006} (Theorem 3 and Remark 3), $\sqrt N(\tilde \theta_N-\true)$ converges weakly to a mean zero multivariate normal distribution with variance
\begin{align}
	\tilde V=\tau(1-\tau)J_{\Psi_P}(\theta^*)^{-1}E[\Psi(\tau)\Psi(\tau)'][J_{\Psi_P}(\theta^*)^{-1}]',\label{eq:Vtilde}
\end{align}
where $J_{\Psi_P}(\theta^*)=E\left[f_{\varepsilon(\tau)|X,D,Z}(0)\Psi(\tau) (X',D')\right],$ $\varepsilon(\tau)=Y-X'\theta^*_1-D'\theta^*_{-1}$, and $f_{\varepsilon(\tau)|X,D,Z}$ is $\varepsilon(\tau)$'s conditional density given $(X,D,Z)$. The following corollary shows that the fixed point estimator  $\hat\theta_N$ is asymptotically equivalent to $\tilde\theta_N$ in terms of its limiting distribution.
\begin{corollary}\label{cor:asy_equivalence}
Suppose the conditions of Theorem \ref{thm:asynorm} hold. Then,  $V=\tilde V$.
\end{corollary}

To conduct inference on $\true$,  one may employ a natural bootstrap procedure. For this, use in \eqref{eq:hattheta1}--\eqref{eq:hattheta2} the bootstrap sample instead of the original sample to define the bootstrap analogs $\hat M^*$ and $\hat\theta^*_{N,-1}$ of $\hat M$ and $\hat\theta_{N,-1}$.
 In practice, the bootstrap can be implemented using the following steps.

\begin{enumerate}
\item Compute the fixed point estimator $\est$ using the original sample.
\item Draw a bootstrap sample $\{W^*_i\}_{i=1}^N$ randomly with replacement from $P_N.$ Use the sequential  dynamical system based on $\hat M^*$ combined with a contraction or root-finding algorithm to compute $\est^*$.
\item Repeat Step 2 across bootstrap replications $b=1,\dots, B$. Let
\begin{align}
F_B(x)\equiv	\frac{1}{B}\sum_{b=1}^B1\Big\{\sqrt N(\est^{*,b}-\est)\le x\Big\},~x\in\mathbb R.
\end{align}
Use $F_B$ as an approximation to the sampling distribution of the root $\sqrt N(\est-\true)$.
\end{enumerate}
The bootstrap is particularly attractive in conjunction with our new and computationally efficient estimation algorithms. By contrast, directly bootstrapping for instance the IQR estimator of \citet{CH2006} is computationally very costly. Alternative methods (either an asymptotic approximation or a score-based bootstrap)  require estimation of the influence function, which involves nonparametric estimation of a certain conditional density. Directly bootstrapping our fixed point estimators avoids the use of any smoothing and tuning parameters.\footnote{The use of the bootstrap here is for consistently estimating the law of the estimator. Whether one may obtain higher-order refinements through a version of the bootstrap, e.g., the $m$ out of $n$ bootstrap with extrapolation \citep{Sakov:2000aa}, is an interesting question which we leave for future research.}

The following theorem establishes the consistency of the bootstrap procedure. For this, let $\stackrel{L^*}{\leadsto}$ denote the weak convergence of the bootstrap law in outer probability, conditional on the sample path $\{W_i\}_{i=1}^\infty.$
\begin{theorem}\label{thm:bootstrap}
Suppose that Assumptions \ref{ass:ivqr} and \ref{ass:global_decentralization} hold. Let $\left\{W_i\right\}_{i=1}^N$ be an i.i.d.\ sample generated from the IVQR model.
   Then,
\begin{align*}
\sqrt N(\est^{*}-\est) \stackrel{L^*}{\leadsto} N(0,V),
\end{align*}
where $V$ is as in \eqref{eq:asyvar}. 
\end{theorem}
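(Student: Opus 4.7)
The plan is to mirror the proof of Theorem \ref{thm:asynorm} using the conditional (bootstrap) version of the functional delta method, which is natural given that the paper's whole asymptotic framework is built around establishing Hadamard-differentiability of the fixed-point map (Lemmas \ref{lem:H_hausdorff}--\ref{lem:H_dot} as referenced in the Remark). Concretely, I would view $\est$ as $\Phi(\mathbb{P}_N)$ for a map $\Phi$ that sends an empirical (or population) measure $Q$ to the fixed point of $K_Q$, where $K_Q$ is the best-response map constructed from $Q$ via the moment conditions $\Xi_Q$. Under Assumptions \ref{ass:ivqr} and \ref{ass:global_decentralization}, the proof of Theorem \ref{thm:asynorm} has already established (i) that $\Phi$ is Hadamard-differentiable at $P$ tangentially to a set containing the Gaussian limit of the empirical process and (ii) that the derivative $\Phi'_P$ is the linear map $h \mapsto (I_{\dparam}-J_K(\true))^{-1} h$ evaluated on the influence function contribution summarized by $g(\cdot;\true)$.

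First, I would recast the bootstrap estimator analogously as $\est^{*} = \Phi(\mathbb{P}_N^{*})$, so that $\sqrt{N}(\est^{*}-\est) = \sqrt{N}\bigl(\Phi(\mathbb{P}_N^{*}) - \Phi(\mathbb{P}_N)\bigr)$ up to the $o_p(N^{-1/2})$ slack built into definition \eqref{eq:hattheta}. Second, I would verify that the indexing class of functions
\[
\mathcal{F} = \Big\{ w \mapsto (\mathbf{1}\{y \le d'\theta_{-1} + x'\theta_1\} - \tau)\, r \;:\; \theta \in \Theta,\; r \in \{X_k\} \cup \{Z_\ell\}\Big\}
\]
is $P$-Donsker: the indicators form a VC-subgraph class, multiplication by the fixed envelope $|X| + |Z|$ with two finite moments (Assumption \ref{ass:global_decentralization}.\ref{ass:global_decentralization2}) preserves the Donsker property, so $\sqrt{N}(\mathbb{P}_N - P) \leadsto \mathbb{G}_P$ in $\ell^\infty(\mathcal{F})$.

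Third, since $\mathcal{F}$ is $P$-Donsker with square-integrable envelope, the standard multiplier/exchangeable bootstrap theorem (e.g.\ Van der Vaart and Wellner, Theorem 3.6.2) yields
\[
\sqrt{N}(\mathbb{P}_N^{*} - \mathbb{P}_N) \stackrel{L^{*}}{\leadsto} \mathbb{G}_P
\qquad \text{in outer probability, conditionally on } \{W_i\}_{i=1}^\infty.
\]
Applying the bootstrap version of the functional delta method (Van der Vaart and Wellner, Theorem 3.9.11) to the Hadamard-differentiable map $\Phi$ then gives
\[
\sqrt{N}(\est^{*} - \est) = \sqrt{N}\bigl(\Phi(\mathbb{P}_N^{*}) - \Phi(\mathbb{P}_N)\bigr) \stackrel{L^{*}}{\leadsto} \Phi'_P(\mathbb{G}_P),
\]
and the right-hand side is exactly the Gaussian limit with covariance $V$ identified in Theorem \ref{thm:asynorm}, where the $(I_{\dparam}-J_K(\true))^{-1}$ factors come from differentiating the fixed-point equation and $E[\mathbb{W}(\true)\mathbb{W}(\true)']$ is the covariance of $\mathbb{G}_P$ composed with $g(\cdot;\true)$.

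The main obstacle is ensuring that the Hadamard-differentiability statement used for the original sample is strong enough to support the conditional bootstrap argument, i.e.\ that it holds tangentially to the entire closure of the support of $\mathbb{G}_P$ and in a mode (typically uniform Hadamard differentiability) that matches the hypotheses of Theorem 3.9.11 in Van der Vaart and Wellner. A related subtlety is that $\hat K$ is not itself continuous in $\theta$ because of the indicator functions, so one does not differentiate $\hat K$ directly; rather, the differentiability is that of the implicit map defined by $\Xi_Q(K_Q(\theta_{-j}),\theta_{-j}) = 0$, and the gap between $\hat K$ and its smooth population counterpart has to be controlled uniformly via the stochastic equicontinuity result of Lemma \ref{lem:sequi}, both under $P$ and under the bootstrap law. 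Once uniform Hadamard-differentiability and bootstrap equicontinuity of the underlying empirical process are in hand, the rest of the argument is a direct application of the delta method template.
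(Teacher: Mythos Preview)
Your proposal is essentially correct and uses the same underlying ingredients as the paper: a Donsker argument (the paper's Lemma \ref{lem:f_donsker}), bootstrap weak convergence via Theorem 3.6.2 in \cite{VanderVaartWellner1996}, and a functional delta method applied to a Hadamard-differentiable fixed-point map. The difference is mainly organizational.

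You apply the bootstrap delta method (Theorem 3.9.11 in \cite{VanderVaartWellner1996}) in one shot to the composite map $\Phi:\mathbb P_N\mapsto\est$. The paper instead works at an intermediate level: it first establishes, via Lemma \ref{lem:sequi} and Corollary \ref{cor:LMdist}, that $\sqrt N(\hat K^{*}-\hat K)\stackrel{L^{*}}{\leadsto}\mathbb W$ in $\ell^\infty(\Theta)^d$, and only then applies the delta method to the map $\varphi:(H,r)\mapsto$ (approximate fixed point of $I-H$), linearizing $\sqrt N[\varphi(\hat H^{*},r_N^{*})-\varphi(H,0)]$ and $\sqrt N[\varphi(\hat H,r_N)-\varphi(H,0)]$ separately via Theorem 3.9.4 and subtracting, with a bounded-Lipschitz argument to pass to the conditional statement. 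This two-stage route has two practical advantages: (i) it reuses verbatim the Hadamard-differentiability result for $\varphi$ already needed for Theorem \ref{thm:asynorm}, so no new tangent-set or chain-rule verification is required; and (ii) it handles the $o_p(N^{-1/2})$ approximation slack in \eqref{eq:hattheta} explicitly through the second argument $r$ of $\varphi$, whereas in your formulation the slack sits outside $\Phi(\mathbb P_N)$ and has to be controlled by a separate argument. Your route is more compact if one is willing to state and verify Hadamard differentiability of the full chain $Q\mapsto K_Q\mapsto$ fixed point tangentially to the support of $\mathbb G_P$; the paper's route is more modular and avoids that extra verification.
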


\section{Empirical Example}
\label{sec:empirical_example}

In this section, we illustrate the proposed estimators by reanalyzing the effect of 401(k) plans on savings behavior as in \citet{CH2004}. This empirical example constitutes the basis for our Monte Carlo simulations in Section \ref{sec:simulation_study}. As explained by \citet{CH2004}, 401(k) plans are tax-deferred savings options that allow for deducting contributions from taxable income and accruing tax-free interest. These plans are provided by employers and were introduced in the United States in the early 1980s to increase individual savings. To estimate the effect of 401(k) plans ($D$) on accumulated assets ($Y$), one has to deal with the potential endogeneity of the actual participation status. \citet{CH2004} propose an instrumental variables approach to overcome this problem. They use 401(k) eligibility as an instrument ($Z$) for the participation in 401(k) plans. The argument behind this strategy, which is due to \citet{Poterbaetal1994,Poterbaetal1995,Poterbaetal1998} and \citet{Benjamin2003},  is that eligibility is exogenous after conditioning on income and other observable factors. We use the same identification strategy here but note that there are also papers which argue that 401(k) eligibility is not conditionally exogenous \citep[e.g.,][]{Engenetal1996}.

We use the same dataset as in \citet{CH2004}. The dataset contains information about 9913 observations from a sample of households from the 1991 Survey of Income and Program Participation.\footnote{The dataset analyzed by \citet{CH2004} has 9,915 observations. Here we delete the two observations with negative income.} We refer to \citet{CH2004} for more information about the data and to their Tables 1 and 2 for descriptive statistics. Here we focus on net financial assets as our outcome of interest.\footnote{\citet{CH2004} also consider total wealth and net non-financial assets.}

We consider the following linear model for the conditional potential outcome quantiles
\begin{eqnarray}
q(D,X,\tau)&=&X^{\prime }\theta_X(\tau)+D\theta_D(\tau).\label{eq:model_wo_interactions}
\end{eqnarray}
The vector of covariates $X$ includes seven dummies for income categories, five dummies for age categories, family size, four dummies for education categories, indicators for marital status, two-earner status, defined benefit pension status, individual retirement account participation status and homeownership, and a constant. Because $P(D=0)>0$, we re-parametrize the model by replacing $D$ by $D^{\star}=D+1$ to ensure that $Z/D^{\star}$ is well-defined and positive.

Below, we briefly describe the construction of the sequential response map for this application.
First, we allocate  $\theta_X$ to player 1 and hence denote this subvector by $\theta_1$. Similarly, we allocate $\theta_D$ to player 2 and denote it by $\theta_2$.  For each $\tau$, the sequential response map can be constructed by taking the following steps. 
\begin{enumerate}
	\item For any given $\theta_2$,  compute $\theta_1=\hat L_1(\theta_2)$ by running an ordinary QR (for the $\tau$-th quantile) using $Y_i-D_i^\star\theta_2$ as the dependent variable and $X_i$ as a regressor vector.
	\item Given $\theta_1=\hat L_1(\theta_2)$ from the previous step, compute $\hat L_2(\theta_1)$ by running an ordinary QR with weights $Z_i/D_i^\star$ using $Y_i-X_i'\theta_1$ as the dependent variable, $D_i^\star$ as the regressor, and omitting the constant.
\end{enumerate}
Combining these two steps yields the sequential response map $\hat M(\cdot)=\hat L_2(\hat L_1(\cdot))$.
Figure \ref{fig:fixed_point_graphical} shows, for each $\tau \in \{0.25,0.50,0.75\}$, the graph of $\theta_2\mapsto\hat{M}(\theta_2)$. For each $\tau$, the intersection between $\hat{M}$ and the 45-degree line (i.e.\ the identity map) is our fixed point estimator $\hat\theta_D(\tau)$. Figure \ref{fig:fixed_point_graphical} further provides a straightforward graphical way to check the validity of the sample analog of Assumption \ref{ass:contraction_global}. We can see that the sample analog of $J_M$ (i.e. the slope of $M$) is smaller than one at any $\theta_2$. This suggests that the contraction-based sequential algorithm converges at all three quantile levels, which is indeed what we find.

\begin{figure}[htbp]
\begin{center}
\caption{Illustration Fixed Point Estimator}
\includegraphics[width=0.7\textwidth,trim=0 2cm 0 2cm]{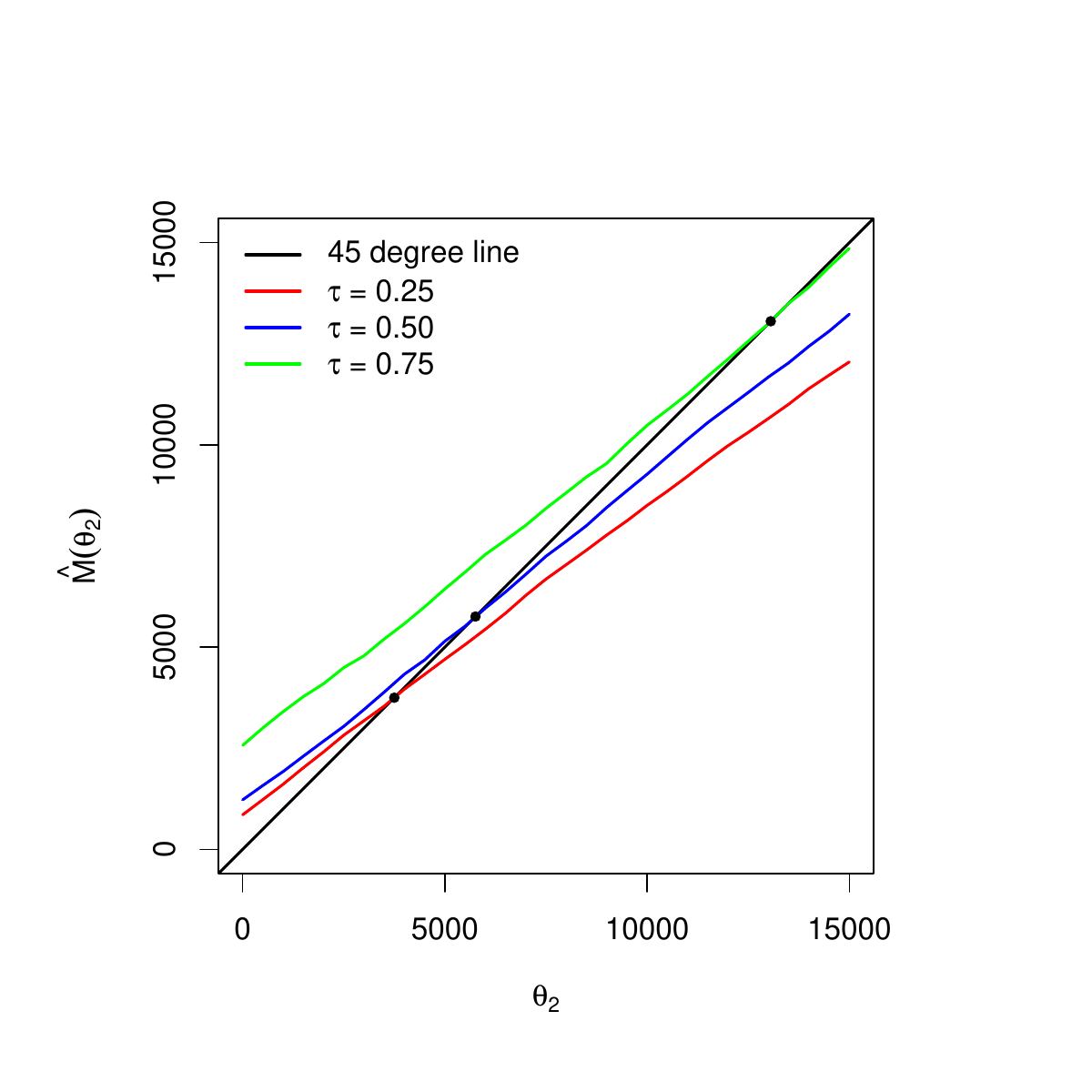}
\label{fig:fixed_point_graphical}
\end{center}
\vspace{0.2in}

\begin{minipage}{0.65\textwidth}
{\footnotesize Notes: The sequential response map $\theta_2\mapsto \hat M(\theta_2)$ is plotted for three quantile levels: $\tau=0.25$ (red), $\tau=0.5$ (blue), and $\tau=0.75$ (green). The intersection of each map and the 45-degree line (black) gives the fixed point estimator of $ \theta_{D}(\tau)$.\par}
\end{minipage}
\end{figure}

For each $\tau$, the steps for numerically calculating the fixed point estimator are as follows.
\begin{enumerate}\setcounter{enumi}{2}
	\item Find the fixed point of $\hat M$ using one of the following algorithms.
\begin{itemize}
	\item Contraction algorithm: Set the tolerance $e_N$.\footnote{In our application and simulations, we set  $e_N$ equal to the square-root of the machine precision in \texttt{R}: $e_N\approx 1.5\cdot 10^{-8}$.} We set the initial value $\param_2^{(0)}$ to the 2SLS estimate of the coefficient on $D^\star_i$. Iterate  $\param_{2}^{(s+1)}=\hat M(\param_{2}^{(s)}),$ for $s=0,1,2,\dots,$ until $|\param_{2}^{(s+1)}-\param_{2}^{(s)}|\le e_N$. Report $\param_{2}^{(s+1)}$ as the estimate  $\hat\theta_{D}$ of $\theta_D$.
	\item Root-finding algorithm: Find the solution to the equation $\theta_2-\hat M(\theta_2)=0$ by applying Brent's method as implemented by the \texttt{R}-package \texttt{uniroot}. Report the solution as the estimate $\hat\theta_{D}$ of $\theta_D$.
\end{itemize}
	\item (Optional) Compute $\hat\theta_{X}=\hat L_1(\hat\theta_{D})$ (as in Step (2)) if one is also interested in the coefficients on $X$.  Recover estimates of the original coefficients following the discussion in Appendix \ref{app_sec:reparametrization}.
\end{enumerate}

We compare our estimators to the IQR estimator of \citet{CH2006} based on a grid search over 500 points. IQR provides a very robust benchmark. However, due to the use of grid search, it is computationally expensive. In our empirical Monte Carlo study in Section \ref{sec:simulation_study}, we find that, with 10,000 observations and one endogenous regressor, IQR is nine times slower than the contraction algorithm and 23 times slower than the root-finding algorithm. With two endogenous regressors, the computational advantages of our procedures are even more pronounced. IQR based on a grid search over 100$\times$100 points is 149 times slower than the contraction algorithm and 84 times slower than the nested root-finding algorithm.

Figure \ref{fig:qte_401k} displays the estimates of $\theta_D(\tau)$ for $\tau \in \{0.15,0.20,\dots,0.85\}$. We can see that all estimation algorithms yield very similar results. We also note that the contraction algorithm converges for all quantile levels considered.
\begin{figure}[htbp]
\begin{center}
\caption{Comparison Point Estimates}
\includegraphics[width=0.7\textwidth, trim = 0 1cm 0 1cm]{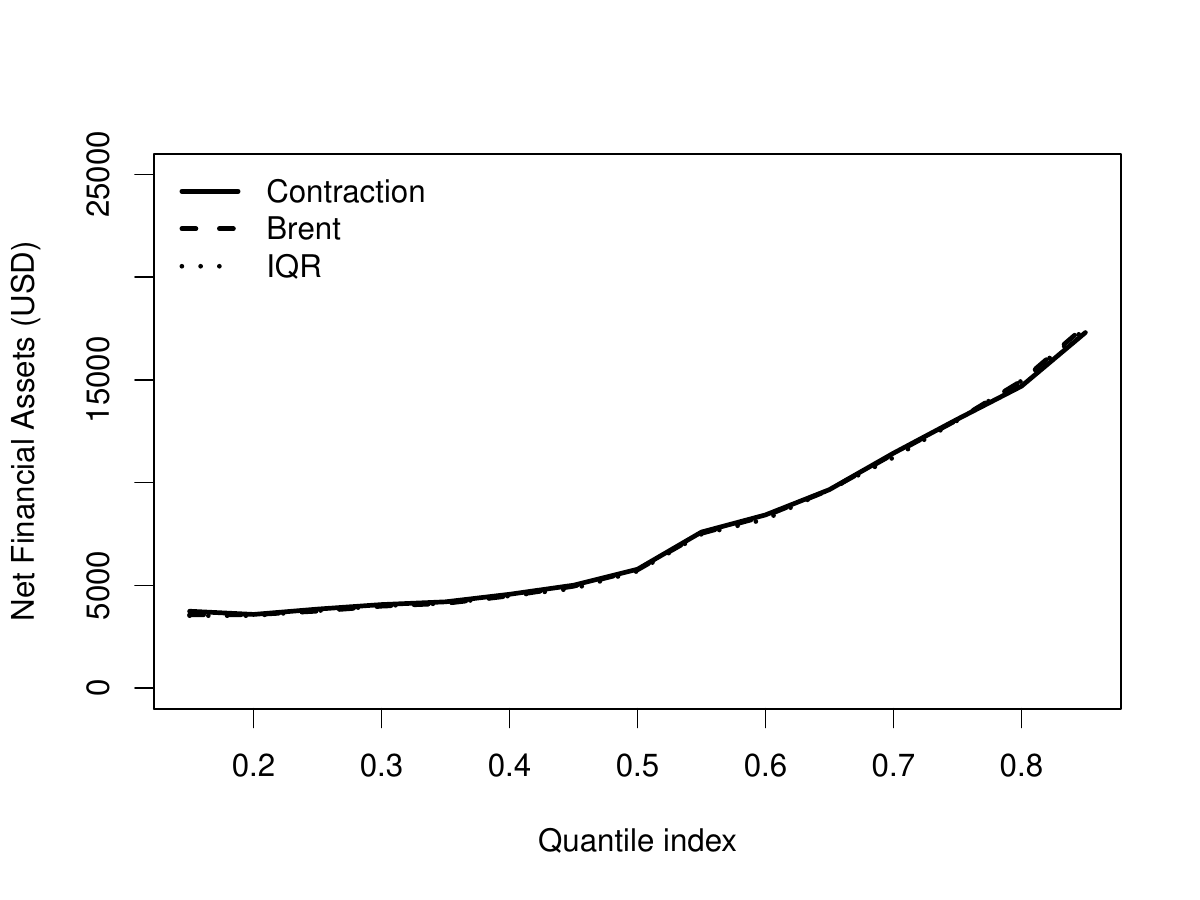}
\label{fig:qte_401k}
\end{center}
\end{figure}

Figures \ref{fig:qte_401k_ci} depicts pointwise 95\% confidence intervals for the proposed estimators obtained using the empirical bootstrap described in Section \ref{ssec:asymptotics_bootstrap}
with 500 replications. We can see that the resulting confidence intervals are very similar for both algorithms and do not include zero at all quantile levels considered.

\begin{figure}[htbp]
\begin{center}
\caption{Pointwise 95\% Bootstrap Confidence Intervals}
\includegraphics[width=0.7\textwidth, trim = 0 1cm 0 1cm]{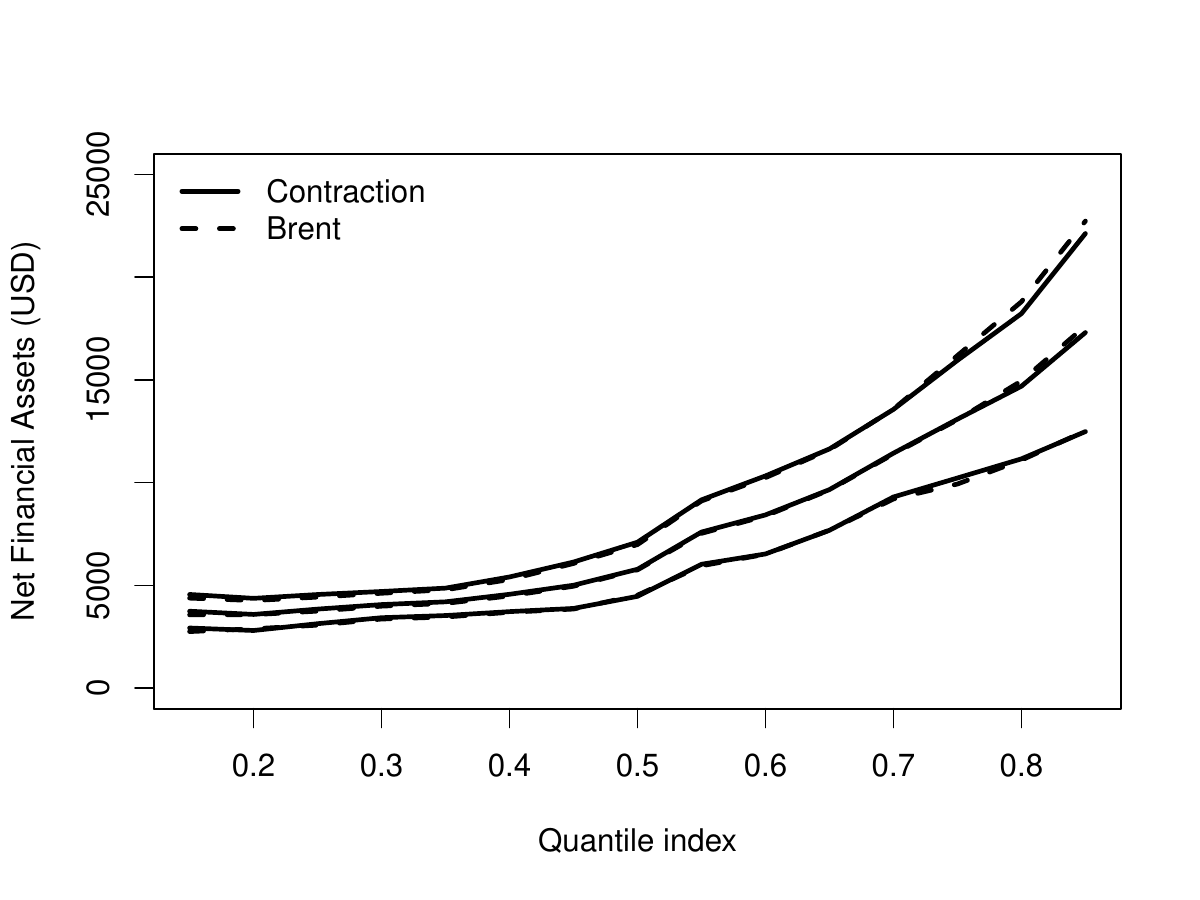}
\label{fig:qte_401k_ci}
\end{center}
\end{figure}

\section{Empirical Monte Carlo Study}
\label{sec:simulation_study}

In this section, we assess the practical performance of our estimation algorithms in an empirical Monte Carlo study based on the application in Section \ref{sec:empirical_example}. 

\subsection{An Application-based DGP}
\label{sec:application_based_dgp}
We consider DGPs which are based on the empirical application of Section \ref{sec:empirical_example}.\footnote{The construction of our DGPs is inspired by the application-based DGPs in \citet{KaplanSun2017}.}  We focus on a simplified setting with only two exogenous covariates: income and age. The covariates are drawn from their joint empirical distribution. The instrument $Z_i$ is generated as $\text{Bernoulli}\left(\bar{Z} \right)$, where $\bar{Z}$ is the mean of the instrument in the empirical application. We then generate the endogenous variable as $D_i=Z_i\cdot 1\left\{ 0.6\cdot  V_i < U_i\right\},$
where $U_i\sim \text{Uniform}(0,1)$ and $V_i\sim \text{Uniform}(0,1)$ are independent disturbances. The DGP for $D_i$ is chosen to roughly match the joint empirical distribution of $(D_i,Z_i)$. The outcome variable $Y_i$ is generated as
\begin{eqnarray}
 Y_i = X_i'\theta_X(U_i)+D_i\theta_{D}(U_i)+ G^{-1}(U_i).\label{eq:dgp_1endog}
\end{eqnarray}
The coefficient $\theta_X(\cdot)$ is constant and equal to the IQR median estimate in the empirical application. $\theta_D(U_i)=5000+U_i\cdot 10000$ is chosen to match the increasing shape of the estimated conditional quantile treatment effects in Figure \ref{fig:qte_401k}. $G^{-1}(\cdot)$ is the quantile function of a re-centered Gamma distribution, estimated to match the distribution of the IQR residuals at the median. To investigate the performance of our procedure with more than one endogenous variable, we add a second endogenous regressor:
\begin{eqnarray}
Y_i = X_i'\theta_X(U_i)+D_i\theta_{D}(U_i)+D_{2,i}\theta_{D,2}(U_i) + G^{-1}(U_i),
\end{eqnarray}
where we set $\theta_{D,2}(U_i)=10000$. The second endogenous variable is generated as
\[
D_{2,i}=0.8 \cdot Z_{2,i}+0.2 \cdot \Phi^{-1}(U_i)
\]
and the second instrument is generated as $Z_{2,i}\sim N(0,1)$. We set $N=9913$ as in the empirical application. 

\subsection{Estimation Algorithms}
We assess and compare several different algorithms, all of which are based on the dynamical system $\hat M$.\footnote{We do not explore  algorithms based on $\hat K$ because, as discussed in Section \ref{sec:sample_algorithms}, the algorithms based on $\hat M$ have advantages in terms of the choice of starting values (for the contraction algorithm) and dimension reduction (for the root-finding algorithm).}
For models with one endogenous variable, we consider a contraction algorithm and a root-finding algorithm based on Brent's method. For models with two endogenous variables, we consider a contraction algorithm, a nested root-finding algorithm based on Brent's method, and a root-finding algorithm implemented as a minimization problem based on simulated annealing (SA).\footnote{We have also explored algorithms based on Newton-Raphson-type root-finders. These algorithms are, in theory, up to an order of magnitude faster than the contraction algorithm and the nested algorithms, but, unlike the other algorithms considered here, require an approximation to the Jacobian and are not very robust to the choice of starting values. We therefore do not report the results here.}  For all estimators, we use 2SLS estimates as starting values. We compare the results of our algorithms to (nested) profiling estimators based on Brent's method, and to IQR with a grid search over 500 points (one endogenous regressor) and 100$\times$100 points (two endogenous regressors), which serves as a slow but very robust benchmark. Table \ref{tab:algorithms} presents an overview of the algorithms.

\begin{table}[ht]
\begin{scriptsize}
\centering
\caption{Algorithms}
  \begin{tabular}{l l }
\toprule
    \multicolumn{2}{c}{\textbf{One endogenous variable}}\\
\midrule
    Algorithm &\texttt{R}-Package \\
    \midrule
    Contraction algorithm &    \\
    Root-finding algorithm & \texttt{uniroot} \citep{uniroot2019} \\
     Profiling algorithm & \texttt{uniroot} \citep{uniroot2019} \\
    IQR &  \\
        \midrule
    \multicolumn{2}{c}{\textbf{Two endogenous variables}}\\
    \midrule
        Algorithm &\texttt{R}-Package \\

\midrule

    Contraction algorithm &  \\

    Nested root-finding algorithm & \texttt{uniroot} \citep{uniroot2019}  \\
    
        Root-finding algorithm  (implemented as optimizer) &  \texttt{optim\_sa} \citep{optimsa2018}\\
        Nested profiling algorithm & \texttt{uniroot} \citep{uniroot2019}  \\

     IQR (implementation: p.132 in&   \\
       \citet{Chernozhukov+17handbook})\\
    \bottomrule
	  \end{tabular}
\label{tab:algorithms}
\end{scriptsize}
\end{table}

\subsection{Results} Here we describe the computational performance of the different procedures and the finite sample performance of our bootstrap inference procedure. In Appendix \ref{app:bias_rmse}, we further investigate the finite sample bias and root mean squared error (RMSE) of the different methods. We find that the proposed estimation algorithms perform well and exhibit a similar bias and RMSE. Moreover, the finite sample properties of the estimators based on the contraction and root-finding algorithms are comparable to the profiling estimators and IQR. This shows that the computational advantages of our algorithms do not come at a cost in terms of the finite sample performance. Appendix \ref{app:sims_3} presents additional simulation evidence, demonstrating that our algorithms perform well and remain computationally tractable with more than two endogenous regressors.\footnote{Specifically, we present simulation results based on the DGP used in this section, augmented with an additional endogenous regressor, generated as $D_{3,i}=0.8 \cdot Z_{3,i}+0.2 \cdot \Phi^{-1}(U_i)$, where $Z_{3,i}\sim N(0,1)$.}

Tables \ref{tab:cpu1}--\ref{tab:cpu2} show the average computation time (in seconds) for estimating the model with one and two endogenous variables for different sample sizes. All computations were carried out on a standard desktop computer with a 3.2 GHz Intel Core i5 processor and 8GB RAM.

With one endogenous regressor, the contraction algorithm and the root-finding algorithm based on Brent's method  are computationally more efficient than IQR.  Specifically, the root-finding algorithm based on Brent's method is 11 to 23 times faster than IQR and the contraction algorithm is 4 to 9 times faster. The root-finding algorithm is as fast as the profiling estimator and about twice as fast as the contraction algorithm.\footnote{Note that the computational speed of the contraction algorithm depends on $|\hat J_M|$ and, thus, will differ across applications.}

The computational advantages of our algorithms become more pronounced with two endogenous variables. Table \ref{tab:cpu2} shows that IQR's average computation times are around two orders of magnitude slower than those of our preferred procedures. Specifically, the nested root-finding algorithm is 84 to 134 times faster than IQR, while the contraction algorithm is 149 to 308 times faster. This is as expected since, due to the use of grids, IQR's becomes computationally impractical in our setting whenever the number of endogenous variables exceeds two or three.\footnote{Our implementation of IQR with two endogenous variables is inherently slower than the implementation with one endogenous variable, even when the number of grid points is the same. First, there is an additional covariate in the underlying QRs (the second instrument). Second, with one endogenous variable, we choose the grid value that minimizes the absolute value of the coefficient on the instrument. By contrast, with two endogenous regressors, we choose the grid point which minimizes a quadratic form based on the inverse of the estimated QR variance covariance matrix as suggested in \citet{Chernozhukov+17handbook}, which requires an additional computational step.} The contraction algorithm is almost twice as fast as the nested algorithm, which, in turn, is almost twice as fast as the profiling estimator. Finally, the minimization algorithm based on SA is about one order of magnitude slower than the contraction algorithm, while still being an order of magnitude faster than IQR.

\begin{table}[ht]
\linespread{1.15}\parskip .05in
\begin{footnotesize}
\centering
\caption{Computation time, 401(k) DGP with one endogenous regressor}
\begin{tabular}{lcccc}
\toprule

$N$& Contr& Brent & Profil &InvQR \\
 \midrule
  \\
1000 & 0.10 & 0.03 & 0.03 & 0.34 \\ 
  5000 & 0.44 & 0.13 & 0.12 & 2.44 \\ 
  10000 & 0.75 & 0.28 & 0.30 & 6.48 \\ 
\\
  \bottomrule
\multicolumn{5}{p{6.2cm}}{\footnotesize {\it Notes:}  The table reports average computation time in seconds at $\tau=0.5$ over 100 simulation repetitions based on the DGP described in the main text. Contr: contraction algorithm; Brent: root-finding algorithm based on Brent's method; Profil: profiling estimator based on Brent's method; InvQR: inverse QR with grid search over 500 grid points. We use 2SLS estimates as starting values.}
\end{tabular}
\label{tab:cpu1}
\end{footnotesize}
\linespread{1.5}\parskip .05in
\end{table}

\begin{table}[ht]
\linespread{1.15}\parskip .05in
\begin{footnotesize}
\centering
\caption{Computation time, 401(k) DGP with two endogenous regressor}
\begin{tabular}{lccccc}
\toprule

$N$& Contr & NestBr & SimAnn  &Profil& InvQR \\
 \midrule
  \\
1000 & 0.18 & 0.41 & 3.13 & 0.77 & 55.40 \\ 
  5000 & 1.71 & 2.67 & 19.05 & 4.39 & 282.64 \\ 
  10000 & 4.50 & 7.99 & 54.95 & 12.50 & 672.00 \\ 
\\
  \bottomrule
\multicolumn{6}{p{8cm}}{\footnotesize {\it Notes:} The table reports average computation time in seconds at $\tau=0.5$ over 100 simulation repetitions based on the DGP described in the main text.  Contr: contraction algorithm; NestBr: nested algorithm based on Brent's method; SimAnn: simulated annealing based optimization algorithm; Profil: nested profiling estimator based on Brent's method; InvQR: inverse QR with grid search over 100$\times 100$ grid points. We use 2SLS estimates as starting values.}
\end{tabular}
\label{tab:cpu2}
\end{footnotesize}
\linespread{1.5}\parskip .05in
\end{table}

Finally, we analyze the finite sample properties of our bootstrap inference procedure for making inference on $\theta_D$ in the model with a single endogenous variable. Table \ref{tab:size_401k} shows the empirical coverage probabilities of bootstrap confidence intervals based on the contraction algorithm and the root-finding algorithm based on Brent's method. Both methods demonstrate an excellent performance and exhibit coverage rates that are very close to the respective nominal levels.

\begin{table}[ht]
\linespread{1.15}\parskip .05in
\begin{small}
\centering
\caption{Coverage, 401(k) DGP with one endogenous regressor}
\begin{tabular}{lcccc}
\toprule

& \multicolumn{2}{c}{$1-\alpha=0.95$} & \multicolumn{2}{c}{$1-\alpha=0.9$}\\
\cmidrule(l{5pt}r{5pt}){2-3}  \cmidrule(l{5pt}r{5pt}){4-5} \

$\tau$& Contr & Brent&  Contr & Brent \\
 \midrule
  \\
0.15 & 0.96 & 0.96 & 0.93 & 0.91 \\ 
  0.25 & 0.95 & 0.95 & 0.91 & 0.92 \\ 
  0.50 & 0.95 & 0.95 & 0.90 & 0.91 \\ 
  0.75 & 0.95 & 0.95 & 0.90 & 0.89 \\ 
  0.85 & 0.94 & 0.94 & 0.89 & 0.89 \\ 
\\
  \bottomrule
\multicolumn{5}{p{6cm}}{\small {\it Notes:} Monte Carlo simulation with 1000 repetitions as described in the main text. Contr: contraction algorithm; Brent: root-finding algorithm based on Brent's method. We use 2SLS estimates as starting values.}
\end{tabular}
\label{tab:size_401k}
\end{small}
\linespread{1.5}\parskip .05in
\end{table}

\section{Conclusion}
\label{sec:conclusion}

The main contribution of this paper is to develop computationally convenient and easy-to-implement estimation algorithms for IVQR models. Our key insight is that the non-smooth and non-convex IVQR estimation problem can be decomposed into a sequence of much more tractable convex QR problems, which can be solved very quickly using well-established methods. The proposed algorithms are particularly well-suited if the number of exogenous variables is large and the number of endogenous variables is moderate as in many empirical applications.

An interesting avenue for further research is to investigate weak identification robust inference within the decentralized model.  One may, for example, write the (re-scaled) sample fixed point restriction as $\sqrt N(I_\dparam-\hat K)(\theta)=s_N(\theta)+\mathbb W(\theta)+r_N(\theta)$, where $s_N(\theta)=\sqrt N(I_\dparam-K)(\theta),$ $\mathbb W$ is a Gaussian process, and $r_N$ is an error that tends to zero uniformly. This paper assumes that  $s_N(\true)=0$ uniquely, and outside  $N^{-1/2}$-neighborhoods of $\true$, $s_N(\theta)$ diverges and dominates $\mathbb W.$  For a one-dimensional fixed point problem,  this requires the BR map to be bounded away from the 45-degree line outside any $N^{-1/2}$-neighborhood of the fixed point. However if $s_N$ fails to dominate $\mathbb W$ over a substantial part of the parameter space, one would end up with weak identification.\footnote{\cite{AndrewsMikusheva2016} study weak identification robust inference methods in models characterized by moment restrictions.} How to conduct robust inference in such settings is an interesting question, which we leave for future research.

Finally, we note that, while we study the performance of the proposed algorithms separately, our reformulation and the resulting algorithms are potentially very useful when combined with other existing procedures. For instance, one could choose starting values using an initial grid search over a coarse grid and then apply a fast contraction algorithm.

\newpage

\appendix

\clearpage

\small

\setcounter{page}{1}

\thispagestyle{plain}
\begin{center}
{\textbf{APPENDIX (FOR ONLINE PUBLICATION)}}\\
\vspace{0.5cm}

{\sc Hiroaki Kaido and Kaspar W\"uthrich}\\

\sc \today
\end{center}

\section*{Table of Contents}

\startcontents[sections]
\printcontents[sections]{l}{1}{\setcounter{tocdepth}{1}}

\section{Overidentification and Efficiency}
\label{app:overid}

In the main text, we focus on just-identified moment restrictions with $d_Z=d_D$, for which the construction of an estimator is straightforward. If the model is overidentified (i.e.\ if $d_Z>d_D$), we can transform the original moment conditions
\begin{equation*}
E_P\left[ \left( 1\left\{ Y\leq X^\prime \param_1(\tau)+D_1 \param_2(\tau)+\dots+D_{d_D} \param_J(\tau)\right\} -\tau\right)\begin{pmatrix}X\\Z\end{pmatrix} \right]=0
\end{equation*}
into a set of just-identified moment conditions
\begin{equation}
E_P\left[ \left( 1\left\{ Y\leq X^\prime \param_1(\tau)+D_1 \param_2(\tau)+\dots+D_{d_D} \param_J(\tau)\right\} -\tau\right)\begin{pmatrix}X\\\tilde{Z}\end{pmatrix}\right] =0, \label{eq:just_id}
\end{equation}
where $\tilde{Z}$ is a $d_D\times 1$ vector of transformations of $(X,Z)$. A practical choice is to construct $\tilde{Z}$ using a least squares projection of $D$ on $Z$ and $X$.

To achieve pointwise (in $\tau$) efficiency, we can employ the following two-step procedure which is based on Remark 5 in \citet[][]{CH2006}:

\textbf{Step 1:} We first obtain an initial consistent estimate of $\true$ using one of our estimators based on a set of just-identified moment conditions such as \eqref{eq:just_id}. We then use nonparametric estimators to estimate the conditional densities $V(\tau)=f_{\varepsilon(\tau)|X,Z}(0)$ and $v(\tau)=f_{\varepsilon(\tau)|D,X,Z}(0)$, where $\varepsilon(\tau)=Y-X'\true_1(\tau)-D_1 \true_2(\tau)-\dots-D_{d_D} \true_J(\tau)$, and the conditional expectation function $E_P\left[D v(\tau) \mid X,Z\right]$.

\textbf{Step 2:}
We apply our procedure to obtain a solution to the following moment conditions:
\begin{equation}
E_P\left[ \left( 1\left\{ Y\leq X^\prime \param_1(\tau)+D_1 \param_2(\tau)+\dots+D_{d_D} \param_J(\tau)\right\} -\tau\right)\begin{pmatrix}V(\tau)X\\E_P\left[D v(\tau) \mid X,Z\right]\end{pmatrix}\right] =0.\label{eq:eff_equations}
\end{equation}
Consider players $j=1,\dots,J$ solving the following optimization problems:
\begin{align}
&\min_{\tilde\theta_1 \in \mathbb{R}^{d_X}}Q_{P,1}\left( \tilde\theta_1,\theta_{-1}\right)\label{eq:ivqr_br1}\\
&\min_{\tilde\theta_{j} \in \mathbb{R}}Q_{P,j}\left(\tilde\theta_{j} ,\theta_{-j}\right),~ j=2,\dots,J\label{eq:ivqr_br2},
\end{align}
where
\begin{align*}
Q_{P,1}\left(\theta(\tau)\right)&:= E_P\left[\rho_\tau\left( Y-X'\theta_1(\tau)-D_1\theta_2(\tau)-\dots-D_{d_D}\theta_J(\tau)\right)V(\tau)\right],\\
Q_{P,j}\left(\theta(\tau)\right)&:= E_P\left[\rho_\tau\left( Y-X'\theta_1(\tau)-D_1\theta_2(\tau)-\dots-D_{d_D}\theta_J(\tau)\right)\frac{E_P\left[Dv(\tau)  \mid X,Z\right]_{j-1}}{D_{j-1}}\right], ~j=2,\dots,J,
\end{align*}
and $E_P\left[D v(\tau) \mid X,Z\right]_{j-1}$ is the $j$-th element of $E_P\left[D v(\tau) \mid X,Z\right]$. For each $j,$ the BR function $L_j(\theta_{-j}(\tau))$, defined as a member of the set of minimizers of $Q_{P,j}(\cdot,\theta_{-j})$, solves a suitable subset of the moment conditions in \eqref{eq:eff_equations}.
The optimization problems in \eqref{eq:ivqr_br1}-\eqref{eq:ivqr_br2} are convex population QR problems provided that the model is parametrized such that $E_P\left[D v(\tau) \mid X,Z\right]_{j-1}/D_{j-1}$, $j=2,\dots,J$, is positive. Estimation can then proceed by replacing the population QR problems by their sample analogues and applying one of the estimation algorithms discussed in the main text.  By Corollary \ref{cor:asy_equivalence}, the resulting estimator is asymptotically equivalent to a GMM estimator that uses the optimal instrumental variables  and thus achieves pointwise (in $\tau$) efficiency \citep[e.g.,][]{Chamberlain1987}.\footnote{Corollary \ref{cor:asy_equivalence} can be applied to the current setting by replacing the original set of covariates and instruments $\Psi(\tau)=(X',Z')'$ in \eqref{eq:Vtilde} with the optimal instrumental variables $\Psi(\tau)=(V(\tau)',E_P[D_1v(\tau)  \mid X,Z]/D_1,\dots, E_P[D_{d_D}v(\tau)  \mid X,Z]/D_{d_D})'$.}

\section{Reparametrization}
\label{app_sec:reparametrization}

In the main text, we assume that the model is reparametrized such that $Z_\ell/D_\ell$ is positive for all $\ell=1,\dots,d_D$. This ensures that the weights are well-defined and that the weighted QR problems are convex. However, in empirical applications, the weights may not be well-defined (e.g., if $D_\ell$ is an indicator variable with $P(D_\ell=0)>0$) or negative in some instances. Assuming that $Z_\ell$ is positive, a simple way to reparametrize the model is to add a large enough constant $c$ to $D_\ell$.\footnote{Since the IVQR model is characterized by conditional moments (as in \eqref{eq:cond_mr}), one may choose transformations of instruments to generate unconditional moment conditions. In case $Z_\ell$ is not positive $a.s.$, one can use a positive transformation (e.g., a logistic function) of $Z_\ell$ instead of $Z_\ell$ itself. The decentralization and identification results then hold with the transformed instruments as long as they satisfy our assumptions.} This transformation is theoretically justified by the compactness of the support of $D_\ell$ (Assumption \ref{ass:global_decentralization}.\ref{ass:global_decentralization2}). To fix ideas, suppose that one is interested in estimating the following linear-in-parameters model with a single endogenous variable:
\begin{align*}
q(D,X,\tau)= \theta_{11}+\tilde{X}^{\prime}\theta_{12}+D\theta_2,
\end{align*}
where $\theta_1=(\theta_{11},\theta_{12}')'$ and $X=\left(1,\tilde{X}^{\prime}\right)^{\prime}$. Suppose further that the support of $D$ is a compact interval, $\left[d_{\min},d_{\max}\right]\subset \mathbb{R}$, with $d_{\min}<0$. In this case, we can apply the transformation $D^{\star}=D+c$, where $c>|d_{\min}|$. The transformed model reads
\begin{align*}
q(D,X,\tau)= \theta^\star_{11}+\tilde{X}^{\prime}\theta_{12}+D^\star\theta_2,
\end{align*}
where $\theta^{\star}_{11}=\theta_{11}-c\theta_2$. Importantly, one can always back out the original parameters, $\theta=(\theta_{11},\theta_{12}',\theta_2)'$, from the parameters in the reparametrized model,
$\theta^\star=(\theta_{11}^\star,\theta_{12}',\theta_2)'$.

\section{Decentralization}\label{app_sec:decentralization}
\subsection{The Domains of $M_j$-Maps}
Recall that, in \eqref{eq:tildeR1}, we defined the set
\begin{align*}
	\tilde R_1\equiv \big\{\theta_{-1}\in\Theta_{-1}:&\Z_{P,1}(\theta_1,\theta_{-1})=0,\\
	&\Z_{P,2}(\theta_1,\theta_2,\pi_{-\{1,2\}}\theta_{-1})=0,~\exists (\theta_1,\theta_2)\in\Theta_1\times\Theta_2\big\}.
\end{align*}
Similarly, for $k=2,\dots,d_D-1$, define
\begin{align*}
	\tilde R_k\equiv \big\{\theta_{-1}\in\Theta_{-1}:&\Z_{P,1}(\theta_1,\theta_{-1})=0,\\
	&\Z_{P,2}(\theta_1,\theta_2,\pi_{-\{1,2\}}\theta_{-1})=0,\\
	&\vdots\\
	&\Z_{P,k}(\theta_1,\dots,\theta_k,\pi_{-\{1,\dots,k\}}\theta_{-1})=0,~\exists (\theta_1,\dots,\theta_k)\in\prod_{j=1}^k \Theta_j\big\}.
\end{align*}
For $k=d_D$, let
\begin{align*}
	\tilde R_{d_D}\equiv \big\{\theta_{-1}\in\Theta_{-1}:&\Z_{P,1}(\theta_1,\theta_{-1})=0,\\
	&\Z_{P,2}(\theta_1,\theta_2,\pi_{-\{1,2\}}\theta_{-1})=0,\\
	&\vdots\\
	&\Z_{P,J}(\theta_1,\dots,\theta_J)=0,~\exists (\theta_1,\dots,\theta_J)\in\prod_{j=1}^J \Theta_j\big\}.
\end{align*}
For each $k$, the map $M_k$ is well-defined on $\tilde R_{k}$.
Note also that $\tilde R_{d_D}\subset \tilde R_j$ for all $j\le d_D$.

\subsection{Local Decentralization and Local Contractions}\label{app_sec:local_decentralization}
 We say that an estimation problem admits \emph{local decentralization} if the BR functions $L_j$, $j=1,\dots,J$, and the maps $K$ and $M$ are well-defined over a local neighborhood of $\true$. The following weak conditions are sufficient for local decentralization of the IVQR estimation problem.

\begin{assumption}
\label{ass:local_decentralization}
The following conditions hold.
\begin{enumerate}
	\item The conditional cdf $y\mapsto F_{Y|D,X,Z}(y)$ is continuously differentiable at $y^*=d'\true_{-1}+x'\true_1$ for almost all $(d,x,z)$. The conditional density $f_{Y|D,Z,X}$ is   bounded  on a neighborhood of $y^*$ $a.s.$; \label{ass:local_decentralization3}
	\item The matrices
	\[
	E_P[f_{Y|D,X,Z}\left(D'\true_{-1}+X^\prime \true_1\right)  X X']
	\]
	and
	\[
	E_P[f_{Y|D,X,Z}\left(D'\true_{-1} +X^\prime \true_1\right)D_{\ell}Z_{\ell} ],\quad \ell=1,\dots,d_D,
	\]
	 are positive definite. \label{ass:local_decentralization4}
\end{enumerate}
\end{assumption}

Assumption \ref{ass:local_decentralization} is weaker than Assumption \ref{ass:global_decentralization}.\ref{ass:global_decentralization3}--\ref{ass:global_decentralization}.\ref{ass:global_decentralization4} as it only requires the conditions, e.g., continuous differentiability of the conditional CDF, at a particular point, e.g. $y^*$.
Under this condition, we can study the local properties of our population algorithms. For this, the following lemma ensures that the BR maps are well-defined locally.

\begin{lemma}
\label{lemma:local_decentralization}
Suppose that Assumptions \ref{ass:ivqr}, \ref{ass:global_decentralization}.\ref{ass:global_decentralization1}--\ref{ass:global_decentralization}.\ref{ass:global_decentralization2}, and \ref{ass:local_decentralization} hold. Then, there exist open neighborhoods $\mathcal{N}_{L_{-j}},j= 1,\dots, J$, $\mathcal{N}_K$, $\mathcal{N}_M$ of $\true_{-j}$, $\true$, and $\true_{-1}$ such that
\begin{enumerate}[label=(\roman*)]
	\item There exist maps $L_j:\mathcal{N}_{-j}\to \mathbb{R}^{d_j}$, $j=1,\dots,J$ such that, for $j=1,\dots,J$,
\begin{eqnarray*}
\Z_{P,j}\left(\br_j(\param_{-j}),\param_{-j}\right)=0,\quad \text{for all}~\param_{-j}\in \mathcal{N}_{-j}
\end{eqnarray*}
Further, $L_j$ is continuously differentiable for all $j=1,\dots,J$.
\item The maps $K:\mathcal{N}_K\to \mathbb{R}^{\dparam}$ and $M:\mathcal{N}_M\to \mathbb{R}^{d_{D}}$ are continuously differentiable.
\end{enumerate}
\end{lemma}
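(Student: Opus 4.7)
The plan is to apply the implicit function theorem to each of the $J$ moment equations separately to obtain the local best-response maps $L_j$, and then assemble $K$ and $M$ from these via stacking and composition respectively. The preliminary step is to show that each $\Psi_{P,j}$ is continuously differentiable on an open neighborhood of $\true$ and to identify its partial derivatives.

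First, I would verify differentiability of $\theta\mapsto \Psi_{P,j}(\theta)$ near $\true$. Conditioning on $(D,X,Z)$ and writing
\begin{align*}
\Psi_{P,1}(\theta)&=E_P\big[\big(F_{Y\mid D,X,Z}(X'\theta_1+D'\theta_{-1})-\tau\big)X\big],\\
\Psi_{P,j}(\theta)&=E_P\big[\big(F_{Y\mid D,X,Z}(X'\theta_1+D'\theta_{-1})-\tau\big)Z_{j-1}\big],\quad j\ge 2,
\end{align*}
Assumption \ref{ass:local_decentralization}.\ref{ass:local_decentralization3} (continuous differentiability of $F_{Y\mid D,X,Z}$ at $y^*$ and local boundedness of the density) together with the second moment conditions in Assumption \ref{ass:global_decentralization}.\ref{ass:global_decentralization2} lets me differentiate under the expectation via dominated convergence, yielding
\begin{align*}
\frac{\partial \Psi_{P,1}(\theta)}{\partial \theta_1'}\bigg|_{\theta=\true}&= E_P\big[f_{Y\mid D,X,Z}(D'\true_{-1}+X'\true_1)\,XX'\big],\\
\frac{\partial \Psi_{P,j}(\theta)}{\partial \theta_j}\bigg|_{\theta=\true}&= E_P\big[f_{Y\mid D,X,Z}(D'\true_{-1}+X'\true_1)\,D_{j-1}Z_{j-1}\big],\quad j\ge 2,
\end{align*}
with continuity of the full Jacobian at $\true$ following from the continuity of $f_{Y\mid D,X,Z}$ at $y^*$ combined with bounded convergence. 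By Assumption \ref{ass:local_decentralization}.\ref{ass:local_decentralization4}, the $\theta_j$-block is invertible at $\true$ for every $j$. Since $\Psi_{P,j}(\true)=0$ by the IVQR moment restriction, the implicit function theorem delivers, for each $j$, an open neighborhood $\mathcal N_{-j}$ of $\true_{-j}$ and a continuously differentiable map $L_j:\mathcal N_{-j}\to \mathbb R^{d_j}$ satisfying $L_j(\true_{-j})=\true_j$ and $\Psi_{P,j}(L_j(\theta_{-j}),\theta_{-j})=0$ on $\mathcal N_{-j}$. This proves part (i).

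For part (ii), I would take $\mathcal N_K$ to be any open neighborhood of $\true$ contained in $\bigcap_{j=1}^J \pi_{-j}^{-1}(\mathcal N_{-j})$; on this set $K(\theta)=(L_1(\theta_{-1})',\dots,L_J(\theta_{-J})')'$ is $C^1$ coordinatewise. For $M$, I would build $\mathcal N_M$ recursively: start with $\mathcal N_{-1}$ and, using continuity of $L_1$ at $\true_{-1}$ together with $L_1(\true_{-1})=\true_1$, shrink it so that $(L_1(\theta_{-1}),\pi_{-\{1,2\}}\theta_{-1})\in \mathcal N_{-2}$ for all $\theta_{-1}$ in the shrunken set; iterating this argument along the nested definition \eqref{eq:def_M} produces a neighborhood on which each coordinate of $M$ is a well-defined composition of the $L_j$'s, and the chain rule delivers continuous differentiability.

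The main obstacle is the first step: because the integrand in $\Psi_{P,j}$ contains an indicator, continuous differentiability is not immediate, and the interchange of differentiation and expectation must be justified carefully using the local boundedness and pointwise continuity of the conditional density supplied by Assumption \ref{ass:local_decentralization}.\ref{ass:local_decentralization3}. Once the Jacobian expressions and their continuity at $\true$ are in hand, the remaining work—invoking the implicit function theorem, intersecting neighborhoods, and applying the chain rule—is routine.
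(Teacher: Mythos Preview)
Your proposal is correct and follows essentially the same route as the paper's proof: establish continuous differentiability of each $\Psi_{P,j}$ on a neighborhood of $\true$ via dominated convergence, invoke the implicit function theorem using the invertibility of the $\theta_j$-block supplied by Assumption \ref{ass:local_decentralization}.\ref{ass:local_decentralization4}, and then assemble $K$ and $M$ by intersecting/composing neighborhoods and applying the chain rule. The paper's proof is terser (it refers back to the global version, Lemma \ref{lem:global_decentralization}, and defines $\mathcal N_M$ by mimicking the construction of $\tilde R_j$), but your more explicit recursive construction of $\mathcal N_M$ is equivalent.
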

\begin{proof}
  (i) The proof is similar to that of Lemma \ref{lem:global_decentralization} (see Appendix \ref{sec:appendix_proof_sec3}). Therefore, we sketch the argument below for $j=1$.
  By Assumptions \ref{ass:global_decentralization}.\ref{ass:global_decentralization2} and \ref{ass:local_decentralization}.\ref{ass:local_decentralization3}, $\Z_{P,1}$ is continuously differentiable on a neighborhood $V$ of $\true$.   By Assumption  \ref{ass:local_decentralization}.\ref{ass:local_decentralization4} and the continuity of $\det(\partial\Z_{P,1}(\theta)/\partial\theta_1')$, one may choose $V$ so that $\det(\partial\Z_{P,1}(\theta)/\partial\theta_1')\ne 0$ for all $\theta=(\theta_1,\theta_{-1})\in V$.
  By the implicit function theorem, there is a continuously differentiable function $L_1$ and an open set $\mathcal N_{-1}$ containing $\theta_{-1}$ such that
  \begin{align*}
  \Z_{P,1}(L_1(\theta_{-1}),\theta_{-1})=0,\text{ for all }\theta_{-1}\in \mathcal N_{-1}.
  \end{align*}
  The arguments for $L_j$, $j\ne 1$ are similar. 
  
  (ii) Let $\mathcal N_K=\{\theta\in\Theta: \pi_{-j}\theta\in \mathcal N_{-j},~ j=1,\dots,J\}$ and let $\mathcal N_M$ be defined by mimicking \eqref{eq:tildeR1}, while replacing $\Theta_j$ with $\mathcal N_j$ in the definition of $\tilde R_j$ for $j=1,\dots,J$. The continuous differentiability of $K$ and $M$ follows from that of $L_j$, $j=1,\dots,J$.
\end{proof}

\subsubsection{Local Contractions}
\label{app:local_contractions}
Recall that  $\rho(A)$ denotes the spectral radius of a square matrix $A$.
The following assumption ensures that $K$ and $M$ are local contractions.
\begin{assumption} \label{ass:contraction_local}
\text{ }
\begin{enumerate}
	\item	 \label{ass:contraction_local_K}
$\rho\left(J_K\left(\true\right) \right)<1$
	\item \label{ass:contraction_local_M}
$\rho\left(J_M\left(\true_2 \right) \right)<1$\
\end{enumerate}
\end{assumption}
Here, we illustrate a primitive condition for Assumption \ref{ass:contraction_local}. Consider a simple setup without covariates (i.e. $X=1$), a binary $D$, and a binary $Z$. We only analyze Assumption \ref{ass:contraction_local}.\ref{ass:contraction_local_K}. A similar result can be derived for Assumption \ref{ass:contraction_local}.\ref{ass:contraction_local_M}. In this setting, the Jacobian of $K$ evaluated at $\true$ is given by
\begin{eqnarray*}
J_K(\true)=\begin{pmatrix}0 &-\frac{E_P\left[f_{Y|D,Z}\left(D \true_2 +\true_1\right) D \right]}{E_P\left[f_{Y|D,Z}\left(D \true_2 + \true_1\right)\right]}\\ -\frac{E_P\left[f_{Y|D,Z}\left(D \true_2 +\true_1\right) Z \right]}{E_P\left[f_{Y|D,Z}\left(D \true_2 +\true_1\right) ZD \right]}&0 \end{pmatrix}.
\end{eqnarray*}
The characteristic polynomial is then given by
\begin{eqnarray*}
p_K(\lambda)&=&\lambda^2-\frac{E_P\left[f_{Y|D,Z}\left(D \true_2 +\true_1\right) D\right]}{E_P\left[f_{Y|D,Z}\left(D \true_2 + \true_1\right)\right]}\frac{E_P\left[f_{Y|D,Z}\left(D \true_2 +\true_1\right) Z \right]}{E_P\left[f_{Y|D,Z}\left(D \true_2 +\true_1\right) ZD \right]}.
\end{eqnarray*}
Hence, Assumption \ref{ass:contraction_global}.\ref{ass:contraction_global_K} holds if all eigenvalues (i.e.\ the roots $\lambda_K$ of $p_K(\lambda)=0$) have modulus less than one, which holds when
\begin{eqnarray*}
\bigg|\frac{E_P\left[f_{Y|D,Z}\left(D \true_2 +\true_1\right) D \right]}{E_P\left[f_{Y|D,Z}\left(D \true_2 + \true_1\right)\right]}\frac{E_P\left[f_{Y|D,Z}\left(D \true_2 +\true_1\right) Z \right]}{E_P\left[f_{Y|D,Z}\left(D \true_2 +\true_1\right) ZD \right]}\bigg|<1 .
\end{eqnarray*}
This condition can be simplified to
\begin{eqnarray}
f_{Y|0,1}(\true_1)p(0|1)f_{Y|1,0}(\true_2+\true_1)p(1|0)< f_{Y|1,1}(\true_2+\true_1)p(1|1)f_{Y|0,0}(\true_1)p(0|0) \label{eq:contraction_K_late},
\end{eqnarray}
where $f_{Y|d,z}(y):=f_{Y|D=d,Z=z}(y)$ and $p(d|z):=P(D=d\mid Z=z)$.
It is instructive to interpret condition \eqref{eq:contraction_K_late} under the local average treatment effects framework of \citet{AngristImbens1994}. Specifically, condition \eqref{eq:contraction_K_late} holds if (i) their monotonicity assumption is such that there are compliers but no defiers and (ii) the complier potential outcome density functions are strictly positive. Conversely, the condition is violated if there are defiers but no compliers.

Under the local contraction conditions in Assumption \ref{ass:contraction_local}, we have the following results.

\begin{proposition}\label{prop:local_contraction} Suppose that Assumptions \ref{ass:ivqr}, \ref{ass:global_decentralization}.\ref{ass:global_decentralization1}, \ref{ass:global_decentralization}.\ref{ass:global_decentralization2}, and \ref{ass:local_decentralization} hold.
\begin{enumerate}[label=(\roman*)]
\item Suppose further that Assumption \ref{ass:contraction_local}.\ref{ass:contraction_local_K}  holds. Then there exists a closed neighborhood $\bar{\mathcal{N}}_K$ of $\true$ such that $K(\bar{\mathcal{N}}_K)\subset \bar{\mathcal{N}}_K$ and $K$ is a contraction on $\bar{\mathcal{N}}_K$ with respect to an adapted norm.
\item Suppose further that Assumption \ref{ass:contraction_local}.\ref{ass:contraction_local_M}  holds. Then there exists a closed neighborhood $\bar{\mathcal{N}}_M$ of $\true_2$ such that $M(\bar{\mathcal{N}}_M)\subset \bar{\mathcal{N}}_M$ and $M$ is a contraction on $\bar{\mathcal{N}}_M$ with respect to an adapted norm.
\end{enumerate}
\end{proposition}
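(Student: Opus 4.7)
\begin{varproof}[Proof Proposal]
The plan is to combine the continuous differentiability of $K$ and $M$ near their fixed points (Lemma \ref{lemma:local_decentralization}) with the classical fact that a square matrix whose spectral radius is strictly less than one admits an adapted operator norm in which it acts as a strict contraction. Specifically, for any matrix $A$ with $\rho(A)<1$ and any $\epsilon>0$, one can construct a vector norm $\|\cdot\|_*$ on $\mathbb R^d$ (via, e.g., Jordan canonical form with a diagonal rescaling) whose induced operator norm satisfies $\|A\|_*\le \rho(A)+\epsilon$. I will invoke this fact as a standard lemma from linear algebra.

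For part (i), note first that $\true$ is a fixed point of $K$, since $\Z_P(\true)=0$ together with Proposition \ref{prop:equivalence} yields $K(\true)=\true$. By Assumption \ref{ass:contraction_local}.\ref{ass:contraction_local_K}, $\rho_0:= \rho(J_K(\true))<1$. Choose any $\lambda\in (\rho_0,1)$ and set $\epsilon = (\lambda - \rho_0)/2$. Construct the adapted norm $\|\cdot\|_*$ on $\mathbb R^\dparam$ so that $\|J_K(\true)\|_* \le \rho_0+\epsilon < \lambda$. By Lemma \ref{lemma:local_decentralization}, $K$ is continuously differentiable on a neighborhood of $\true$, so $\theta\mapsto \|J_K(\theta)\|_*$ is continuous there. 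Hence there exists $r>0$ such that on the closed ball $\bar{\mathcal N}_K:= \{\theta:\|\theta-\true\|_*\le r\}$ (chosen small enough to lie inside the domain of differentiability) one has $\|J_K(\theta)\|_*\le \lambda$ for all $\theta\in\bar{\mathcal N}_K$.

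The ball $\bar{\mathcal N}_K$ is convex in the adapted norm, so by the mean value inequality for differentiable maps on convex sets,
\begin{equation*}
\|K(\theta)-K(\theta')\|_*\le \lambda\,\|\theta-\theta'\|_*,\quad\text{for all }\theta,\theta'\in \bar{\mathcal N}_K.
\end{equation*}
In particular, taking $\theta'=\true$ and using $K(\true)=\true$ gives $\|K(\theta)-\true\|_*\le \lambda r<r$, so $K(\bar{\mathcal N}_K)\subset \bar{\mathcal N}_K$. This shows $K$ is a strict contraction on $\bar{\mathcal N}_K$ in the adapted norm. Part (ii) follows from the identical argument applied to $M$ on $\mathbb R^{\dparam_D}$, using $M(\true_{-1})=\true_{-1}$ from Proposition \ref{prop:equivalence} and the continuous differentiability of $M$ from Lemma \ref{lemma:local_decentralization}, invoking Assumption \ref{ass:contraction_local}.\ref{ass:contraction_local_M} to obtain the adapted norm in which $\|J_M(\true_{-1})\|_*<1$.

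The only nontrivial ingredient is the construction of the adapted norm; everything else (continuous differentiability of the best response maps, the fixed-point identity at $\true$, the convexity of balls, and the mean value inequality) is either given by results already established in the paper or is standard. I therefore expect no real obstacle, though care must be taken to choose the neighborhood as a closed ball in the adapted norm (rather than in the Euclidean norm) in order to obtain the invariance $K(\bar{\mathcal N}_K)\subset \bar{\mathcal N}_K$ cleanly.
\end{varproof}
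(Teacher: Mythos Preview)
Your proof is correct and follows essentially the same approach as the paper. The paper simply invokes a standard dynamical-systems result (Proposition 2.2.19 in Hasselblatt--Katok) after noting that $J_K$ is continuous at $\true$ by Lemma \ref{lemma:local_decentralization}; you have written out exactly the proof of that result---the adapted-norm construction, continuity of $\theta\mapsto\|J_K(\theta)\|_*$, the mean value inequality on a convex ball, and invariance via $K(\true)=\true$.
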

\begin{proof}
We only prove the result for $K$, the proof for $M$ is similar. By Lemma \ref{lemma:local_decentralization}, $L_j$ is continuously differentiable at $\true$. Note that $J_K$ is given by
\begin{align}
  J_K(\theta)=\begin{bmatrix}
    0&\frac{\partial L_1(\theta_{-1})}{\partial \theta_2'}&\dots&\dots&\frac{\partial L_1(\theta_{-1})}{\partial \theta_J'}\\
    \frac{\partial L_2(\theta_{-2})}{\partial \theta_1'}&0&\frac{\partial L_2(\theta_{-2})}{\partial \theta_3'}&\dots&\frac{\partial L_2(\theta_{-2})}{\partial \theta_J'}\\
    \vdots&\vdots& \vdots &\vdots & \vdots\\
     \frac{\partial L_J(\theta_{-J})}{\partial \theta_1'} &\cdots & \cdots & \frac{\partial L_J(\theta_{-J})}{\partial \theta_{J-1}'}&0
\end{bmatrix},
\end{align}
which is continuous at $\true$. The desired result now follows, for instance, from Proposition 2.2.19 in \citet{HasselblattKatok2003}.
\end{proof}

\subsection{Nested Algorithms: Existence and Uniqueness of Fixed Points in Sub-games}\label{app:nesting}
Here we discuss two different sets of conditions which ensure that the nested fixed point algorithms in Section \ref{sec:root_finder} are well-defined. Specifically, we present conditions for the existence and uniqueness of fixed points in the sub-games. Section \ref{app:nested_M} considers contraction-based identification conditions. In Section \ref{app:nested_global}, we briefly discuss global identification conditions. To illustrate, we consider the case with three players ($J=3$). 

\subsubsection{Contraction-based Identification.}\label{app:nested_M} Suppose that Assumptions \ref{ass:ivqr}--\ref{ass:contraction_global} hold. Consider a sub-game formed by players 1 and 2 given some $\tilde\theta_3$. We assume that $\tilde\theta_3$ is chosen so that $\mathsf{D}_{M_{1,2|3}}$ (defined below) is non-empty.  Let the moment conditions for the sub-game be defined as
	\begin{align*}
	\mathsf{\Z}_{P,1}(\theta_1,\theta_2)\equiv\Z_{P,1}(\theta_1,\theta_{2},\tilde\theta_3)&= E_P\left[ \left( 1\left\{ Y\leq X^\prime \theta_1
	+D_1\param_{2}+D_2\tilde\theta_3\right\} -\tau\right)X \right], \\
	\mathsf{\Z}_{P,2}(\theta_1,\theta_2)\equiv\Z_{P,2}(\theta_1,\theta_{2},\tilde\theta_3)&= E_P\left[ \left( 1\left\{ Y\leq X^\prime \theta_1
	+D_1\param_{2}+D_2\tilde\theta_3
	\right\} -\tau\right)Z_{1} \right].
	\end{align*}
Note that $\mathsf{\Z}_{P,1}$ and $\mathsf{\Z}_{P,2}$ and other objects below depend on $\tilde\theta_3$. We will often suppress this dependence to alleviate the notation. Moreover, define
	\begin{align*}
		\mathsf{R}_1&:=\{\theta_1\in\Theta_1:\mathsf{\Psi}_{P,2}(\theta_1,\theta_2)=0,\text{ for some }\theta_2\in\Theta_2\},\\
		\mathsf{R}_2&:=\{\theta_2\in\Theta_2:\mathsf{\Psi}_{P,1}(\theta_1,\theta_2)=0,\text{ for some }\theta_1\in\Theta_1\}.
	\end{align*}
Assumptions \ref{ass:ivqr}--\ref{ass:global_decentralization} and Lemma \ref{lem:global_decentralization} guarantee that BR functions $\mathsf{L}_1$ and $\mathsf{L}_2$,  where
	\begin{align}
		&\mathsf{\Psi}_{P,1}(\mathsf{L}_1(\theta_2),\theta_2)=0,\quad \text{for all } \theta_{2}\in \mathsf{R}_2,\label{eq:subgameL1}\\
		&\mathsf{\Psi}_{P,2}(\theta_1,\mathsf{L}_2(\theta_1))=0,\quad \text{for all } \theta_{1}\in \mathsf{R}_1,\label{eq:subgameL2}
	\end{align}
are well-defined. 
The $M$ map for the sub-system is 
	\begin{align}
	M_{1,2|3}(\theta_2\mid \tilde\theta_3)=\mathsf{L}_2(\mathsf{L}_1(\theta_2)),
	\end{align}
This map exists and is well-defined on 
\begin{align*}
	\mathsf{D}_{M_{1,2|3}}:=\{\theta_2\in\Theta_2:&\mathsf{\Psi}_{P,1}(\theta_1,\theta_2)=0,\mathsf{\Psi}_{P,2}(\theta_1,\tilde\theta_2)=0,\text{ for some }(\theta_1,\tilde\theta_2)\in\Theta_1\times \Theta_2,(\theta_2,\tilde\theta_3)\in \tilde{D}_M\}.
\end{align*}
Note that, by Assumption \ref{ass:contraction_global}, for any  $(\theta_2,\theta_3)\in \tilde{D}_M$,
\[
 \|J_M(\theta_2,\theta_3)\|\le \lambda<1.
\]

Given these definitions, we can now investigate the sub-system.
Observe that the derivative $J_{M_{1,2|3}}(\theta_2)$ of $M_{1,2|3}$ with respect to $\theta_2$  is a component of $J_M.$ In particular, for any $(\theta_2,\theta_3)$, $J_M$ may be written as
\begin{align}
J_M(\theta_2,\theta_3)=\begin{bmatrix}
	\frac{\partial M_1(\theta_2,\theta_3)}{\partial\theta_2}&\frac{\partial M_1(\theta_2,\theta_3)}{\partial\theta_3}\\
	\frac{\partial M_2(\theta_2,\theta_3)}{\partial\theta_2}&\frac{\partial M_2(\theta_2,\theta_3)}{\partial\theta_3}
\end{bmatrix}	=
\begin{bmatrix}
	J_{M_{1,2|3}}(\theta_2)&\frac{\partial M_1(\theta_2,\theta_3)}{\partial\theta_3}\\
	\frac{\partial M_2(\theta_2,\theta_3)}{\partial\theta_2}&\frac{\partial M_2(\theta_2,\theta_3)}{\partial\theta_3}
\end{bmatrix}.
\end{align}
Let $V_{12}\equiv \{x\in \mathbb R^{2}:x=(x_1,0)',x_1\in\mathbb R\}$. Then, by the definition of the operator norm \citep[see e.g.][]{bhatia2013matrix},
\begin{align*}
	\|J_M(\theta_2,\theta_3)\|&\equiv \sup_{x,y\in \mathbb R^{2}:\|x\|=\|y\|=1}|x'J_M(\theta_2,\theta_3)y|\\
	&\ge \sup_{x,y\in V_{12}:\|x\|=\|y\|=1}|x'J_M(\theta_2,\theta_2)y|\\
	&=\sup_{\|x_1\|=\|y_1\|=1}|x_1'J_{M_{1,2|3}}(\theta_2)y_1|=\|J_{M_{1,2|3}}(\theta_2)\|.
\end{align*}
Hence, it follows that
\begin{align*}
	\|J_{M_{1,2|3}}(\theta_2)\| \le \|J_{M}(\theta_2,\tilde\theta_3)\| \le \lambda<1,~\text{ for all }\theta_2 \in \mathsf{D}_{K_{1,2|3}},
\end{align*}
which is an analog of Assumption \ref{ass:contraction_global}  for the sub-game. Then, Proposition \ref{prop:contraction_global} ensures the existence and uniqueness of the fixed point in the sub-game. In this section, we focused on identification conditions based on the dynamical system $M$. Similar arguments can be used to establish identification based on the dynamical system $K$.

\subsubsection{Global Identification Conditions.}\label{app:nested_global}

To ensure the existence and uniqueness of the fixed point in the sub-game, we can alternatively rely on the existing global identification conditions in \citet{CH2006} (cf.\ Section \ref{ssec:identification_algorithms} and Lemma \ref{lem:CH}) and Proposition \ref{prop:equivalence}. For every value $\tilde\theta_3\in\Theta_3$, this requires analogues of the conditions in Lemma \ref{lem:CH} to hold for the sub-game between players 1 and 2.

\section{Additional Simulation Results} \label{app:additional_simulation}

\subsection{Bias and RMSE Application-based DGP}
\label{app:bias_rmse}

Here we report simulation evidence on the finite sample bias and RMSE of the different IVQR estimators based on the application-based DGPs in Section \ref{sec:simulation_study}. Tables \ref{tab:401k_1}--\ref{tab:401k_2} present the results. We find that all the proposed algorithms perform well and exhibit comparable bias and RMSE properties.  In particular, the finite sample performances of our preferred estimators are comparable to IQR and the profiling estimator, which shows that their computational advantages do not come at a cost in terms of the finite sample performance.

\begin{table}[ht]
\linespread{1.15}\parskip .05in
\begin{scriptsize}
\centering
\caption{Bias and RMSE, 401(k) DGP with one endogenous regressor}
\begin{tabular}{lcccccccc}
\toprule

& \multicolumn{4}{c}{Bias/$10^2$} & \multicolumn{4}{c}{RMSE/$10^3$}\\
\cmidrule(l{5pt}r{5pt}){2-5}  \cmidrule(l{5pt}r{5pt}){6-9} \

$\tau$& Contr & Brent& Profil & InvQR &  Contr & Brent & Profil& InvQR \\
 \midrule
  \\
0.15 & -4.43 & -6.84 & -6.71 & -7.26 & 7.58 & 7.71 & 7.64 & 7.87 \\ 
  0.25 & -0.28 & -1.86 & -1.90 & -1.67 & 4.04 & 4.10 & 4.11 & 4.10 \\ 
  0.50 & -1.04 & -1.46 & -1.49 & -1.23 & 2.06 & 2.07 & 2.07 & 2.08 \\ 
  0.75 & -1.60 & -1.31 & -1.48 & -1.14 & 1.85 & 1.85 & 1.86 & 1.85 \\ 
  0.85 & 0.61 & 1.17 & 0.85 & 1.24 & 2.06 & 2.07 & 2.07 & 2.08 \\ 
\\
  \bottomrule
\multicolumn{9}{p{10cm}}{\scriptsize {\it Notes:} Monte Carlo simulation with 500 repetitions as described in the main text. Contr: contraction algorithm; Brent: root-finding algorithm based on Brent's method; Profil: profiling estimator based on Brent's method; InvQR: inverse quantile regression. We use 2SLS estimates as starting values.}
\end{tabular}
\label{tab:401k_1}
\end{scriptsize}
\linespread{1.5}\parskip .05in
\end{table}

\begin{table}[ht]
\linespread{1.15}\parskip .05in
\begin{scriptsize}
\centering
\caption{Bias and RMSE, 401(k) DGP with two endogenous regressors}
\begin{tabular}{lcccccccccc}
\toprule
& \multicolumn{5}{c}{Bias/$10^2$} & \multicolumn{5}{c}{RMSE/$10^3$}\\
\cmidrule(l{5pt}r{5pt}){2-6}  \cmidrule(l{5pt}r{5pt}){7-11} \

$\tau$& Contr & NestBr&SimAnn &Profil& InvQR & Contr & NestBr&SimAnn&Profil& InvQR \\
 \midrule
 \\
  \multicolumn{11}{c}{\emph{Coefficient on binary endogenous variable}}\\
  \\
0.15 & -4.81 & -2.73 & 3.77 & -3.65 & -7.77 & 8.29 & 7.84 & 6.74 & 7.95 & 8.60 \\ 
  0.25 & -3.47 & -3.75 & -3.17 & -3.83 & -3.42 & 4.32 & 4.31 & 4.25 & 4.31 & 4.32 \\ 
  0.50 & 0.84 & 0.56 & 0.68 & 0.71 & 0.74 & 1.93 & 1.95 & 1.95 & 1.95 & 1.98 \\ 
  0.75 & -0.56 & -0.33 & -0.37 & -0.57 & -0.25 & 1.75 & 1.74 & 1.74 & 1.74 & 1.78 \\ 
  0.85 & -1.03 & -0.72 & -0.74 & -1.28 & -0.61 & 2.18 & 2.19 & 2.20 & 2.19 & 2.22 \\ 
  
\\

  \multicolumn{11}{c}{\emph{Coefficient on continuous endogenous variable}}\\
  \\
0.15 & 2.00 & 0.48 & 4.72 & -0.03 & 0.23 & 1.07 & 1.07 & 2.20 & 1.07 & 1.19 \\ 
  0.25 & 1.72 & -0.09 & 0.25 & -0.48 & -0.10 & 1.00 & 1.02 & 1.13 & 1.03 & 1.13 \\ 
  0.50 & 0.75 & -0.47 & -0.45 & -0.54 & -0.68 & 0.89 & 0.97 & 0.97 & 0.97 & 1.11 \\ 
  0.75 & -1.43 & -0.49 & -0.44 & -1.57 & -0.15 & 0.99 & 1.10 & 1.11 & 1.12 & 1.24 \\ 
  0.85 & -2.66 & -0.89 & -0.91 & -2.60 & -0.40 & 1.12 & 1.27 & 1.27 & 1.28 & 1.32 \\
\\

  \bottomrule
\multicolumn{11}{p{13.5cm}}{\scriptsize {\it Notes:} Monte Carlo simulation with 500 repetitions as described in the main text. Contr: contraction algorithm; NestBr: nested algorithm based Brent's method; SimAnn: simulated annealing based optimization algorithm; Profil: nested profiling estimator based on Brent's method; IQR: inverse quantile regression. We use 2SLS estimates as starting values.}
\end{tabular}
\label{tab:401k_2}
\end{scriptsize}
\linespread{1.5}\parskip .05in
\end{table}

\subsection{Three Endogenous Variables}
\label{app:sims_3}

Here we present additional simulation evidence with three endogenous variables. We consider the application-based DGP of Section \ref{sec:simulation_study}, augmented with an additional endogenous variable:
\begin{eqnarray}
Y_i = X_i'\theta_X(U_i)+D_{i}\theta_{D}(U_i)+D_{2,i}\theta_{D,2}(U_i) + D_{3,i}\theta_{D,3}(U_i) + G^{-1}(U_i),
\end{eqnarray}
where $\theta_{D,3}(U_i)=10000$, $D_{3,i}=0.8 \cdot Z_{3,i}+0.2 \cdot \Phi^{-1}(U_i)$, and $Z_{3,i}\sim N(0,1)$. We only report the results based on the contraction algorithm and the nested fixed point algorithm. We do not report results for IQR, which we found to be computationally prohibitive with three endogenous regressors. Table \ref{tab:401k_3} shows that both methods exhibit similar performances in terms of bias and RMSE, which are comparable to their respective performances with two endogenous regressors. Table \ref{tab:cpu3} displays average computation times. As expected, the computational advantages of the contraction algorithm relative to the nested fixed point algorithm are more pronounced than with two endogenous variables.

\begin{table}[ht]
\linespread{1.15}\parskip .05in
\begin{scriptsize}
\centering
\caption{Bias and RMSE, 401(k) DGP with three endogenous regressors}
\begin{tabular}{lcccc}
\toprule
& \multicolumn{2}{c}{Bias/$10^2$} & \multicolumn{2}{c}{RMSE/$10^3$}\\
\cmidrule(l{5pt}r{5pt}){2-3}  \cmidrule(l{5pt}r{5pt}){4-5} \

$\tau$& Contr &Nested & Contr & Nested \\
 \midrule
 \\
  \multicolumn{5}{c}{\emph{Coefficient on $D$}}\\
  \\
0.15 & -3.40 & -5.02 & 7.52 & 7.62 \\ 
  0.25 & -0.98 & -1.52 & 4.11 & 4.12 \\ 
  0.50 & -1.04 & -1.42 & 2.03 & 2.06 \\ 
  0.75 & -1.67 & -1.50 & 1.86 & 1.86 \\ 
  0.85 & 0.86 & 1.03 & 2.05 & 2.05 \\ 
\\

  \multicolumn{5}{c}{\emph{Coefficient on $D_2$}}\\
  \\
0.15 & 1.12 & -0.13 & 1.01 & 1.03 \\ 
  0.25 & 2.11 & 0.74 & 1.01 & 1.00 \\ 
  0.50 & 0.80 & -0.28 & 0.94 & 0.99 \\ 
  0.75 & -0.39 & 0.48 & 1.00 & 1.09 \\ 
  0.85 & -2.64 & -0.83 & 1.13 & 1.25 \\ 
  \\
  \multicolumn{5}{c}{\emph{Coefficient on $D_3$}}\\
  \\
0.15 & 1.70 & -0.25 & 1.08 & 1.13 \\ 
  0.25 & 1.57 & -0.21 & 0.99 & 1.01 \\ 
  0.50 & 0.95 & -0.06 & 0.92 & 0.96 \\ 
  0.75 & -1.15 & -0.33 & 1.02 & 1.11 \\ 
  0.85 & -1.01 & 0.23 & 1.22 & 1.37 \\ 
\\
  \bottomrule
\multicolumn{5}{p{5.5cm}}{\scriptsize {\it Notes:} Monte Carlo simulation with 500 repetitions as described in the main text. Contr: contraction algorithm; Nested: nested algorithm based on Brent's method. We use 2SLS estimates as starting values.}
\end{tabular}
\label{tab:401k_3}
\end{scriptsize}
\linespread{1.5}\parskip .05in
\end{table}

\begin{table}[ht]
\linespread{1.15}\parskip .05in
\begin{footnotesize}
\centering
\caption{Computation time, 401(k) DGP with three endogenous regressor}
\begin{tabular}{lcc}
\toprule

$N$& Contr & Nested\\
 \midrule
  \\
1000 & 0.36 & 6.29 \\ 
  5000 & 4.47 & 42.93 \\ 
  10000 & 10.11 & 145.58 \\ 

\\
  \bottomrule
\multicolumn{3}{p{4cm}}{\footnotesize {\it Notes:}  The table reports average computation time in seconds at $\tau=0.5$ over 20 simulation repetitions based on the DGP described in the main text. Contr: contraction algorithm; Nested: nested algorithm based on Brent's method. We use 2SLS estimates as starting values.}
\end{tabular}
\label{tab:cpu3}
\end{footnotesize}
\linespread{1.5}\parskip .05in
\end{table}

\subsection{Additional Simulations Simple Location Scale DGP}
This section presents some additional simulation evidence based on the following location-scale shift model:
\begin{equation}
Y_i= \gamma_1+ \gamma_2X_i+\gamma_3D_{1,i} + \gamma_4D_{2,i}+\left(\gamma_5+\gamma_6 D_{1,i}+\gamma_7D_{2,i}\right)U_i
\end{equation}
Here $D_{1,i}$ and $D_{2,i}$ are the endogenous variables of interest and $X_{i}$ is an exogenous covariate. In addition, we have access to two instruments $Z_{1,i}$ and $Z_{2,i}$. For $\gamma_2=\gamma_4=\gamma_7=0$, this model reduces to the model considered in Section 6.1 of \citet{AndrewsMikusheva2016}. We set $\gamma_1=\dots = \gamma_7=1$. To evaluate the performance of our algorithms with one endogenous variable, we set $\gamma_4=\gamma_7=0$ and use $Z_{1i}$ as the  instrument. Following \citet{AndrewsMikusheva2016}, we consider a symmetric as well as an asymmetric DGP:
\begin{eqnarray*}
(U_i,D_{1,i},D_{2,i},Z_{1,i},Z_{2,i},X_i)&=&\left(\Phi(\xi_{U,i}),\Phi(\xi_{D_1,i}),\Phi(\xi_{D_2,i}),\Phi(\xi_{Z_1,i}),\Phi(\xi_{Z_2,i}), \Phi(\xi_{X,i})\right),~ \text{(symmetric)}\\
(U_i,D_{1,i},D_{2,i},Z_{1,i},Z_{2,i},X_i)&=&\left(\xi_{U,i},\exp(2\xi_{D_1,i}),\exp(2\xi_{D_2,i}),\xi_{Z_1,i},\xi_{Z_2,i}, \xi_{X,i}\right),~ \text{(asymmetric)}
\end{eqnarray*}
where $\left(\xi_{U,i}, \xi_{D_1,i},\xi_{D_2,i},\xi_{Z_1,i},\xi_{Z_2,i}, \xi_{X,i}\right)$ is a Gaussian vector with mean zero, all variances are set equal to one, $Cov(\xi_U,\xi_{D_1})=Cov(\xi_U,\xi_{D_2})=0.5$, $Cov(\xi_{D_1},\xi_{Z_1})=0.8$, $Cov(\xi_{D_2},\xi_{Z_2})=0.4$, which allows us to investigate the impact of instrument strength, all other covariances are equal to zero, and $\Phi$ is the cumulative distribution function of the standard normal distribution.

We first investigate the bias and RMSE of the different methods. Tables \ref{tab:sy1}--\ref{tab:asy2} present the results. With one endogenous variable, the performances of the root-finding algorithm using Brent's method, the profiling estimator, and IQR are similar both in terms of bias and RMSE. The contraction algorithm performs well, but exhibits some bias at the tail quantiles. Turning to the results with two endogenous variables, we can see that the nested algorithm exhibits the best overall performance, both in terms of bias and RMSE. The performances of the SA-based optimization algorithm, IQR, and the profiling estimator are similar and only slightly worse than that of the nested algorithm. The contraction algorithm tends to exhibit some bias at the tail quantiles. However, this bias decreases substantially as the sample size gets larger. 
Finally, comparing the results for the coefficients on $D_1$ and $D_2$, we can see that the instrument strength matters for the performance of all estimators (including IQR), suggesting that weak identification can have implications for the estimation of IVQR models.

\begin{table}[H]
\linespread{1.15}\parskip .05in
\begin{scriptsize}
\centering
\caption{Bias and RMSE, symmetric design with one endogenous regressor}
\begin{tabular}{lcccccccc}
\toprule

& \multicolumn{8}{c}{$N=500$}\\
 \cmidrule(l{5pt}r{5pt}){2-9}

& \multicolumn{4}{c}{Bias} & \multicolumn{4}{c}{RMSE}\\
\cmidrule(l{5pt}r{5pt}){2-5}  \cmidrule(l{5pt}r{5pt}){6-9} \

$\tau$& Contr & Brent& Profil &InvQR &  Contr & Brent&Profil& InvQR \\
 \midrule
  \\
0.15 & 0.03 & -0.00 & -0.02 & -0.00 & 0.10 & 0.10 & 0.11 & 0.10 \\ 
  0.25 & 0.03 & 0.00 & -0.01 & 0.00 & 0.12 & 0.12 & 0.12 & 0.12 \\ 
  0.50 & -0.00 & -0.00 & -0.02 & -0.00 & 0.12 & 0.14 & 0.14 & 0.14 \\ 
  0.75 & -0.04 & -0.01 & -0.03 & -0.01 & 0.13 & 0.12 & 0.12 & 0.12 \\ 
  0.85 & -0.04 & -0.00 & -0.02 & -0.00 & 0.11 & 0.11 & 0.11 & 0.11 \\ 
\\

\midrule
& \multicolumn{8}{c}{$N=1000$}\\
 \cmidrule(l{5pt}r{5pt}){2-9}

& \multicolumn{4}{c}{Bias} & \multicolumn{4}{c}{RMSE}\\
\cmidrule(l{5pt}r{5pt}){2-5}  \cmidrule(l{5pt}r{5pt}){6-9} \

$\tau$& Contr & Brent& Profil &InvQR &  Contr & Brent&Profil& InvQR \\
 \midrule
  \\
0.15 & 0.02 & 0.00 & -0.00 & 0.00 & 0.07 & 0.07 & 0.07 & 0.07 \\ 
  0.25 & 0.01 & -0.00 & -0.01 & -0.00 & 0.08 & 0.08 & 0.08 & 0.08 \\ 
  0.50 & -0.01 & -0.01 & -0.01 & -0.01 & 0.09 & 0.10 & 0.10 & 0.10 \\ 
  0.75 & -0.02 & -0.00 & -0.01 & -0.00 & 0.09 & 0.08 & 0.08 & 0.08 \\ 
  0.85 & -0.02 & -0.00 & -0.01 & -0.00 & 0.08 & 0.08 & 0.08 & 0.08 \\ 
\\
  \bottomrule
\multicolumn{9}{p{10cm}}{\scriptsize {\it Notes:} Monte Carlo simulation with 500 repetitions as described in the main text. Contr: contraction algorithm; Brent: root-finding algorithm based on Brent's method; Profil: profiling estimator based on Brent's method; InvQR: inverse quantile regression. We use 2SLS estimates as starting values.}
\end{tabular}
\label{tab:sy1}
\end{scriptsize}
\linespread{1.5}\parskip .05in
\end{table}

\begin{table}[H]
\linespread{1.15}\parskip .05in
\begin{scriptsize}
\centering
\caption{Bias and RMSE, asymmetric design with one endogenous regressor}
\begin{tabular}{lcccccccc}
\toprule

& \multicolumn{8}{c}{$N=500$}\\
 \cmidrule(l{5pt}r{5pt}){2-9}

& \multicolumn{4}{c}{Bias} & \multicolumn{4}{c}{RMSE}\\
\cmidrule(l{5pt}r{5pt}){2-5}  \cmidrule(l{5pt}r{5pt}){6-9} \

$\tau$& Contr & Brent& Profil &InvQR &  Contr & Brent&Profil& InvQR \\
 \midrule
  \\
0.15 & 0.11 & 0.01 & -0.04 & -0.00 & 0.22 & 0.20 & 0.21 & 0.20 \\ 
  0.25 & 0.07 & 0.00 & -0.02 & -0.00 & 0.17 & 0.16 & 0.17 & 0.16 \\ 
  0.50 & 0.04 & -0.00 & -0.02 & -0.00 & 0.13 & 0.12 & 0.12 & 0.12 \\ 
  0.75 & 0.03 & 0.00 & -0.01 & 0.00 & 0.11 & 0.11 & 0.11 & 0.11 \\ 
  0.85 & -0.03 & -0.01 & -0.03 & -0.00 & 0.12 & 0.11 & 0.12 & 0.11 \\ 
\\
\midrule
& \multicolumn{8}{c}{$N=1000$}\\
 \cmidrule(l{5pt}r{5pt}){2-9}

& \multicolumn{4}{c}{Bias} & \multicolumn{4}{c}{RMSE}\\
\cmidrule(l{5pt}r{5pt}){2-5}  \cmidrule(l{5pt}r{5pt}){6-9} \

$\tau$& Contr & Brent& Profil &InvQR &  Contr & Brent&Profil& InvQR \\
 \midrule
  \\
0.15 & 0.05 & -0.01 & -0.03 & -0.01 & 0.16 & 0.15 & 0.15 & 0.15 \\ 
  0.25 & 0.04 & 0.00 & -0.01 & 0.00 & 0.11 & 0.11 & 0.11 & 0.11 \\ 
  0.50 & 0.03 & 0.00 & -0.01 & 0.00 & 0.08 & 0.08 & 0.08 & 0.08 \\ 
  0.75 & 0.01 & -0.01 & -0.02 & -0.01 & 0.08 & 0.08 & 0.08 & 0.08 \\ 
  0.85 & -0.03 & -0.01 & -0.02 & -0.01 & 0.09 & 0.09 & 0.09 & 0.09 \\ 
\\
  \bottomrule
\multicolumn{9}{p{10cm}}{\scriptsize {\it Notes:} Monte Carlo simulation with 500 repetitions as described in the main text. Contr: contraction algorithm; Brent: root-finding algorithm based on Brent's method; Profil: profiling estimator based on Brent's method; InvQR: inverse quantile regression. We use 2SLS estimates as starting values.}
\end{tabular}
\label{tab:asy1}
\end{scriptsize}
\linespread{1.5}\parskip .05in
\end{table}

\begin{table}[H]
\linespread{1.15}\parskip .05in
\begin{scriptsize}
\centering
\caption{Bias and RMSE, symmetric design with two endogenous regressors}
\begin{tabular}{lcccccccccc}
\toprule

& \multicolumn{10}{c}{$N=500$}\\
 \cmidrule(l{5pt}r{5pt}){2-11}
& \multicolumn{5}{c}{Bias} & \multicolumn{5}{c}{RMSE}\\
\cmidrule(l{5pt}r{5pt}){2-6}  \cmidrule(l{5pt}r{5pt}){7-11} \
$\tau$& Contr & NestBr&SimAnn &Profil& InvQR & Contr & NestBr&SimAnn&Profil& InvQR \\
 \midrule
  \\
  \multicolumn{11}{c}{\emph{Coefficient on $D_1$}}\\
  \\
0.15 & 0.00 & -0.00 & 0.00 & -0.02 & -0.01 & 0.11 & 0.12 & 0.14 & 0.13 & 0.13 \\ 
  0.25 & 0.01 & -0.00 & -0.01 & -0.02 & -0.01 & 0.15 & 0.16 & 0.17 & 0.16 & 0.16 \\ 
  0.50 & -0.02 & -0.02 & -0.02 & -0.04 & -0.02 & 0.17 & 0.19 & 0.19 & 0.19 & 0.20 \\ 
  0.75 & -0.04 & -0.03 & -0.03 & -0.05 & -0.03 & 0.21 & 0.20 & 0.21 & 0.20 & 0.20 \\ 
  0.85 & -0.05 & -0.03 & -0.03 & -0.05 & -0.03 & 0.18 & 0.17 & 0.18 & 0.18 & 0.17 \\ 
\\
  \multicolumn{11}{c}{\emph{Coefficient on $D_2$}}\\
  \\
0.15 & 0.10 & -0.01 & -0.01 & -0.05 & -0.02 & 0.27 & 0.27 & 0.29 & 0.28 & 0.31 \\ 
  0.25 & 0.10 & -0.00 & -0.02 & -0.04 & -0.02 & 0.29 & 0.29 & 0.30 & 0.30 & 0.30 \\ 
  0.50 & -0.01 & -0.02 & -0.02 & -0.06 & -0.02 & 0.33 & 0.38 & 0.39 & 0.40 & 0.39 \\ 
  0.75 & -0.15 & -0.04 & -0.06 & -0.08 & -0.05 & 0.40 & 0.40 & 0.41 & 0.41 & 0.41 \\ 
  0.85 & -0.19 & -0.05 & -0.06 & -0.10 & -0.07 & 0.39 & 0.36 & 0.40 & 0.38 & 0.43 \\
  
\\

\midrule

& \multicolumn{10}{c}{$N=1000$}\\
 \cmidrule(l{5pt}r{5pt}){2-11}
& \multicolumn{5}{c}{Bias} & \multicolumn{5}{c}{RMSE}\\
\cmidrule(l{5pt}r{5pt}){2-6}  \cmidrule(l{5pt}r{5pt}){7-11} \
$\tau$& Contr & NestBr&SimAnn &Profil& InvQR & Contr & NestBr&SimAnn&Profil& InvQR \\

 \midrule
  \\
  \multicolumn{11}{c}{\emph{Coefficient on $D_1$}}\\
  \\
0.15 & -0.00 & -0.00 & -0.01 & -0.01 & -0.00 & 0.08 & 0.09 & 0.10 & 0.09 & 0.10 \\ 
  0.25 & -0.00 & -0.00 & -0.01 & -0.01 & -0.01 & 0.10 & 0.11 & 0.12 & 0.11 & 0.13 \\ 
  0.50 & -0.01 & -0.01 & -0.01 & -0.02 & -0.01 & 0.12 & 0.13 & 0.13 & 0.13 & 0.16 \\ 
  0.75 & -0.01 & -0.01 & -0.01 & -0.01 & -0.00 & 0.13 & 0.13 & 0.14 & 0.13 & 0.14 \\ 
  0.85 & -0.02 & -0.01 & -0.02 & -0.02 & -0.02 & 0.12 & 0.12 & 0.13 & 0.12 & 0.13 \\ 
\\
  \multicolumn{11}{c}{\emph{Coefficient on $D_2$}}\\
  \\
0.15 & 0.05 & -0.01 & -0.01 & -0.02 & -0.02 & 0.19 & 0.19 & 0.21 & 0.19 & 0.20 \\ 
  0.25 & 0.05 & -0.00 & -0.01 & -0.02 & -0.01 & 0.22 & 0.21 & 0.23 & 0.22 & 0.23 \\ 
  0.50 & -0.02 & -0.02 & -0.02 & -0.04 & -0.03 & 0.25 & 0.27 & 0.27 & 0.28 & 0.29 \\ 
  0.75 & -0.09 & -0.02 & -0.02 & -0.04 & -0.03 & 0.27 & 0.25 & 0.28 & 0.25 & 0.26 \\ 
  0.85 & -0.09 & -0.01 & -0.03 & -0.04 & -0.03 & 0.26 & 0.23 & 0.25 & 0.24 & 0.24 \\ 
\\
  \bottomrule
\multicolumn{11}{p{13.5cm}}{\scriptsize {\it Notes:} Monte Carlo simulation with 500 repetitions as described in the main text. Contr: contraction algorithm; NestBr: nested algorithm based on Brent's method; SimAnn: simulated annealing based optimization algorithm; Profil: nested profiling estimator based on Brent's method; InvQR: inverse quantile regression. We use 2SLS estimates as starting values.}
\end{tabular}
\label{tab:sy2}
\end{scriptsize}
\linespread{1.5}\parskip .05in
\end{table}

\begin{table}[H]
\linespread{1.15}\parskip .05in
\begin{scriptsize}
\centering
\caption{Bias and RMSE, asymmetric design with two endogenous regressors}
\begin{tabular}{lcccccccccc}
\toprule

& \multicolumn{10}{c}{$N=500$}\\
 \cmidrule(l{5pt}r{5pt}){2-11}
& \multicolumn{5}{c}{Bias} & \multicolumn{5}{c}{RMSE}\\
\cmidrule(l{5pt}r{5pt}){2-6}  \cmidrule(l{5pt}r{5pt}){7-11} \
$\tau$& Contr & NestBr&SimAnn &Profil& InvQR & Contr & NestBr&SimAnn&Profil& InvQR \\
 \midrule
  \\
  \multicolumn{11}{c}{\emph{Coefficient on $D_1$}}\\
  \\
0.15 & -0.02 & 0.02 & 0.00 & -0.03 & 0.01 & 0.25 & 0.26 & 0.28 & 0.27 & 0.26 \\ 
  0.25 & -0.05 & 0.01 & 0.00 & -0.01 & -0.00 & 0.20 & 0.20 & 0.21 & 0.20 & 0.21 \\ 
  0.50 & -0.04 & -0.00 & 0.00 & -0.01 & 0.00 & 0.16 & 0.17 & 0.21 & 0.17 & 0.19 \\ 
  0.75 & -0.02 & -0.02 & -0.01 & -0.02 & -0.02 & 0.17 & 0.17 & 0.18 & 0.18 & 0.19 \\ 
  0.85 & -0.01 & -0.01 & -0.02 & -0.02 & -0.02 & 0.20 & 0.19 & 0.19 & 0.20 & 0.19 \\ 
\\
  \multicolumn{11}{c}{\emph{Coefficient on $D_2$}}\\
  \\
0.15 & 0.26 & -0.06 & -0.11 & -0.16 & -0.13 & 0.57 & 0.52 & 0.58 & 0.53 & 0.59 \\ 
  0.25 & 0.23 & -0.01 & -0.02 & -0.06 & -0.01 & 0.45 & 0.41 & 0.43 & 0.40 & 0.44 \\ 
  0.50 & 0.12 & -0.03 & -0.04 & -0.07 & -0.07 & 0.34 & 0.32 & 0.48 & 0.33 & 0.73 \\ 
  0.75 & 0.04 & -0.05 & -0.06 & -0.11 & -0.05 & 0.32 & 0.31 & 0.34 & 0.33 & 0.34 \\ 
  0.85 & -0.13 & -0.03 & -0.01 & -0.08 & 0.01 & 0.40 & 0.34 & 0.38 & 0.34 & 0.36 \\ 
\\

\midrule

& \multicolumn{10}{c}{$N=1000$}\\
 \cmidrule(l{5pt}r{5pt}){2-11}
& \multicolumn{5}{c}{Bias} & \multicolumn{5}{c}{RMSE}\\
\cmidrule(l{5pt}r{5pt}){2-6}  \cmidrule(l{5pt}r{5pt}){7-11} \
$\tau$& Contr & NestBr&SimAnn &Profil& InvQR & Contr & NestBr&SimAnn&Profil& InvQR \\

 \midrule
  \\
  \multicolumn{11}{c}{\emph{Coefficient on $D_1$}}\\
  \\
0.15 & -0.03 & 0.01 & -0.00 & -0.02 & -0.01 & 0.18 & 0.19 & 0.19 & 0.18 & 0.19 \\ 
  0.25 & -0.04 & -0.00 & -0.01 & -0.02 & -0.01 & 0.15 & 0.15 & 0.16 & 0.15 & 0.16 \\ 
  0.50 & -0.03 & -0.01 & -0.01 & -0.01 & -0.01 & 0.13 & 0.13 & 0.14 & 0.13 & 0.14 \\ 
  0.75 & -0.03 & -0.01 & -0.01 & -0.02 & -0.01 & 0.12 & 0.12 & 0.13 & 0.13 & 0.14 \\ 
  0.85 & 0.01 & 0.00 & 0.00 & -0.01 & -0.00 & 0.14 & 0.13 & 0.15 & 0.14 & 0.15 \\ 
\\
  \multicolumn{11}{c}{\emph{Coefficient on $D_2$}}\\
  \\
0.15 & 0.15 & -0.03 & -0.03 & -0.07 & -0.04 & 0.37 & 0.37 & 0.38 & 0.37 & 0.39 \\ 
  0.25 & 0.10 & -0.01 & -0.01 & -0.05 & -0.02 & 0.28 & 0.28 & 0.30 & 0.28 & 0.28 \\ 
  0.50 & 0.05 & -0.02 & -0.03 & -0.04 & -0.03 & 0.22 & 0.22 & 0.23 & 0.22 & 0.24 \\ 
  0.75 & 0.06 & -0.01 & -0.02 & -0.03 & -0.01 & 0.24 & 0.22 & 0.24 & 0.23 & 0.24 \\ 
  0.85 & -0.08 & -0.03 & -0.04 & -0.07 & -0.03 & 0.27 & 0.24 & 0.26 & 0.25 & 0.24 \\ 
\\
  \bottomrule
\multicolumn{11}{p{13.5cm}}{\scriptsize {\it Notes:} Monte Carlo simulation with 500 repetitions as described in the main text. Contr: contraction algorithm; NestBr: nested algorithm based on Brent's method; SimAnn: simulated annealing based optimization algorithm; Profil: nested profiling estimator based on Brent's method; InvQR: inverse quantile regression. We use 2SLS estimates as starting values.}
\end{tabular}
\label{tab:asy2}
\end{scriptsize}
\linespread{1.5}\parskip .05in
\end{table}

Table \ref{tab:size_locscale} displays the empirical coverage probabilities of the bootstrap confidence intervals. The results show that the our bootstrap procedure exhibits excellent size properties.

\begin{table}[ht]
\linespread{1.15}\parskip .05in
\begin{scriptsize}
\centering
\caption{Coverage, location-scale DGP with one endogenous regressor}
\begin{tabular}{lcccccccc}
\toprule

\multicolumn{9}{c}{$N=500$} \\
 \cmidrule(l{5pt}r{5pt}){2-9}
& \multicolumn{4}{c}{Symmetric Design}& \multicolumn{4}{c}{Asymmetric Design}  \\
\cmidrule(l{5pt}r{5pt}){2-5} \cmidrule(l{5pt}r{5pt}){6-9}
& \multicolumn{2}{c}{$1-\alpha=0.95$} & \multicolumn{2}{c}{$1-\alpha=0.9$}& \multicolumn{2}{c}{$1-\alpha=0.95$} & \multicolumn{2}{c}{$1-\alpha=0.9$}\\
\cmidrule(l{5pt}r{5pt}){2-3}  \cmidrule(l{5pt}r{5pt}){4-5} \cmidrule(l{5pt}r{5pt}){6-7}  \cmidrule(l{5pt}r{5pt}){8-9}

$\tau$& Contr & Brent&  Contr & Brent & Contr & Brent&  Contr & Brent\\
 \midrule
  \\
0.15 & 0.93 & 0.98 & 0.88 & 0.94 & 0.89 & 0.97 & 0.85 & 0.95 \\ 
  0.25 & 0.94 & 0.96 & 0.89 & 0.93 & 0.93 & 0.97 & 0.88 & 0.95 \\ 
  0.50 & 0.96 & 0.96 & 0.91 & 0.91 & 0.95 & 0.97 & 0.91 & 0.93 \\ 
  0.75 & 0.94 & 0.97 & 0.90 & 0.94 & 0.95 & 0.97 & 0.91 & 0.93 \\ 
  0.85 & 0.95 & 0.98 & 0.90 & 0.96 & 0.96 & 0.98 & 0.92 & 0.96 \\ 
\\
\midrule
\multicolumn{9}{c}{$N=1000$} \\
 \cmidrule(l{5pt}r{5pt}){2-9}
& \multicolumn{4}{c}{Symmetric Design}& \multicolumn{4}{c}{Asymmetric Design}  \\
\cmidrule(l{5pt}r{5pt}){2-5} \cmidrule(l{5pt}r{5pt}){6-9}
& \multicolumn{2}{c}{$1-\alpha=0.95$} & \multicolumn{2}{c}{$1-\alpha=0.9$}& \multicolumn{2}{c}{$1-\alpha=0.95$} & \multicolumn{2}{c}{$1-\alpha=0.9$}\\
\cmidrule(l{5pt}r{5pt}){2-3}  \cmidrule(l{5pt}r{5pt}){4-5} \cmidrule(l{5pt}r{5pt}){6-7}  \cmidrule(l{5pt}r{5pt}){8-9}

$\tau$& Contr & Brent&  Contr & Brent & Contr & Brent&  Contr & Brent\\
 \midrule
  \\
0.15 & 0.95 & 0.96 & 0.90 & 0.92 & 0.92 & 0.97 & 0.85 & 0.93 \\ 
  0.25 & 0.95 & 0.96 & 0.91 & 0.91 & 0.92 & 0.95 & 0.87 & 0.91 \\ 
  0.50 & 0.96 & 0.96 & 0.90 & 0.90 & 0.94 & 0.96 & 0.90 & 0.93 \\ 
  0.75 & 0.95 & 0.96 & 0.90 & 0.91 & 0.95 & 0.95 & 0.91 & 0.91 \\ 
  0.85 & 0.96 & 0.97 & 0.93 & 0.94 & 0.96 & 0.95 & 0.92 & 0.91 \\ 
\\
  \bottomrule
\multicolumn{9}{p{10cm}}{\scriptsize {\it Notes:} Monte Carlo simulation with 1000 repetitions as described in the main text. Contr: contraction algorithm; Brent: root-finding algorithm based on Brent's method. We use 2SLS estimates as starting values.}
\end{tabular}
\label{tab:size_locscale}
\end{scriptsize}
\linespread{1.5}\parskip .05in
\end{table}

\section{Proofs of Theoretical Results in Section \ref{sec:decentralization}}\label{sec:appendix_proof_sec3}
\begin{proof}[\rm \textbf{Proof of Lemma \ref{lem:global_decentralization}}]

(i)	We first show that $L_1$ is well-defined. For a given $\theta_{-1}\in\mathbb R^{d_D}$, let $\theta_1^*\in\arg\min_{\tilde\theta_1 \in \mathbb{R}^{d_X}}Q_{P,1}( \tilde\theta_1,\theta_{-1})$. Under Assumption \ref{ass:global_decentralization}, the objective function is convex and differentiable with respect to $\tilde\theta_1$. Therefore, by the necessary and sufficient condition of minimization, $\theta_1^*$ solves
\begin{align*}
E_P[(1\{Y\le D'\theta_{-1}+X'\theta^*_1\})X]=0.
\end{align*}
In what follows, we show that the map $L_1:\theta_{-1}\mapsto\theta^*_1$ is well-defined on $R_{-1}$ using a global inverse function theorem.  Recall that
\begin{align}
\Z_{P,1}(\theta)= E_P[(1\{Y\le D'\theta_{-1}+X'\theta_1\})X].
\end{align}
This function  is continuously differentiable with respect to $\theta$. The Jacobian is given by
\begin{align}
J_{\Z_{P,1}}(\theta)=\frac{\partial }{\partial \theta'}   E_P[F_{Y|D,X,Z}(D'\theta_{-1}+X'\theta_1)X]=E_P[f_{Y|D,X,Z}(D'\theta_{-1}+X'\theta_1)X(X',D')],\label{eq:Jacobian_Psi}
\end{align}
where the second equality follows from Assumption \ref{ass:global_decentralization} and the dominated convergence theorem.
Define a transform $\Xi:\Theta\to\mathbb{R}^{d_X+d_D}$ by
\begin{align}
\Xi(\theta)
\equiv(\Z_{P,1}(\theta)',\theta_{-1}{}')'. \label{eq:xi}
\end{align}
We follow \cite{Krantz:2012aa} (Section 3.3) to obtain an implicit function $L_1$ on a suitable domain such that
$\theta_1=L_1(\theta_2)$ if and only if $\Z_{P,1}(\theta)=0.$
The key is to apply a global inverse function theorem to $\Xi$.
Toward this end, we analyze the Jacobian of $\Xi$, which is given as
\begin{align}
J_\Xi(\theta)&=\begin{bmatrix}
	\partial\Z_{P,1}(\theta_1,\theta_{-1})/\partial\theta_{1}'& \partial\Z_{P,1}(\theta_1,\theta_{-1})/\partial\theta_{-1}'\\
	0_{d_{-1}\times d_{1}}& I_{d_{-1}}
\end{bmatrix}\notag\\
&=\begin{bmatrix}
	E_P[f_{Y|D,X,Z}(D'\theta_{-1}+X'\theta_1)XX']&E_P[f_{Y|D,X,Z}(D'\theta_{-1}+X'\theta_1)XD']\\
	0_{d_{-1}\times d_{1}}& I_{d_{-1}}
\end{bmatrix},\label{eq:Jacobian1}
\end{align}
where, for any $d\in\mathbb N$, $I_d$ denotes the $d\times d$ identity matrix.
Let $I\subset\{1,\dots \dparam\}$.

For any matrix $A$, let $[A]_{I,I}$ denote a principal minor of $A$, which collects the rows and columns of $A$ whose indices belong to the index set $I$.
By \eqref{eq:Jacobian1}, if $I\subset \{1,\dots,d_1\}$,
\begin{align}
	[J_\Xi(\theta)]_{I,I}=E_P[f_{Y|D,X,Z}(D'\theta_{-1}+X'\theta_1)\tilde X\tilde X']
\end{align}
for a subvector $\tilde X$ of $X$, which is positive definite by Assumption \ref{ass:global_decentralization} and Lemma \ref{lem:pd}. If $I\subset \{d_1+1,\dots,\dparam\}$, $[J_\Xi(\theta)]_{I,I}=I_{\ell}$ for some $1\le\ell\le d_D$ and is hence positive definite. Otherwise,  any principal minor is of the following form:
\begin{align}
[J_\Xi(\theta)]_{I,I}=\begin{bmatrix}
	E_P[f_{Y|D,X,Z}(D'\theta_{-1}+X'\theta_1)\tilde X\tilde X']&B\\
	0_{\ell \times m}& I_{\ell}
\end{bmatrix}
\end{align}
for some subvector $\tilde X$ of $X$ and a $m\times\ell$ matrix $B.$
Note that
\begin{multline}
\det([J_\Xi(\theta)]_{I,I})=\det(E_P[f_{Y|D,X,Z}(D'\theta_{-1}+X'\theta_1)\tilde X\tilde X']-B I_\ell^{-1}\times 0_{\ell \times m})\det(I_\ell)\\
=\det(E_P[f_{Y|D,X,Z}(D'\theta_{-1}+X'\theta_1)\tilde X\tilde X'])>0,
\end{multline}
where the last inequality follows again from Assumption \ref{ass:global_decentralization} and Lemma \ref{lem:pd}. 
Hence, $J_\Xi(\theta)$ is a $P$-matrix. Note that $\Theta$ is a closed rectangle. By Theorem 4 in \citet{gale1965jacobian}, $\Xi$ is univalent, and hence the inverse map $\Xi^{-1}$ is well defined.

Let
\begin{align*}
R_{-1}=\{\theta_{-1}\in\mathbb R^{d_{-1}}:(0,\theta_{-1})\in \Xi(\Theta)\}=\{\theta_{-1}\in\mathbb R^{d_{-1}}:\Z_{P,1}(\theta_1,\theta_{-1})=0,\text{ for some }(\theta_1,\theta_{-1})\in\Theta\},
\end{align*}
which coincides with the definition in \eqref{eq:defRj} with $j=1$.
Let $F_1=[I_{d_1},0_{d_1\times d_{-1}}]$. For each $\theta_{-1}\in R_{-1}$, define
\begin{align*}
L_1(\theta_{-1})\equiv F_1\Xi^{-1}(0,\theta_{-1}).
\end{align*}
Then, for any $\theta\in\Theta$, $\Z_{P,1}(\theta)=0$ if and only if $\theta_{-1}\in R_{-1}$ and $\Xi(\theta)=(0,\theta_{-1})$. By the univalence of $\Xi$, this is true if and only if
$\theta=\Xi^{-1}(0,\theta_{-1})$, and the first $d_1$ components extracted by applying $F_1$ is $\theta_1$. This ensures $L_1$ is well-defined on $R_{-1}.$

Below, for any set $A,$ let $A^o$ denote the interior of $A$. Let $R_{-1}^o=\{\theta_{-1}\in\mathbb R^{d_{-1}}:(0,\theta_{-1})\in \Xi(\Theta^o)\}.$
Note that $\Z_{P,1}$ is $\mathcal C^1$ on $\Theta^o$ and, for each $\theta=(\theta_1,\theta_{-1})\in \Theta$ with $\theta_{-1}\in R_{d_{-1}}^o$, $\det(\partial\Z_{P,1}(\theta)/\partial\theta_1')\ne 0$.
Therefore, by the implicit function theorem, there is a $\mathcal C^1$-function $\tilde L_1$ and an open set $V$ containing $\theta_{-1}$ such that
\begin{align*}
\Z_{P,1}(\tilde L_1(\theta_{-1}),\theta_{-1})=0,\text{ for all }\theta_{-1}\in V.
\end{align*}
However, such a local implicit function must coincide with the unique global map $L_1$ on $V$. Hence, $L_1|_V=\tilde L_1$, and therefore $L_1$ is continuously differentiable at $\theta_{-1}$. Since the choice of $\theta_{-1}$ is arbitrary, $L_1$ is continuously differentiable for all $\theta_{-1}\in R_2^o.$

Showing that the conclusion holds for any other $L_j$ for $j=2,\dots,J$ is similar, and hence we omit the proof.
\end{proof}

\begin{lemma}\label{lem:pd} Suppose	$E_P[f_{Y|D,X,Z}\left(D'\param_{-1}+X^\prime \param_1\right) X X']$ is positive definite. Then, for any subvector $\tilde X$ of $X$ with dimension $\tilde d_X\le d_X$, $E_P[f_{Y|D,X,Z}\left(D'\param_{-1}+X^\prime \param_1\right) \tilde X \tilde X']$ is positive definite.
\end{lemma}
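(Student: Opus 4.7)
The plan is to write the submatrix operation as a linear transformation and reduce positive definiteness of the reduced matrix to that of the original. Specifically, since $\tilde X$ is a subvector of $X$, there exists a $\tilde d_X \times d_X$ selection matrix $S$ (each row has a single $1$ in a distinct column and zeros elsewhere) such that $\tilde X = S X$ almost surely. Then
\begin{equation*}
E_P[f_{Y|D,X,Z}(D'\theta_{-1}+X'\theta_1)\,\tilde X \tilde X'] \;=\; S\, E_P[f_{Y|D,X,Z}(D'\theta_{-1}+X'\theta_1)\, X X']\, S',
\end{equation*}
by linearity of expectation.

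Next, I would verify positive definiteness of this matrix directly from the definition. Fix any nonzero $v \in \mathbb R^{\tilde d_X}$. Because $S$ consists of $\tilde d_X$ distinct standard basis rows, $S'$ has full column rank $\tilde d_X$, hence $S'v \ne 0$. Using the hypothesized positive definiteness of $A \equiv E_P[f_{Y|D,X,Z}(D'\theta_{-1}+X'\theta_1)\, X X']$,
\begin{equation*}
v'(SAS')v \;=\; (S'v)' A (S'v) \;>\; 0,
\end{equation*}
which proves the claim. This argument is elementary and does not require any further structural properties of $f_{Y|D,X,Z}$ beyond those already embedded in the assumed positive definiteness of $A$.

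No serious obstacle is anticipated: the key observation is simply that congruence transformation by a full-column-rank matrix preserves positive definiteness, applied to the selection matrix that extracts $\tilde X$ from $X$. The lemma is essentially a statement that every principal submatrix of a positive definite matrix is positive definite, restated for the relevant expectation matrix.
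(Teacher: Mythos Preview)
Your proof is correct but takes a different route from the paper. You use a selection matrix $S$ with $\tilde X = SX$, write the reduced expectation as $SAS'$, and verify positive definiteness directly via the quadratic form $v'(SAS')v = (S'v)'A(S'v)>0$ for $v\ne 0$, using that $S'$ has full column rank. This is the standard ``congruence by a full-column-rank matrix'' argument and is entirely elementary.

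The paper instead introduces a permutation matrix $P_\pi$ so that $\tilde X$ is the leading subvector of $P_\pi X$, argues that $B=P_\pi A P_\pi'$ shares the eigenvalues of $A$ (since $P_\pi$ is orthogonal), identifies $C=E[W\tilde X\tilde X']$ as a leading principal submatrix of $B$, and then invokes the eigenvalue inclusion principle (Cauchy interlacing; \cite{Horn1990}, Theorem 4.3.28) to conclude $\lambda_{\min}(C)\ge \lambda_{\min}(B)=\lambda_{\min}(A)>0$. This delivers a quantitative lower bound on the smallest eigenvalue, which your argument does not give, but that extra information is not needed for the lemma as stated. Your approach is shorter and avoids the appeal to interlacing; the paper's approach yields a uniform eigenvalue bound as a by-product.
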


\begin{proof}
  In what follows, let $W=f_{Y|D,X,Z}\left(D'\param_{-1}+X^\prime \param_1\right)$ and let
  \begin{align}
  A\equiv E_P[f_{Y|D,X,Z}\left(D'\param_{-1}+X^\prime \param_1\right) X X']=E[WXX'].\label{eq:pd1}
  \end{align}
Let $\tilde X$ be a subvector of $X$ with $\tilde d_X$ components. Then, there exists a $d_X\times d_X$ permutation matrix $P_\pi$ such that the first $\tilde d_X$ components of $P_\pi X$ is $\tilde X$.

Let $B\equiv E[WP_\pi XX'P_\pi']$ and note that
\begin{align}
  B=P_\pi E[WXX']P_\pi'=P_\pi AP_\pi',\label{eq:pd3}
\end{align}
by the linearity of the expectation operator and $W$ being a scalar. Let $\lambda$ be an eigenvalue of $B$ such that
\begin{align}
  Bz = \lambda z,\label{eq:pd4}
\end{align}
for the corresponding eigenvector $z\in\mathbb R^{d_X}$. By \eqref{eq:pd3}-\eqref{eq:pd4},
\begin{align}
  P_\pi AP_\pi'z=\lambda z~\Leftrightarrow~ AP_\pi'z = \lambda P_\pi^{-1} z.
\end{align}
Note that $P_{\pi}^{-1}=P_\pi'$ due to $P_\pi$ being a permutation matrix. Letting $y\equiv P_\pi'z$ then yields
\begin{align}
  A y =\lambda y,
\end{align}
which in turn shows that $\lambda$ is an eigenvalue of $A$. For any eigenvalue of $A$, the argument above can be reversed to show that it is also an eigenvalue of $B$.
Since the choice of the eigenvalue is arbitrary, $A$ and $B$ share the same eigenvalues.

Now let $C\equiv E[W\tilde X\tilde X']$ and note that it is a leading principal submatrix of $B$. Then, by the eigenvalue inclusion principle \citep[][Theorem 4.3.28]{Horn1990},
\begin{align}
  \lambda_{\text{min}}(C)\ge   \lambda_{\text{min}}(B)=  \lambda_{\text{min}}(A)>0,
\end{align}
where the last inequality follows from the positive definiteness of $A$. This completes the claim of the lemma.
\end{proof}

\begin{proof}[{\rm\textbf{Proof of Corollary \ref{cor:global_decentralization}}}]
	The existence of $K$ and its continuous differentiability follows immediately from Lemma \ref{lem:global_decentralization}. For $M$, by  the definition of $\tilde R_{1}$, for any $\theta_{-1}\in \tilde R_{j}$, there exists $(\theta_1,\theta_2)\in\Theta_1\times\Theta_2$ such that
	\begin{align}
	&\Z_{P,1}(\theta_1,\theta_{-1})=0,\\
	&\Z_{P,2}(\theta_1,\theta_2,\pi_{-\{1,2\}}\theta_{-1})=0.
	\end{align}
	By (i), one may then write $\theta_1=L_1(\theta_{-1})$ and $\theta_2=L_2(L_1(\theta_{-1}),\pi_{-\{1,2\}}\theta_{-1})$.
	Hence, the map $M_1:\tilde R_1\to\Theta_2$  below is well-defined:
		\begin{align}
		M_1(\theta_{-1})=L_2\big(L_1(\theta_{-1}),\pi_{-\{1,2\}}\theta_{-1}\big).
	\end{align}
	Recursively, arguing in the same way, the maps
	\begin{align}
		M_2(\theta_{-1})&=L_3\big(L_1(\theta_{-1}),M_1(\theta_{-1}),\pi_{-\{1,2,3\}}\theta_{-1}\big)\\
		\vdots&\notag\\
		M_j(\theta_{-1})&=L_{j+1}\big(L_1(\theta_{-1}),M_1(\theta_{-1}),\dots, M_{j-1}(\theta_{-1}),\pi_{-\{1,\dots,j+1\}}\theta_{-1}\big)\\
		\vdots&\notag\\
		M_{d_D}(\theta_{-1})&=L_J\big(L_1(\theta_{-1}),M_1(\theta_{-1}),\dots, M_{d_D-1}(\theta_{-1})\big)
	\end{align}
	are well-defined on $\tilde R_2,\dots,\tilde R_{d_D}$ respectively. The continuous differentiability of $M$ follows from that of $L_j$s and the chain rule.
\end{proof}

\begin{proof}[\rm \textbf{Proof of Proposition \ref{prop:equivalence}}]
$\Longrightarrow$: For every solution, $\Z_{P}(\true)=0$, $\true_j=L_{j}\left(\true_{-j}\right)$ by construction under Assumptions \ref{ass:ivqr} and \ref{ass:global_decentralization}. It follows that $K\left(\true \right)=\true$ and $M\left(\true_{-1}\right)=\true_{-1}$. \\
$\Longleftarrow$: For the simultaneous response note that $K\left(\fp\right)=\fp$ implies that $\fp_j=L_{j}\left(\fp_{-j}\right)$ for all $j\in \{1,\dots,J\}$. Thus, $\fp$ solves $\Z_{P}(\fp)=0$ by Lemma \ref{lem:global_decentralization}. Consider next the sequential response. Let $\tilde\theta,\bar\theta\in\Theta$ be such that $\tilde\theta_j=L_j(\bar\theta_{-j})$ for $j=1,\dots,J$. By Lemma \ref{lem:global_decentralization}, they satisfy
\begin{eqnarray*}
\Z_{P,1}\left(\tilde\theta_1,\bar\theta_{2},\cdots, \bar\theta_{J}\right)&=&0\\
\Z_{P,2}\left(\tilde\theta_1,\tilde\theta_{2},\cdots, \tilde\theta_{J}\right)&=&0\\
&\vdots&\\
\Z_{P,J}\left(\tilde\theta_1,\tilde\theta_{2},\cdots, \tilde\theta_{J}\right)&=&0
\end{eqnarray*}
Thus, a fixed point $\tilde\theta=\fp$ satisfies $\Z_{P}\left(\fp\right)=0$. 
\end{proof}

\section{Proofs of Theoretical Results in Section \ref{sec:population_algorithms}}
\begin{proof}[\rm \textbf{Proof of Proposition \ref{prop:contraction_global}}]
We prove the result for $K$. By Assumption \ref{ass:contraction_global}, there exists a strictly convex set $\tilde D_K$ on which the spectral norm of the Jacobian of $K$ is uniformly bounded by 1. This ensures that $K$ is a contraction map on $cl(\tilde D_K)$, and the claim of the proposition now follows from Theorem 2.2.16 in \citet{HasselblattKatok2003}.
 \end{proof}

\section{Proofs of Theoretical Results in Section \ref{sec:theory}}
\label{sec:proofs_asydist_bootstrap}
To state and prove results in a concise manner, we use the population and sample simultaneous response maps $K$ and $\hat K$ below to define our estimand $\theta^*$ and estimator $\est$. Namely, $\theta^*$ is the fixed point of $K$, and $\est$ solves 
\begin{align}
\|\est-\hat K(\est)\|\le \inf_{\theta'\in\Theta}\|\theta'-\hat K(\theta')\|+o_p(N^{-1/2}).
\end{align}	
Note that the fixed point estimator defined in \eqref{eq:hattheta1}-\eqref{eq:hattheta2} is asymptotically equivalent to the  estimator above due to Lemma \ref{lem:asy_equiv}.
\begin{proof}[\rm \textbf{Proof of Theorem \ref{thm:asynorm}}]
Let $H\equiv I_\dparam-K$.  A fixed point $\true$ of $K$ then satisfies
\begin{align*}
H(\true)=0.
\end{align*}
Similarly, let $\hat H\equiv I_\dparam-\hat K$. The estimator $\est$  satisfies
\begin{align}
\|\hat H(\est)\|^2\le \inf_{\theta'\in\Theta}\|\hat H(\theta)\|^2+r_N^2,
\end{align}
where $r_N=o_p(N^{-1/2})$. Let $\varphi:\ell^\infty(\Theta)^\dparam\times\mathbb R\to \mathbb{R}^{d_X+d_D}$ be a map such that, for each $(H,r)\in \ell^\infty(\Theta)^\dparam\times\mathbb R$, $\tilde \theta=\varphi(H,r)$ is an $r$-approximate solution, which  satisfies
\begin{align}
\|H(\tilde\theta)\|^2\le \inf_{\theta'\in\Theta}\|H(\theta')\|^2+r^2.
\end{align}
One may then write
\begin{align}
\sqrt N(\est-\true)=\sqrt N(\varphi(\hat H,\hat r)-\varphi(H,0)).
\end{align}
By Lemma \ref{cor:LMdist}, $	\sqrt N (\hat K-K)
	\leadsto \mathbb W$ in $\ell^\infty(\Theta)^\dparam$, where $\mathbb W$ is a Gaussian process defined in Lemma \ref{cor:LMdist}.
By Lemmas \ref{lem:H_hausdorff}-\ref{lem:H_dot}, Condition $Z$ in \cite{Chernozhukov+13}(CFM henceforth) holds, which in turn ensures that one may apply Lemmas E.2 and E.3 in CFM. This ensures
\begin{align}
\sqrt N(\varphi(\hat H,\hat r)-\varphi(H,0))\leadsto \varphi'_{H,0}(\mathbb W,0)=-\dot H_{\true}^{-1}\mathbb W(\true).
\end{align}
Hence, we obtain \eqref{eq:asynorm0a} with
\begin{align}
V=\dot H_{\true}^{-1}E[\mathbb W(\true)\mathbb W(\true)']\dot H_{\true}^{-1}.
\end{align}
Finally, note that $\dot H_{\true}=I_\dparam-J_K(\true)$ by Lemma \ref{lem:H_dot}.
This establishes the theorem.
\end{proof}

\begin{proof}[\rm \textbf{Proof of Theorem \ref{thm:bootstrap}}]
Recall that $\hat H= I_\dparam-\hat K$. The estimator $\est$  satisfies
\begin{align}
\|\hat H(\est)\|^2\le \inf_{\theta'\in\Theta}\|\hat H(\theta')\|^2+r_N^2,\label{eq:boot_cons1}
\end{align}
where $r_N=o_p(N^{-1/2})$. Similarly, let $\hat H^*= I_\dparam-\hat K^*$. Let $P^*$ denote the law of $\hat H^*$ conditional on $\{W_i\}_{i=1}^\infty.$ The bootstrap estimator $\est^*$  satisfies
\begin{align}
\|\hat H^*(\est^*)\|^2\le \inf_{\theta'\in\Theta}\|\hat H^*(\theta')\|^2+(r^*_N)^2,\label{eq:boot_cons2}
\end{align}
where $r^*_N=o_{P^*}(N^{-1/2})$ conditional on $\{W_i\}_{i=1}^\infty$.

Using the $r$-approximation, one may therefore write
\begin{align}
\sqrt N(\est^*-\est)&=\sqrt N(\varphi(\hat H^*,r^*_N)-\varphi(\hat H,r_N)).\label{eq:boot_cons3}
\end{align}
Let $E_{P^*}$ denote the conditional expectation with respect to $P^*$. Let $BL_1$ denote the space of bounded Lipschitz functions on $\mathbb{R}^{d_X+d_D}$ with Lipschitz constant 1. Then, for any $\epsilon>0$,
\begin{multline}
\sup_{h\in BL_1}\Big|E_{P^*}h\big(\sqrt N\big[\varphi(\hat H^*,r^*_N)-\varphi(\hat H,r_N)\big]\big)-E_{P^*}h\big(\varphi'_{H,0}\big(\sqrt N\big[(\hat H^*,r^*_N)'-(\hat H,r_N)'\big]\big)\big)\Big|\\
\le \epsilon +2P^*\Big(\big\|\sqrt N\big[\varphi(\hat H^*,r^*_N)-\varphi(\hat H,r_N)\big]-\varphi'_{H,0}\big(\sqrt N\big[(\hat H^*,r^*_N)-(\hat H,r_N)\big]\big)\big\|>\epsilon\Big).\label{eq:boot_cons4}
\end{multline}
By Lemma \ref{cor:LMdist}, $\sqrt N(\hat H^*-\hat H)=-\sqrt N(\hat K^*-\hat K)\stackrel{L^*}{\leadsto} -\mathbb W\stackrel{d}{=}\mathbb W$. Noting that $h\circ\varphi'_{H,0}\in BL_1(\ell^\infty(\Theta)\times\mathbb R)$ and $r_N=o_p(N^{-1/2})$, it follows that
\begin{align}
\sup_{h\in BL_1}\Big|E_{P^*}h\big(\varphi'_{H,0}\big(\sqrt N\big[(\hat H^*,r^*_N)-(\hat H,r_N)\big]\big)-E_{P^*}h\circ \varphi'_{H,0}(\mathbb W,0)\Big|\to 0,\label{eq:boot_cons5}
\end{align}
with probability approaching 1 due to $r_N=o_P(N^{-1/2})$. Hence, for the conclusion of the theorem, it suffices to show that the right hand side of \eqref{eq:boot_cons4} tends to 0 in probability.

For this, as shown in the proof of Theorem \ref{thm:asynorm}, $\varphi$ is Hadamard differentiable at $(H,0)$. Hence, by Theorem 3.9.4 in \cite{VanderVaartWellner1996},
\begin{align*}
\sqrt N\big[\varphi(\hat H^*,r^*_N)-\varphi(H,0)\big]&=\varphi'_{H,0}(\sqrt N[(\hat H^*,r^*_N)-(H,0)])+o_{P^*}(1) \\
\sqrt N\big[\varphi(\hat H,r_N)-\varphi(H,0)]&=\varphi'_{H,0}(\sqrt N[(\hat H,r_N)-(H,0)])+o_{P}(1),
\end{align*}
Take the difference of the left and right hand sides respectively and note that $\varphi'_{H,0}$ is linear. This implies the right hand side of \eqref{eq:boot_cons4} tends to 0 in probability. This ensures
\begin{align}
\sqrt N(\varphi(\hat H,r^*_N)-\varphi(\hat H,r_N))\stackrel{L^*}{\leadsto} \varphi'_{H,0}(\mathbb W,0)=-\dot H_{\true}^{-1}\mathbb W(\true).
\end{align}
\end{proof}

\begin{proof}[\rm \textbf{Proof of Corollary \ref{cor:asy_equivalence}}]
Note that $V$ and $g$ may be written as
\begin{align}
	V&=(I_\dparam-J_K(\true))^{-1}E[g(W;\true)g(W;\true)'][(I_\dparam-J_K(\true))^{-1}]'\\
	g(w;\theta^*)&=R^{-1}(\theta^*)f(w;\theta^*),\label{eq:defg}
\end{align}
where $R(\theta^*)$ is a $d_X+d_D$-by-$d_X+d_D$ matrix given  by
\begin{align}
	R(\theta^*)&=\begin{pmatrix}
		\frac{\partial^2}{\partial\theta_1\partial\theta_1'}Q_{P,1}(\theta^*)&0&\cdots&\cdots&0\\
		0&\frac{\partial^2}{\partial\theta_2\partial\theta_2'}Q_{P,2}(\theta^*)&0&\cdots&0\\
		\vdots&0&\ddots&&\vdots\\
		0 & \cdots & \cdots & & \frac{\partial^2}{\partial\theta_J\partial\theta_J'}Q_{P,J}(\theta^*)
	\end{pmatrix}\\
	&=
	\begin{pmatrix}
	\frac{\partial}{\partial\theta_1'}\Psi_{P,1}(\theta^*)&0&\cdots&\cdots&0\\
	0&\frac{\partial}{\partial\theta_2'}\Psi_{P,2}(\theta^*)&0&\cdots&0\\
	\vdots&0&\ddots&&\vdots\\
	0 & \cdots & \cdots & & \frac{\partial}{\partial\theta_J'}\Psi_{P,J}(\theta^*)	
	\end{pmatrix}.
\end{align}
Further, by Lemma \ref{lem:global_decentralization} and the form of $J_{L_{-j}}(\theta^*_{j})$ given in \eqref{eq:def_JL},
\begin{align}
	J_K(\param^*)
	&=
	\begin{pmatrix}
		0_{d_X\times d_X}&\frac{\partial L_1(\true)}{\partial\theta_2'}&\cdots &\frac{\partial L_1(\true)}{\partial\theta_J'}\\
		\frac{\partial L_2(\true)}{\partial\theta_1'}&0&\cdots&\frac{\partial L_2(\true)}{\partial\theta_J'}\\
		\vdots&&\ddots&\vdots\\
		\frac{\partial L_J(\true)}{\partial\theta_1'}&\frac{\partial L_J(\true)}{\partial\theta_2'}&\cdots&0
	\end{pmatrix}\\
	&=\begin{pmatrix}
		0_{d_X\times d_X}& -\left(\frac{\partial\Z_{P,1}(\theta^*)}{\partial\theta_{1}'} \right)^{-1}\frac{\partial\Z_{P,1}(\theta*)}{\partial\theta_2'}&\cdots&-\left(\frac{\partial\Z_{P,1}(\theta^*)}{\partial\theta_{1}'} \right)^{-1}\frac{\partial\Z_{P,1}(\theta^*)}{\partial\theta_J'}\\
		-\left(\frac{\partial\Z_{P,2}(\theta^*)}{\partial\theta_{2}'} \right)^{-1}\frac{\partial\Z_{P,2}(\theta^*)}{\partial\theta_1'} &0&\cdots&-\left(\frac{\partial\Z_{P,2}(\theta^*)}{\partial\theta_{2}'} \right)^{-1}\frac{\partial\Z_{P,2}(\theta^*)}{\partial\theta_J'}\\
\vdots&&\ddots&\vdots\\
		-\left(\frac{\partial\Z_{P,J}(\theta^*)}{\partial\theta_{J}'} \right)^{-1}\frac{\partial\Z_{P,J}(\theta^*)}{\partial\theta_1'} &-\left(\frac{\partial\Z_{P,J}(\theta^*)}{\partial\theta_{J}'} \right)^{-1}\frac{\partial\Z_{P,2}(\theta^*)}{\partial\theta_2'}	&\cdots&0
	\end{pmatrix}.
\end{align}
The form of $R(\true)$ and $J_K(\true)$ imply
\begin{align}
	R(\true)(I_\dparam-J_K(\true))=J_{\Psi_P}(\theta^*),\label{eq:recover_Jpsi}
\end{align}
where $J_{\Psi_P}$ is the Jacobian of the estimating equations. Eq. \eqref{eq:asyvar}-\eqref{eq:defg} and \eqref{eq:recover_Jpsi} ensure that
one may also write
\begin{align}
	V=J_{\Psi_P}(\theta^*)^{-1}E[f(W;\theta^*)f(W;\theta^*)'][J_{\Psi_P}(\theta^*)^{-1}]'.\label{eq:asyvar1}
\end{align}
As shown in \cite{CH2006}, the Jacobian of $\Psi_P$ is given by
\begin{align}
	J_{\Psi_P}(\theta^*)=E[f_{\varepsilon(\tau)|X,D,Z}(0)\Psi(\tau) [X',D']],\label{eq:asyvar2}
\end{align}
 where $\Psi(\tau)=(X',Z')'.$ Furthermore,
 \begin{align}
 	E[f(W;\theta^*)f(W;\theta^*)']=\tau(1-\tau)E[\Psi(\tau)\Psi(\tau)'].\label{eq:asyvar3}
 \end{align}
Hence, \eqref{eq:asyvar1}-\eqref{eq:asyvar3} show that $V$ coincides with the asymptotic variance of the estimator that solves the estimating equations in \eqref{eq:est_eq_tilde}.
\end{proof}

\begin{lemma}\label{lem:asy_equiv}
	Suppose Assumptions \ref{ass:ivqr}-\ref{ass:global_decentralization} hold. (i) Let $\est$ be an estimator of $\true$ that satisfies \eqref{eq:hattheta}. Then, it also satisfies \eqref{eq:hattheta1}-\eqref{eq:hattheta2};  (ii) Let $\est$ be an estimator of $\true$ that satisfies \eqref{eq:hattheta1}-\eqref{eq:hattheta2}. Then, it also satisfies 
\begin{align}
\|\est-\hat K(\est)\|\le \inf_{\theta'\in\Theta}\|\theta'-\hat K(\theta')\|+o_p(N^{-1/2}).\label{eq:hattheta}
\end{align}	
\end{lemma}

\begin{proof}
(i) Consider the case  $j=2$. Note that, by \eqref{eq:hattheta},
\begin{align}
\hat\theta_{N,2}-\hat \br_2(\hat\br_1(\hat\theta_{N,-1}),\hat\theta_{N,3},\dots,\hat\theta_{N,\dpl})&=\hat\theta_{N,2}-\hat\br_2(\hat\theta_{N,1}+r_{N,1},\hat\theta_{N,3},\dots,\hat\theta_{N,\dpl})\label{eq:asy_equiv1}\\
&=\hat\br_2(\hat\theta_{N,1},\hat\theta_{N,3},\dots,\hat\theta_{N,\dpl})-\hat\br_2(\hat\theta_{N,1}+r_{N,1},\hat\theta_{N,3},\dots,\hat\theta_{N,\dpl}),\label{eq:asy_equiv2}
\end{align}
where $r_{N,1}=o_p(N^{-1/2})$, and the second equality follows from the definition of $\hat\theta_{N,2}$. \eqref{eq:asy_equiv2} can be written as
\begin{align}
\hat\br_2(\hat\theta_{N,1},\hat\theta_{N,3},\dots,\hat\theta_{N,\dpl})&-\hat\br_2(\hat\theta_{N,1}+r_{N,1},\hat\theta_{N,3},\dots,\hat\theta_{N,\dpl})\notag\\
&=\Big([\hat\br_2(\hat\theta_{N,1},\hat\theta_{N,3},\dots,\hat\theta_{N,\dpl})-\br_2(\hat\theta_{N,1},\hat\theta_{N,3},\dots,\hat\theta_{N,\dpl})]\notag\\
&\qquad-[\hat\br_2(\hat\theta_{N,1}+r_{N,1},\hat\theta_{N,3},\dots,\hat\theta_{N,\dpl})-\br_2(\hat\theta_{N,1}+r_{N,1},\hat\theta_{N,3},\dots,\hat\theta_{N,\dpl})]\Big)\notag\\
&\qquad+[\br_2(\hat\theta_{N,1}+r_{N,1},\hat\theta_{N,3},\dots,\hat\theta_{N,\dpl})-\br_2(\hat\theta_{N,1},\hat\theta_{N,3},\dots,\hat\theta_{N,\dpl})]\notag\\
&=o_p(N^{-1/2})+O_P(r_{N,1}),\label{eq:asy_equiv3}
\end{align}
where the last equality follows from the stochastic equicontinuity of $\mathcal L_N$ shown in the proof of Lemma \ref{lem:sequi} and $L_2$ being Lipschitz  since $L_2$ is continuously differentiable with a derivative that is uniformly bounded on the compact set $\Theta$. By \eqref{eq:asy_equiv1}-\eqref{eq:asy_equiv3}, it holds that $\hat\theta_{N,j}=M_j(\hat\theta_{N,-1})+o_p(N^{-1/2})$ for $j=2$. Repeat the same argument sequentially for $j=3,\dots,\dpl$. The first conclusion of the lemma then follows.

(ii) Suppose now that $r_{N,1}\equiv \hat\theta_{N,1}-\hat\br_1(\hat \theta_{N,-1})\ne o_P(N^{-1/2}).$ Then, there is a subsequence $k_N$ along which, for any $\eta>0$, $\sqrt k_Nr_{k_N,1}>\eta$ for all $k_N$  with positive probability.
Then, the $O_P(r_{k_N,1})$-term in \eqref{eq:asy_equiv3} is not $o_p(k_N^{-1/2})$, which therefore implies  $\hat\theta_{N,j}\ne M_j(\hat\theta_{N,-1})+o_p(N^{-1/2})$ for $j=2$. The second conclusion of the lemma then follows.
\end{proof}

\begin{lemma}\label{lem:H_hausdorff}
Let $\Lambda\subset\mathbb R^p$ be a compact set, and let $\EL:\Lambda\to\mathbb R^p$ be a map that has a unique fixed point $\lambda_0\in\Lambda$.
let $\G:\Lambda\to\mathbb R^p$ be defined by $\G(\lambda)\equiv\lambda-\EL(\lambda)$.
 Then $\G^{-1}(x)=\{\lambda\in\Lambda:\G(\lambda)=x\}$ is continuous at $x=0$ in Hausdorff distance.
\end{lemma}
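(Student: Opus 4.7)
The strategy is a compactness-plus-contradiction argument. Uniqueness of the fixed point immediately gives $\G^{-1}(0)=\{\lambda\in\Lambda:\EL(\lambda)=\lambda\}=\{\lambda_0\}$, so continuity in the Hausdorff distance at $x=0$ reduces to showing that for every sequence $x_n\to 0$,
\begin{equation*}
d_H\bigl(\G^{-1}(x_n),\{\lambda_0\}\bigr)=\max\Bigl(\sup_{\lambda\in \G^{-1}(x_n)}\|\lambda-\lambda_0\|,\;d(\lambda_0,\G^{-1}(x_n))\Bigr)\longrightarrow 0.
\end{equation*}
This splits into an upper-hemicontinuity estimate and a non-emptiness / lower-hemicontinuity estimate, which I would handle separately.

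For the upper-hemicontinuity part, I would argue by contradiction. Suppose there exist $\varepsilon>0$ and, along a subsequence, points $\lambda_n\in \G^{-1}(x_n)$ with $\|\lambda_n-\lambda_0\|\ge \varepsilon$. Compactness of $\Lambda$ yields a further convergent sub-subsequence $\lambda_{n_k}\to\lambda^\ast\in\Lambda$ satisfying $\|\lambda^\ast-\lambda_0\|\ge \varepsilon$. Continuity of $\EL$ (which in the paper's usage is guaranteed by the continuous differentiability of the best-response maps in Corollary \ref{cor:global_decentralization}) implies continuity of $\G$, so passing to the limit in $\G(\lambda_{n_k})=x_{n_k}$ gives $\G(\lambda^\ast)=0$. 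Hence $\lambda^\ast$ is a fixed point of $\EL$ distinct from $\lambda_0$, contradicting uniqueness.

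For the lower-hemicontinuity part, I would exploit the local invertibility of $\G$ at $\lambda_0$. Under Assumption \ref{ass:global_decentralization}, the identification discussion in Section \ref{ssec:identification_algorithms} (in particular equation \eqref{eq:full_rank_eig_K}) gives that $I-J_{\EL}(\lambda_0)$ is nonsingular. The inverse function theorem then produces open neighbourhoods $U\ni\lambda_0$ and $V\ni 0$ on which $\G:U\to V$ is a diffeomorphism. For all sufficiently large $n$, $x_n\in V$, so there is a point $\tilde\lambda_n=(\G|_U)^{-1}(x_n)\in \G^{-1}(x_n)$, and $\tilde\lambda_n\to\lambda_0$ by continuity of the local inverse; this forces $d(\lambda_0,\G^{-1}(x_n))\to 0$.

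The main obstacle is the second half: uniqueness of the fixed point by itself is not enough to guarantee that $\G^{-1}(x_n)$ is even non-empty for $x_n\neq 0$, so a quantitative lower-hemicontinuity statement genuinely requires the local regularity supplied by the full-rank Jacobian. Once that is secured via the inverse function theorem, combining the two parts yields the Hausdorff-continuity conclusion, and the rest of the argument is a routine compactness manipulation.
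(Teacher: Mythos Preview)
Your upper-hemicontinuity argument is exactly the paper's proof: the paper reduces the Hausdorff distance to $\sup_{\lambda\in\G^{-1}(x_n)}\|\lambda-\lambda_0\|$ (using that $\G^{-1}(0)=\{\lambda_0\}$ is a singleton, so the $\inf$ term is dominated by the $\sup$), and then runs precisely your compactness--contradiction argument, using continuity of $\EL$ to pass to the limit and contradict uniqueness of the fixed point.

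The difference is that the paper \emph{stops there}. It does not treat the lower-hemicontinuity / nonemptiness issue you raise; it simply writes $d_H=\max\{\inf,\sup\}=\sup$ and shows the $\sup$ goes to zero. Your observation that uniqueness of the fixed point alone does not guarantee $\G^{-1}(x_n)\neq\emptyset$ is well taken, and your use of the inverse function theorem (via the full-rank condition $\det(I-J_{\EL}(\lambda_0))\neq 0$) to secure a local inverse is a clean way to close that gap---but it imports a hypothesis (differentiability of $\EL$ with nonsingular $I-J_{\EL}(\lambda_0)$) that is not in the lemma's own statement. In the paper's application this hypothesis is indeed available (it is exactly what Lemma~\ref{lem:H_dot} supplies), so your argument is correct in context; the paper's proof, read literally, is a one-sided Hausdorff bound that is sufficient for the Condition~$Z$ verification it feeds into. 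In short: you reproduce the paper's argument for the nontrivial direction and add a careful treatment of nonemptiness that the paper leaves implicit.
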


\begin{proof}
For any $x$, write
\begin{align*}
\G^{-1}(x)=\{\lambda:\lambda-\EL(\lambda)=x\}.
\end{align*}
Let $x_n\to 0$.
Since $\lambda_0$ is the  unique fixed point of $\EL$, $\G^{-1}(0)=\{\lambda_0\}$. Therefore,
\begin{align*}
d_H(\G^{-1}(0),\G^{-1}(x_n))&=\max\left\{\inf_{\lambda\in \G^{-1}(x_n)}\|\lambda-\lambda_0\|,\sup_{\lambda\in \G^{-1}(x_n) }\|\lambda-\lambda_0\|\right\}\\
&=\sup_{\lambda\in \G^{-1}(x_n) }\|\lambda-\lambda_0\|.
\end{align*}
Hence, it suffices to show that $\sup_{\lambda\in \G^{-1}(x_n) }\|\lambda-\lambda_0\|=o(1)$. We show this by contradiction. Suppose that there is a sequence $\{\lambda_n\}\subset\Lambda$ and $\delta>0$ such that $\lambda_n\in \G^{-1}(x_n)$ for all $n$ and $\{\lambda_n\}$ has a subsequence $\{\lambda_{k_n}\}$ such that
$\|\lambda_{k_n}-\lambda_0\|>\delta$ for all $n$. $\lambda_{k_n}\in\Lambda$ is a sequence in a compact space, and hence there is a further subsequence $\lambda_{h_n}$ such that $\lambda_{h_n}\to \lambda^*$ for some $\lambda^*\in\Lambda$ with $\lambda^*\ne \lambda_0$. By the continuity of $\EL$, one then has
\begin{align*}
\lambda_{h_n}-\EL(\lambda_{h_n})\to \lambda^*-\EL(\lambda^*).
\end{align*}
By $\lambda_{h_n}-\EL(\lambda_{h_n})=x_n$ and $x_n\to0$, it must hold that
\begin{align*}
\lambda^*-\EL(\lambda^*)=0.
\end{align*}
However this contradicts the fact that $\lambda_0$ is the unique fixed point, and hence the conclusion follows.
\end{proof}

\begin{lemma}\label{lem:H_dot}
Suppose $\G=I-\EL$ and $\EL:\mathbb R^p\to\mathbb R^p$ is continuously differentiable at $\lambda_0$.
Suppose further that $\det(I-J_{\EL}(\lambda_0))\ne 0$. Let $\dot \G_{\lambda_0}\equiv I-J_{\EL}(\lambda_0)$.
Then,
\begin{align*}
\lim_{t\downarrow 0}\sup_{h:\|h\|=1}\|t^{-1}[\G(\lambda_0+th)-\G(\lambda_0)]-\dot \G_{\lambda_0}h\|=0,
\end{align*}
and
\begin{align*}
\inf_{h:\|h\|=1}\|\dot \G_{\lambda_0}h\|>0.
\end{align*}
\end{lemma}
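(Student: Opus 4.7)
The plan is to unpack both conclusions as essentially direct consequences of the continuous differentiability of $K$ at $\lambda_0$ and the invertibility of $I - J_K(\lambda_0)$. Since $H = I - K$, all the nontrivial content is contained in the Taylor expansion of $K$.

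For the first claim, I would start from the fact that continuous differentiability of $K$ at $\lambda_0$ implies that for any $\epsilon > 0$ there exists $\delta > 0$ such that $\|K(\lambda) - K(\lambda_0) - J_K(\lambda_0)(\lambda - \lambda_0)\| \le \epsilon \|\lambda - \lambda_0\|$ whenever $\|\lambda - \lambda_0\| < \delta$. Applying this with $\lambda = \lambda_0 + t h$ for $\|h\| = 1$ and $0 < t < \delta$, and dividing by $t$, yields
\begin{equation*}
\sup_{\|h\|=1}\bigl\|t^{-1}[K(\lambda_0 + t h) - K(\lambda_0)] - J_K(\lambda_0) h\bigr\| \le \epsilon.
\end{equation*}
Next I would use the identity $H(\lambda_0 + th) - H(\lambda_0) = th - [K(\lambda_0 + th) - K(\lambda_0)]$ together with $\dot H_{\lambda_0} h = h - J_K(\lambda_0) h$, which gives
\begin{equation*}
t^{-1}[H(\lambda_0 + th) - H(\lambda_0)] - \dot H_{\lambda_0} h = -\bigl\{t^{-1}[K(\lambda_0 + th) - K(\lambda_0)] - J_K(\lambda_0) h\bigr\}.
\end{equation*}
Taking supremum over $\|h\|=1$ and combining with the previous display bounds the left side uniformly by $\epsilon$, and letting $t \downarrow 0$ (then $\epsilon \downarrow 0$) gives the first claim.

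For the second claim, observe that $\dot H_{\lambda_0} = I - J_K(\lambda_0)$ is a $p\times p$ matrix with $\det(\dot H_{\lambda_0}) \ne 0$ by hypothesis, hence invertible. The function $h \mapsto \|\dot H_{\lambda_0} h\|$ is continuous on the compact unit sphere $S = \{h : \|h\| = 1\}$, so it attains its infimum on $S$. If that infimum were zero, there would exist $h^* \in S$ with $\dot H_{\lambda_0} h^* = 0$, contradicting invertibility of $\dot H_{\lambda_0}$. Hence the infimum is strictly positive (indeed, it equals the smallest singular value of $\dot H_{\lambda_0}$).

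There is no real obstacle here: the result is essentially a restatement of $C^1$-differentiability on a bounded set plus the standard fact that an invertible linear map is bounded below on the unit sphere. The only minor care point is to ensure the uniformity in $h$ in the first claim is genuinely supplied by the $C^1$ condition; this is automatic since the Taylor remainder estimate is homogeneous in $\|\lambda - \lambda_0\| = t\|h\| = t$.
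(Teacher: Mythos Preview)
Your proof is correct and follows essentially the same approach as the paper: reduce the first claim to the first-order Taylor expansion of $K$ (the paper uses the mean value theorem and continuity of $J_K$, you use the equivalent Fr\'echet remainder estimate), and for the second claim use compactness of the unit sphere together with invertibility of $I-J_K(\lambda_0)$, exactly as in the paper.
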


\begin{proof}
Let $\{h_n\}\subset\mathbb S^p$ be a sequence on the unit sphere.  Then,
\begin{align*}
t^{-1}[\G(\lambda_0+th_n)-\G(\lambda_0)]-\dot \G_{\lambda_0}h_n&=t^{-1}[\lambda_0+th_n+\EL(\lambda_0+th_n)-\lambda_0-\EL(\lambda_0)]-h_n-J_{\EL}(\lambda_0) h_n\\
&=t^{-1}[\EL(\lambda_0+th_n)-\EL(\lambda_0)]-J_{\EL}(\lambda_0) h_n\\
&=(J_{\EL}(\bar\lambda_n)-J_{\EL}(\lambda_0))h_n,
\end{align*}
where $\bar\lambda_n$ is a mean value between $\lambda_0+th_n$ and $\lambda_0$. Therefore, by the Cauchy-Schwarz inequality,
\begin{align*}
\|(J_{\EL}(\bar\lambda_n)-J_{\EL}(\lambda_0))h_n\|\le \|J_{\EL}(\bar\lambda_n)-J_{\EL}(\lambda_0)\|\|h_n\|\to0,
\end{align*}
where we used $\|h_n\|=1$,  $\bar\lambda_n\to \lambda_0$, and the continuity of the Jacobian.

For the second claim, note that
\begin{align*}
\|\dot \G_{\lambda_0}h\|=\|(I-J_{\EL}(\lambda_0)) h\|,
\end{align*}
and $h\mapsto \|(I-J_{\EL}(\lambda_0)) h\|$ is continuous. Since the domain of $h$ is compact, there is $h^*\in \mathbb S^p$ such that $\inf_{\|h\|=1}\|\dot \G_{\lambda_0}h\|=\|(I-J_{\EL}(\lambda_0))h^*\|.$ Let $q=(I-J_{\EL}(\lambda_0))h^*$ and note that $I-J_{\EL}(\lambda_0)$ is linearly independent (due to $\det(I-J_{\EL}(\lambda_0))\ne 0$), and hence $q\ne 0$. Hence $\inf_{\|h\|=1}\|\dot \G_{\lambda_0}h\|=\|q\|>0$. Hence, the second conclusion follows.
\end{proof}

The following result is a slight extension of Lemma E.1 in CFM.
\begin{lemma}\label{lem:psi_cond}
	Suppose that $\Lambda\subset\mathbb R^p$ and $\mathcal U$ is a compact and convex set in $\mathbb R^q$. Let $\mathcal I$ be an open set containing $\mathcal U$. Suppose that $\Z:\Lambda\times\mathcal I\to\mathbb R^p$ is continuous and $\lambda\mapsto\Z(\lambda,u)$ is the gradient of a convex function in $\lambda$ for each $u\in\mathcal U$; (b) for each $u\in\mathcal U$, $\Z(\lambda_0(u),u)=0$; (c) $\frac{\partial}{\partial (\lambda',u')}\Z(\lambda,u)$ exists at $(\lambda_0(u),u)$ and is continuous at $(\lambda_0(u),u)$ for each $u\in\mathcal U$ and $\dot\Z_{\lambda_0(u),u}:=\frac{\partial}{\partial\lambda'}\Z(\lambda,u)|_{\lambda_0(u)}$ obeys $\inf_{u\in\mathcal U}\inf_{\|h\|=1}\|\dot\Z_{\lambda_0(u),u}h\|>c_0>0$. Then, Condition $Z$ in CFM holds and $u\mapsto\lambda_0(u)$ is continuously differentiable with derivative $J_{\lambda_0}(u)=-\dot \Z^{-1}_{\lambda_0(u)u}\frac{\partial}{\partial u'}\Z(\lambda_0(u),u)$.
\end{lemma}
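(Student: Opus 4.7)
My plan is to establish the two conclusions in order: first, that $\lambda_0(u)$ is continuously differentiable with the stated Jacobian, and second, that Condition $Z$ of CFM is satisfied. The underlying idea is a classical implicit function theorem argument, upgraded to hold uniformly in $u\in\mathcal U$ by exploiting the uniform lower bound on the singular values of $\dot\Z_{\lambda_0(u),u}$ in assumption (c).

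For the smoothness of $\lambda_0$, I would first use assumption (a) to note that $\Z(\cdot,u)$ is the gradient of a convex function $F(\cdot,u)$, so any zero of $\Z(\cdot,u)$ is a global minimizer of $F(\cdot,u)$. Assumption (c), together with the symmetry of $\dot\Z_{\lambda_0(u),u}$ (it is a Hessian), upgrades the spectral lower bound $\inf_{\|h\|=1}\|\dot\Z_{\lambda_0(u),u}h\|>c_0$ to positive definiteness with smallest eigenvalue bounded below by $c_0$. This yields uniqueness of $\lambda_0(u)$ as a zero of $\Z(\cdot,u)$. The standard implicit function theorem (e.g., Theorem~9.28 in Rudin) then yields a $C^1$ local solution at each $u\in\mathcal U$; uniqueness forces this local solution to coincide with $\lambda_0$, so $u\mapsto\lambda_0(u)$ is $C^1$. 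Differentiating $\Z(\lambda_0(u),u)=0$ in $u$ gives
\begin{equation*}
\dot\Z_{\lambda_0(u),u}\,J_{\lambda_0}(u)+\frac{\partial}{\partial u'}\Z(\lambda_0(u),u)=0,
\end{equation*}
and inverting $\dot\Z_{\lambda_0(u),u}$ (which is possible by the spectral lower bound) produces the claimed formula for $J_{\lambda_0}$.

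Next I would verify Condition $Z$ of CFM, which, in effect, requires that the correspondence $x\mapsto\{\lambda:\Z(\lambda,u)=x\}$ behaves well at $x=0$ uniformly in $u$, with $\Z$ suitably Hadamard-like differentiable at each $\lambda_0(u)$. The main ingredient is a quantitative (uniform) version of the inverse function theorem: since $\dot\Z_{\lambda_0(u),u}$ is continuous in $u$ on the compact set $\mathcal U$ and its inverse is uniformly norm-bounded by $1/c_0$, a uniform contraction-mapping Newton iteration gives a neighborhood $V$ of $0$ in $\mathbb R^p$ and a modulus $\omega(\delta)\downarrow 0$ as $\delta\downarrow 0$ such that, for all $u\in\mathcal U$ and $\|x\|\le\delta$, there is a unique $\lambda\in\Lambda$ near $\lambda_0(u)$ solving $\Z(\lambda,u)=x$, and $\|\lambda-\lambda_0(u)\|\le\omega(\delta)$. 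This uniform local univalence, together with the already-established $C^1$-differentiability of $\lambda_0$ and continuity of $\dot\Z$ in both arguments, delivers exactly the form of Condition $Z$ used in Lemmas E.2--E.3 of CFM (namely, a uniformly continuous inverse to $\Z(\cdot,u)$ near $0$ plus a differentiability structure at the base point).

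The main obstacle I anticipate is not the pointwise implicit function theorem but making everything uniform in $u\in\mathcal U$: one has to show that the radius of the neighborhood on which $\Z(\cdot,u)$ is invertible, and the modulus of continuity of its inverse, can be chosen independently of $u$. This is where compactness of $\mathcal U$, continuity of $\dot\Z$ at $(\lambda_0(u),u)$, and the uniform spectral bound $c_0$ combine: a finite open cover argument on $\mathcal U$ reduces the uniform statement to finitely many pointwise ones, each of which is handled by the standard Newton-Kantorovich or contraction argument. Once uniform local univalence is in hand, verifying each piece of Condition $Z$ is routine.
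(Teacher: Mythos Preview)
Your proposal is correct and follows essentially the same strategy as the paper: implicit function theorem for the $C^1$-differentiability and Jacobian formula of $\lambda_0$, combined with compactness of $\mathcal U$ and a finite-cover argument to obtain the uniformity needed for Condition~$Z$. The paper's own proof largely defers to Lemma~E.1 in CFM and only spells out the two modifications required when $u$ is multidimensional---the Jacobian expression and the total boundedness (hence compactness) of the tubular neighborhood $\mathcal N=\bigcup_{u\in\mathcal U}B_\delta(\lambda_0(u))$---whereas you give a more self-contained sketch via a uniform Newton--Kantorovich argument; these are the same ideas packaged slightly differently.
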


\begin{proof}
	The proof is the same as that of Lemma E.1 in CFM, in which $\mathcal U$ is a compact interval in $\mathbb R$.
	A slight modification is needed when one computes the derivative of $\lambda_0(u)$ with respect to $u$. Since $u$ is allowed to be multidimensional, the implicit function theorem gives
	\begin{align}
	J_{\lambda_0}(u)=-\dot \Z^{-1}_{\lambda_0(u)u}\frac{\partial}{\partial u'}\Z(\lambda_0(u),u),
	\end{align}
which is uniformly bounded and continuous in $u$ by condition (c), which ensures continuous differentiability of $u\mapsto\lambda_0(u)$.
Note that for any $\delta>0$ and $\lambda\in B_{\delta}(\lambda_0(u))$, there is $\eta>0$  and $u'$ such that $\|u'-u\|\le \eta$ so that
\begin{align}
\|\lambda-\lambda_0(u')\|\le \|\lambda-\lambda_0(u)\|+\|\lambda_0(u)-\lambda_0(u')\|\le 2\delta.
\end{align}
Since $\mathcal U$ is compact (and hence totally bounded), there is a finite set $\{u_j\}_{j=1}^J\subset\mathcal U$ such that $\mathcal U\subset\bigcup_j B_{\eta}(u_j)$. The argument above then shows that $\mathcal N=\bigcup_{u\in\mathcal U}B_\delta(\lambda_0(u))\subset\bigcup_{j}B_{2\delta}(\lambda_0(u_j))$, which ensures that $\mathcal N$ is totally bounded. Since $\mathcal N$ is a subset of a Euclidean space (equipped with a complete metric), it follows that $\mathcal N$ is compact. This ensures condition $Z$ (i) in CFM. The rest of the proof is essentially the same as the case, in which $\mathcal U$ being a compact interval.
\end{proof}

\begin{lemma}\label{lem:f_donsker}
Suppose Assumption \ref{ass:global_decentralization} holds.	Let $w=(y,d',x',z')$ and let $\tau\in(0,1)$. Define
	\begin{multline}
	\mathcal M\equiv\Big\{f:f(w;\theta)=\big((1\{ y\leq d'\theta_{-1}+x'\theta_1 \} -\tau)x,\\
	(1\{ y\leq d'\theta_{-1}+x'\theta_1 \} -\tau)z_1,\dots,(1\{ u\leq d'\theta_{-1}+x'\theta_{1} \} -\tau )z_{d_D}\big),\theta\in\Theta\Big\}.
	\end{multline}
Then, $\mathcal M$ is a Donsker-class.
\end{lemma}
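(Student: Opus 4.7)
The plan is to reduce the claim to a standard fact: a finite Cartesian product of Donsker classes (with square-integrable envelopes) is Donsker, so it suffices to prove that each coordinate class of $\mathcal M$ is $P$-Donsker. Since the components of $f(w;\theta)$ differ from each other only by multiplication by a fixed coordinate of $x$ or $z$, the work reduces to analyzing a single generic class of the form $\{(1\{y\le d'\theta_{-1}+x'\theta_1\}-\tau)\cdot c(w):\theta\in\Theta\}$, where $c$ is either a coordinate of $x$ or of $z$.

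First I would handle the indicator piece. The collection of half-spaces
\begin{equation*}
\mathcal A=\big\{\{(y,d,x,z):y-d'\theta_{-1}-x'\theta_1\le 0\}:\theta\in\Theta\big\}
\end{equation*}
is a subset of a vector space of affine functions of $(y,d,x)$ of dimension $1+d_D+d_X$, so by Lemma 2.6.15 of van der Vaart and Wellner (1996), $\mathcal A$ is a VC class of sets. Consequently, the class of indicators $\mathcal I=\{1_A:A\in\mathcal A\}$ is a VC-subgraph class, and subtracting the constant $\tau$ preserves this property. The class $\mathcal I-\tau$ has a constant envelope bounded by $1$.

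Next I would multiply by the fixed weighting function $c(w)$, where $c(w)\in\{x_1,\dots,x_{d_X},z_1,\dots,z_{d_D}\}$. Multiplication of a uniformly bounded VC-subgraph class by a fixed measurable function yields a class which is again VC-subgraph (see e.g.\ Lemma 2.6.18 of van der Vaart and Wellner, 1996), and whose natural envelope is $F(w)=|c(w)|$. By Assumption \ref{ass:global_decentralization}.\ref{ass:global_decentralization2}, $E_P[F^2]<\infty$. Thus, by Theorem 2.6.7 of van der Vaart and Wellner (1996), the resulting class has uniform covering numbers satisfying the polynomial bound required for Donsker classes with square-integrable envelope, and therefore it is $P$-Donsker.

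Finally, a finite Cartesian product of Donsker classes is Donsker (e.g.\ Example 2.10.7 of van der Vaart and Wellner, 1996): if $\mathcal F_1,\dots,\mathcal F_K$ are Donsker, then the class of vector-valued maps $(f_1,\dots,f_K)$ with $f_k\in\mathcal F_k$ is Donsker. Applying this with $K=d_X+d_D$ yields that $\mathcal M$ is Donsker. No step is particularly delicate; the main point requiring care is simply invoking the VC-subgraph property for the affine half-space family and noting that the moment condition on $X$ and $Z$ supplies the square-integrable envelope needed to conclude the Donsker property from the uniform entropy bound.
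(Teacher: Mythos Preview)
Your proof is correct and follows essentially the same route as the paper's: show that the indicator part $1\{y\le d'\theta_{-1}+x'\theta_1\}-\tau$ lies in a VC-type class (the paper calls this a ``Type II'' class in the sense of \cite{Andrews:1994aa}, which is the same half-space/VC-subgraph observation you make via Lemma 2.6.15 of \cite{VanderVaartWellner1996}), then multiply by the fixed coordinate function $x_k$ or $z_\ell$ and use the square-integrability of $X$ and $Z$ from Assumption \ref{ass:global_decentralization}.\ref{ass:global_decentralization2} to obtain a square-integrable envelope and hence the uniform entropy/Donsker conclusion. The only difference is bibliographic: the paper cites Andrews (1994, Theorems 1--3) for the uniform entropy permanence and Donsker conclusion, whereas you cite the corresponding results in van der Vaart--Wellner.
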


\begin{proof}
	The proof is standard, and hence we give a brief sketch for the first component of $f$, $f_1(w;\theta)=(1\{ y\leq d'\theta_{-1}+x'\theta_1 \} -\tau)x$.
Note that $w\mapsto 1\{ y\leq d'\theta_{-1}+x'\theta_1 \}-\tau$ belongs to  Type I-class in \cite{Andrews:1994aa}, and the map $w\mapsto x$ does not depend on the parameter. By Theorems 2 and 3 in \cite{Andrews:1994aa}, this function then satisfies the uniform entropy condition with the envelope function $\bar M(w)=x$, which is square integrable by assumption. Similar arguments apply to the other components of $f$.
By Theorem 1 in \cite{Andrews:1994aa}, the empirical process: $\mathbb G_nf$ is stochastically equicontinuous, and $\mathbb G_nf(\cdot,\theta)$ obeys the classical central limit theorem for each $\theta\in\Theta$.  Hence, we conclude that $\mathcal M$ is Donsker.
\end{proof}

Below, let $g(w;\theta)=(g_1(w;\theta)',\dots,g_J(w;\theta))'$ be a vector such that
\begin{align}
	g_j(w;\theta)=\left(\frac{\partial^2}{\partial\theta_j\partial\theta_j'}Q_{P,j}(L_j(\theta_{-j}),\theta_{-j})\right)^{-1}f_j(w;L_j(\theta_{-j}),\theta_{-j}),~j=1,\dots,J.
\end{align}
Let  $\rho(\theta,\tilde\theta)\equiv \big\|\text{diag}\big( E_P\big[(g(W;\theta)-E_P[g(W;\theta)])(g(w;\tilde\theta)- E_P[g(w;\tilde\theta)])'\big]\big)\big\|$ be the variance semimetric.
Let $W_i=(Y_i,D_i',X_i',Z_i'),i=1,\dots,N$ be an i.i.d. sample generated from the IVQR model.  Define
\begin{align}
\mathcal L_{N,j}(\theta_{-j})\equiv \sqrt N(\hat \br_{j}(\theta_{-j})-\br_j(\theta_{-j}))~,j=1,\dots,J.\label{eq:br_emp_process}
\end{align}
Similarly, let $W_i^*=(Y_i^*,D_i^{*\prime},X^{*\prime}_i,Z^{*\prime}_i)',i=1,\dots,N$ be an bootstrap sample from the empirical distribution $\hat P_N$ of $\{W_i\}$. Define
\begin{align}
\mathcal L_{N,j}^*(\theta_{-j})\equiv \sqrt N(\hat \br^*_{j}(\theta_{-j})-\hat \br_j(\theta_{-j}))~,j=1,\dots,J,
\end{align}
where $\hat \br^*_{j}$ is the sample best response map of player $j$, which is defined as in \eqref{eq:brN1}-\eqref{eq:brNj} while replacing $W_i$ with the bootstrap sample $W_i^*$
in \eqref{eq:QN1}-\eqref{eq:QNj}.

Lemma \ref{lem:sample_BR} below shows that the sample BR functions approximately solve sample estimating equations and Lemma \ref{lem:sequi} characterizes the limiting distributions of $\mathcal L_N$ and $\mathcal L_N^*$.

\begin{lemma}\label{lem:sample_BR}
Let the sample BR functions be $\hat L_j(\theta_{-j})\in \textrm{argmin}_{\tilde\theta_j}Q_{N,j}(\tilde\theta_j,\theta_{-j}),j=1,\dots,J$. Let $\hat L_j(\theta_{-j})^*$ be an analog of $\hat L_j(\theta_{-j})$ for the bootstrap sample. Then, (i) the sample BR functions satisfy
\begin{multline}
	\Big|\frac{1}{N}\sum_{i=1}^N\big(1\{Y_i\le D_{i}'\theta_{-1}+X_i'\hat L_1(\theta_{-1})\}-\tau\big)X_{i}\Big|^2\\
\le \inf_{\theta_1\in \Theta_{1}}\Big|\frac{1}{N}\sum_{i=1}^N\big(1\{Y_i\le D_{i}'\theta_{-1}+X_i'\theta_1\}-\tau\big)X_{i}\Big|^2+r^2_{N,1}(\theta_{-1}),\label{eq:sampleBR1}
\end{multline}
and
\begin{multline}
	\Big|\frac{1}{N}\sum_{i=1}^N\big(1\{Y_i\le (X'_i,D'_{i,-(j-1)})'\theta_{-j}+D_{i,j}'\hat L_j(\theta_{-j})\}-\tau\big)Z_{i,j}\Big|^2\\
\le \inf_{\theta_j\in \Theta_{j}}\Big|\frac{1}{N}\sum_{i=1}^N\big(1\{Y_i\le (X'_i,D'_{i,-(j-1)})'\theta_{-j}+D_{i,j}'\hat L_j(\theta_{-j})\}-\tau\big)Z_{i,j}\Big|^2+r^2_{N,j}(\theta_{-j}),~j=2,\dots,J,\label{eq:sampleBR2}
\end{multline}
where $\sup_{\theta_{-j}\in\Theta_{-j}}|r_{N,j}(\theta_{-j})|=o_P(N^{-1/2})$ for all $j$;
(ii) the sample BR functions $\hat L_j^*(\theta_{-j}),j=1,\dots,J$ in the bootstrap sample satisfy \eqref{eq:sampleBR1}-\eqref{eq:sampleBR2} while replacing $(Y_i,D_i,X_i,Z_i)$ with a bootstrap sample $(Y_i^*,D_i^*,X_i^*,Z_i^*)$, each $\hat L_j$ with $\hat L_j^*$, and  each $r_{N,j}$ with $r^*_{N,j}$ such that $\sup_{\theta_{-j}\in\Theta_{-j}}|r^*_{N,j}(\theta_{-j})|=o_{P^*}(N^{-1/2})$.
\end{lemma}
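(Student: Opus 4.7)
The plan is to deduce both parts of the lemma from the subgradient characterization of the weighted quantile regression optimum, combined with a standard moment bound on the sample maximum of the covariates. For fixed $\theta_{-1}$, the minimizer $\hat L_1(\theta_{-1})$ of $\tilde\theta_1\mapsto Q_{N,1}(\tilde\theta_1,\theta_{-1})$ satisfies $0\in\partial_{\tilde\theta_1}Q_{N,1}(\hat L_1(\theta_{-1}),\theta_{-1})$, so there exist values $g_i$ in the subgradient of $\rho_\tau$ evaluated at the fitted residual such that $N^{-1}\sum_i X_i g_i=0$. Each $g_i$ coincides with $1\{Y_i\le X_i'\hat L_1(\theta_{-1})+D_i'\theta_{-1}\}-\tau$ except on observations where $Y_i$ is exactly equal to its fitted value, and a standard property of linear quantile regression is that at the optimum at most $d_X$ such tied observations occur (see, e.g., \citealt{Koenker2005}, Theorem 2.2). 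Consequently
\begin{align*}
\Big|\frac{1}{N}\sum_{i=1}^N\big(1\{Y_i\le D_i'\theta_{-1}+X_i'\hat L_1(\theta_{-1})\}-\tau\big)X_i\Big|\le \frac{d_X\,\max_{i\le N}\|X_i\|}{N},
\end{align*}
a bound that does not depend on $\theta_{-1}$, so uniformity is automatic. The same subgradient argument with the scalar parameter $\theta_j$ and weights $Z_{j-1,i}/D_{j-1,i}$ gives, for $j\ge 2$, the bound $|N^{-1}\sum_i(1\{\cdot\}-\tau)Z_{j-1,i}|\le \max_{i\le N}|Z_{j-1,i}|/N$, again uniform in $\theta_{-j}$.

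Next, taking $r_{N,j}^2(\theta_{-j})$ to dominate the left-hand side of \eqref{eq:sampleBR1} or \eqref{eq:sampleBR2} (the infimum on the right is non-negative so can be lower-bounded by zero without loss), the proof reduces to showing $\max_{i\le N}\|X_i\|/\sqrt N=o_p(1)$ and $\max_{i\le N}|Z_{j-1,i}|/\sqrt N=o_p(1)$. Both follow from Assumption \ref{ass:global_decentralization}.\ref{ass:global_decentralization2} by the standard tail argument: for any $\epsilon>0$,
\begin{align*}
P\big(\max_{i\le N}\|X_i\|>\epsilon\sqrt N\big)\le NP(\|X_1\|^2>\epsilon^2 N)\le \epsilon^{-2}E\big[\|X_1\|^2\mathbf 1\{\|X_1\|^2>\epsilon^2 N\}\big]\to 0,
\end{align*}
and analogously for $Z_{j-1}$. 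This delivers $\sup_{\theta_{-j}\in\Theta_{-j}}|r_{N,j}(\theta_{-j})|=o_p(N^{-1/2})$, completing part (i).

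For part (ii), observe that the bootstrap sample $\{W_i^*\}$ is drawn with replacement from the original sample, so the deterministic inequalities $\max_{i\le N}\|X_i^*\|\le \max_{i\le N}\|X_i\|$ and $\max_{i\le N}|Z^*_{j-1,i}|\le \max_{i\le N}|Z_{j-1,i}|$ hold conditional on the data. Applying the subgradient bound of part (i) to the bootstrap objective function $Q_{N,j}^*$ yields $|r^*_{N,j}(\theta_{-j})|\le C\max_{i\le N}\|X_i\|/N$ (or the $Z$-analog), which is $o_p(N^{-1/2})$ unconditionally and therefore $o_{P^*}(N^{-1/2})$ conditionally on almost every sample path, by Markov's inequality applied to the conditional probability.

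The only non-routine ingredient is the bound on the number of interpolated observations at the QR optimum, which I would cite from \citet{Koenker2005} rather than re-prove through the linear programming representation; the rest is a uniform version of the well-known near-zero subgradient estimate for quantile regression together with a standard maximal inequality under a finite second moment.
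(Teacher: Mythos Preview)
Your proposal is correct and follows essentially the same approach as the paper: both arguments rest on the subgradient/first-order characterization of the (weighted) quantile regression minimizer, the fact that at most $\dim(\theta_j)$ observations are interpolated at the optimum, and the bound $\max_{i\le N}|Z_{i,j-1}|/N=o_P(N^{-1/2})$ from the second-moment assumption. The paper derives the interpolation count explicitly via directional derivatives whereas you cite \citet{Koenker2005}, and your bootstrap step (bounding $\max_i\|X_i^*\|$ by $\max_i\|X_i\|$) is a bit more explicit than the paper's one-line ``mimic the argument,'' but the substance is identical.
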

\begin{proof}
(i)  For $j\ge 2,$ the subgradient of $Q_{N,j}$ is
\begin{align}
	\xi_j=\frac{1}{N}\sum_{i=1}^N\big(1\{Y_i\le(X'_i,D'_{i,-(j-1)})'\theta_{-j}+ D_{i,j-1}'\hat L_j(\theta_{-j})\}-\tau\big)Z_{i,j-1},
\end{align}
and hence by the property of the subgradient, for any $v\in\mathbb R$, one has
\begin{align}
	\xi_j v\le \nabla_{\theta_j} Q_{N,j}(\hat L_j(\theta_{-j}),\theta_{-j},v),
\end{align}
where $\nabla_{\theta_j} Q_{N,j}(\hat L_j(\theta_{-j}),\theta_{-j},v)$ is the directional derivative of  $Q_{N,j}(\theta_j,\theta_{-j})$ with respect to $\theta_j$ toward direction $v\in\mathbb R$ evaluated at $(\hat L_j(\theta_{-j}),\theta_{-j}).$ Note that the directional derivative is given by
\begin{align}
\nabla_{\theta_j} Q_{N,j}(\hat L_j(\theta_{-j}),\theta_{-j},v)=	-\frac{1}{N}\sum_{i=1}^N\psi^*_\tau(Y_i-(X'_i,D'_{i,-(j-1)})'\theta_{-j}- D_{i,j-1}'\hat L_j(\theta_{-j}),-Z_{i,j-1}v)Z_{i,j-1}v,
\end{align}
where 
\begin{align}\psi^*_\tau(u,w)=\begin{cases}
	\tau-1\{u<0\}& u\ne 0\\
	\tau-1\{w<0\}& u=0.
\end{cases}
\end{align}
Observe that $-\nabla_{\theta_j} Q_{N,j}(\hat L_j(\theta_{-j}),\theta_{-j},-v)\le \xi v\le \nabla_{\theta_j} Q_{N,j}(\hat L_j(\theta_{-j}),\theta_{-j},v)$. This implies
\begin{align}
	|\xi_j v|&\le \nabla_{\theta_j} Q_{N,j}(\hat L_j(\theta_{-j}),\theta_{-j},v)-(-\nabla_{\theta_j} Q_{N,j}(\hat L_j(\theta_{-j}),\theta_{-j},-v))\notag\\
	&= \frac{1}{N}\sum_{i=1}^N\Big(-\psi^*_\tau(Y_i-(X'_i,D'_{i,-(j-1)})'\theta_{-j} -D_{i,j}'\hat L_j(\theta_{-j}),-Z_{i,j-1}v)\notag\\
	&\qquad\qquad +\psi^*_\tau(Y_i-(X'_i,D'_{i,-(j-1)})'\theta_{-j}- D_{i,j}'\hat L_j(\theta_{-j}),Z_{i,j-1}v)\Big)Z_{i,j-1}v\notag\\
	&= \frac{1}{N}\sum_{i=1}^N1\{Y_i=(X'_i,D'_{i,-(j-1)})'\theta_{-j}+D_{i,j}'\hat L_j(\theta_{-j})\}\text{sgn}(Z_{i,j-1}v)Z_{i,j-1}v\notag\\
	&= \frac{1}{N}\sum_{i=1}^N1\{Y_i=(X'_i,D'_{i,-(j-1)})'\theta_{-j}+D_{i,j}'\hat L_j(\theta_{-j})\}|Z_{i,j-1}v|\notag\\
	&\le \Big(\sum_{i=1}^N1\{Y_i=(X'_i,D'_{i,-(j-1)})'\theta_{-j}+D_{i,j}'\hat L_j(\theta_{-j})\}\Big)\max_{i=1,\dots,N}\frac{|Z_{i,j-1}v|}{N}.
\end{align}
Noting that $\sum_{i=1}^N1\{Y_i=(X'_i,D'_{i,-(j-1)})'\theta_{-j}+D_{i,j}'\hat L_j(\theta_{-j})\}=\text{dim}(\theta_j)=1$ and taking $v=1$, we obtain,
\begin{align}
	\Big|\frac{1}{N}\sum_{i=1}^N\big(1\{Y_i\le (X'_i,D'_{i,-(j-1)})'\theta_{-j}+D_{i,j}'\hat L_j(\theta_{-j})\}-\tau\big)Z_{i,j}\Big|
	\le \max_{i=1,\dots,N}\frac{|Z_{i,j-1}|}{N}=o_P(N^{-1/2}),
\end{align}
uniformly in $\theta_{-j}$, where the last equality is due to $E[|Z_{i,j-1}|^2]<\infty$ by Assumption 2.2. Therefore, for some $r_{N,j}$ satisfying the assumption of the lemma, we may write
\begin{multline}
	\Big|\frac{1}{N}\sum_{i=1}^N\big(1\{Y_i\le (X'_i,D'_{i,-(j-1)})'\theta_{-j}+D_{i,j}'\hat L_j(\theta_{-j})\}-\tau\big)Z_{i,j}\Big|^2\le r^2_{N,j}(\theta_{-j})\\
\le \inf_{\theta_j\in \Theta_{j}}\Big|\frac{1}{N}\sum_{i=1}^N\big(1\{Y_i\le (X'_i,D'_{i,-(j-1)})'\theta_{-j}+D_{i,j}'\theta_j\}-\tau\big)Z_{i,j}\Big|^2+r^2_{N,j}(\theta_{-j}).
\end{multline}
 The proof for $j=1$ is similar. Also, (ii) can be shown by mimicking the argument above.	
\end{proof}

\begin{lemma}\label{lem:sequi}
Suppose that Assumptions \ref{ass:ivqr} and \ref{ass:global_decentralization} hold. Then,  (i)
 $\mathcal L_N\equiv(\mathcal L_{N,1},\dots,\mathcal L_{N,J})$ defined in \eqref{eq:br_emp_process} satisfies
\begin{align}
\mathcal L_{N}(\cdot)\leadsto \mathbb W,
\end{align}
where $\mathbb W$ is a tight Gaussian process in $\ell^\infty(\Theta)^\dparam$ with the covariance kernel
\begin{align}
	\emph{Cov}(\mathbb W(\theta),\mathbb W(\tilde\theta))=E_P\big[(g(W;\theta)- E_P[g(W;\theta)])(g(W;\tilde\theta)-E_P[g(W;\tilde\theta)])'\big];\label{eq:cov_kernel}
\end{align} $\mathcal L_N$
is stochastically equicontinuous with respect to the variance semimetric $\rho$;  (ii) $\mathcal L^*_N\equiv(\mathcal L^*_{N,1},\dots,\mathcal L^*_{N,J})$ satisfies
\begin{align}
\mathcal L^*_{N}(\cdot)\stackrel{L^*}{\leadsto} \mathbb W;
\end{align}
(iii) $\rho$ satisfies
$\lim_{\delta\downarrow 0}\sup_{\|\theta-\tilde\theta\|< \delta}\rho(\theta,\tilde\theta)\to 0.$
\end{lemma}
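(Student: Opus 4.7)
The plan is to reduce the lemma to a uniform Bahadur-type representation for the weighted quantile-regression $Z$-estimator $\hat L_j(\theta_{-j})$ indexed by the side parameter $\theta_{-j}$, and then to combine this linearization with the Donsker property from Lemma \ref{lem:f_donsker} to deliver weak convergence, bootstrap validity, and equicontinuity all at once. First I would set up the $Z$-estimator framework: for each $j$, let $\Psi_{N,j}(\theta_j,\theta_{-j}) \equiv N^{-1}\sum_{i=1}^N f_j(W_i;\theta_j,\theta_{-j})$ and $\Psi_{P,j}(\theta_j,\theta_{-j})\equiv E_P[f_j(W;\theta_j,\theta_{-j})]$. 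Lemma \ref{lem:sample_BR}(i) gives $\sup_{\theta_{-j}}\|\Psi_{N,j}(\hat L_j(\theta_{-j}),\theta_{-j})\| = o_P(N^{-1/2})$, so the sample best-response maps approximately zero the sample moments uniformly in $\theta_{-j}$. Writing $\Psi_{N,j}=\Psi_{P,j}+N^{-1/2}\mathbb G_N f_j$ and using $\Psi_{P,j}(L_j(\theta_{-j}),\theta_{-j})=0$, a Taylor expansion of $\Psi_{P,j}$ in its first argument around $L_j(\theta_{-j})$ gives
\begin{equation*}
0 = N^{-1/2}\mathbb G_N f_j(\,\cdot\,;L_j(\theta_{-j}),\theta_{-j}) + \dot\Psi_{P,j}(L_j(\theta_{-j}),\theta_{-j})(\hat L_j(\theta_{-j})-L_j(\theta_{-j})) + R_{N,j}(\theta_{-j}),
\end{equation*}
where $\dot\Psi_{P,j} = \partial^2 Q_{P,j}/\partial\theta_j\partial\theta_j'$ is invertible by Assumption \ref{ass:global_decentralization}.\ref{ass:global_decentralization4}.

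The main obstacle is the uniform negligibility $\sup_{\theta_{-j}}\|R_{N,j}(\theta_{-j})\| = o_P(N^{-1/2})$. The remainder splits into a smooth-Taylor quadratic part and the stochastic-equicontinuity increment $N^{-1/2}[\mathbb G_N f_j(\,\cdot\,;\hat L_j(\theta_{-j}),\theta_{-j}) - \mathbb G_N f_j(\,\cdot\,;L_j(\theta_{-j}),\theta_{-j})]$. The quadratic part is $o_P(N^{-1/2})$ uniformly in $\theta_{-j}$ once one has $\sup_{\theta_{-j}}\|\hat L_j(\theta_{-j})-L_j(\theta_{-j})\|=o_P(1)$ (which I would derive from the standard $Z$-estimator consistency argument, applied uniformly using the Donsker-implied uniform convergence $\sup_\theta\|\Psi_{N,j}(\theta)-\Psi_{P,j}(\theta)\|=o_P(1)$ together with the identification/continuity properties of $L_j$ from Lemma \ref{lem:global_decentralization}) combined with the bounded-density Assumption \ref{ass:global_decentralization}.\ref{ass:global_decentralization3}. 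The increment vanishes by asymptotic uniform equicontinuity of the empirical process $\mathbb G_N f_j$ in the $L^2(P)$-semimetric, which is automatic because Lemma \ref{lem:f_donsker} shows $\{f_j(\,\cdot\,;\theta):\theta\in\Theta\}$ is Donsker.

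Once the uniform expansion $\mathcal L_{N,j}(\theta_{-j}) = -\mathbb G_N g_j(\,\cdot\,;\theta) + o_P(1)$ is in hand, part (i) is immediate: the map $\theta\mapsto \dot\Psi_{P,j}^{-1}(L_j(\theta_{-j}),\theta_{-j})$ is continuous and uniformly bounded on the compact set $\Theta$ by Assumption \ref{ass:global_decentralization} and dominated convergence, so the class $\{g_j(\,\cdot\,;\theta):\theta\in\Theta\}$ is Donsker as well. Hence $\mathbb G_N g \leadsto \mathbb W$ in $\ell^\infty(\Theta)^{\dparam}$ with covariance kernel \eqref{eq:cov_kernel}, and $\mathcal L_N$ is stochastically equicontinuous with respect to $\rho$ as a direct consequence of the Donsker property.

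For part (ii), I would invoke the exchangeable-bootstrap central limit theorem for Donsker classes (e.g.\ Theorem 3.6.1 in van der Vaart and Wellner 1996) to obtain $\mathbb G_N^* g \stackrel{L^*}{\leadsto}\mathbb W$ conditionally on the data, and then run the same expansion with $\hat L_j^*$ in place of $\hat L_j$, whose approximate-FOC property is provided by Lemma \ref{lem:sample_BR}(ii); the uniform-in-$\theta_{-j}$ remainder control proceeds identically because the same Donsker class governs both the sample and bootstrap empirical processes. For part (iii), continuity of $\theta\mapsto g(\,\cdot\,;\theta)$ in $L^2(P)$, which follows from continuity of each $L_j$ (Lemma \ref{lem:global_decentralization}), bounded conditional density (Assumption \ref{ass:global_decentralization}.\ref{ass:global_decentralization3}), and dominated convergence with a square-integrable envelope inherited from $|X|$ and $|Z_\ell|$ (Assumption \ref{ass:global_decentralization}.\ref{ass:global_decentralization2}), yields $\rho(\theta_n,\theta)\to 0$ whenever $\theta_n\to\theta$; compactness of $\Theta$ then upgrades pointwise continuity to the claimed uniform continuity modulus.
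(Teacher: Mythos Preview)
Your proposal is essentially correct but follows a different route from the paper's own argument. The paper does not carry out an explicit Bahadur expansion. Instead, for each $j$ it casts $L_j$ as the value of an abstract approximate $Z$-functional $\phi$ (in the sense of Lemma E.2 in \citet{Chernozhukov+13}), verifies the regularity conditions of their Lemma~\ref{lem:psi_cond} (a mild extension of CFM's Lemma E.1) so that $\phi$ is Hadamard differentiable tangentially to $\mathcal C(\mathcal N\times\mathcal U)\times\{0\}$, represents $\hat L_j=\phi(\psi_N,r_{N,j})$ via Lemma~\ref{lem:sample_BR}, and then applies the functional $\delta$-method (CFM, Lemma E.3) together with the Donsker property of $\mathcal M$ to obtain part~(i). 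Part~(ii) is then deduced from the same Hadamard differentiability by the bootstrap $\delta$-method argument (Theorem 3.9.4 in \citealp{VanderVaartWellner1996}) combined with the bootstrap CLT for Donsker classes; the paper writes out the bounded-Lipschitz sandwich explicitly rather than expanding $\hat L_j^*$ directly. Part~(iii) is argued exactly as you propose.

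Your direct linearization is more elementary and avoids the abstract Hadamard machinery; in exchange, you must explicitly manage the uniform rate bootstrap. One point to tighten: the sentence ``the quadratic part is $o_P(N^{-1/2})$ uniformly once one has $\sup_{\theta_{-j}}\|\hat L_j-L_j\|=o_P(1)$'' is not literally correct, since $o_P(1)$ consistency alone only gives a quadratic remainder that is $o_P(1)$. You need the usual two-step: first consistency, then feed consistency back into the expansion to obtain the uniform $O_P(N^{-1/2})$ rate, after which the quadratic term is $O_P(N^{-1})=o_P(N^{-1/2})$. The paper's Hadamard-differentiability route sidesteps this by packaging the entire argument into the continuity of the derivative $\phi'_{\Psi,0}$, which is also why it dovetails neatly with the proofs of Theorems~\ref{thm:asynorm} and~\ref{thm:bootstrap} that rely on the same $\delta$-method framework.
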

\begin{proof}
(i) We first work with $\mathcal L_{N,1}$.
For this, we establish that $L_1$ is Hadamard differentiable.
Note that $\theta_1=L_1(\theta_{-1})$ solves
\begin{align}
E_P[ ( 1\{ Y\leq D'\theta_{-1}+X'\theta_1 \} -\tau)X ]=0.	\label{eq:LMdist1}
\end{align}
Take $\mathcal U=\Theta_{-1}$, $\Xi=\Theta_1$, $\psi(\lambda,u)=E_P[( 1\{ Y\leq Du+X'\lambda \} -\tau)X ]$. Define $\phi:\ell^\infty(\Xi\times \mathcal U)^{k_b}\times \ell^\infty(\mathcal U)\to \ell^\infty(\mathcal U)$, which maps $(\psi,r)$ to a solution $\phi(\psi,r)=\lambda(\cdot)$ such that
\begin{align}
\|\psi(\lambda(u),u)\|^2\le \inf_{\lambda'\in\Theta}\|\psi(\lambda',u)\|^2+r(u)^2.	\label{eq:LMdist2}
\end{align}
Then, one may write $L_1(\cdot)=\phi(\psi,0).$ We then show that $\psi$ satisfies the conditions of Lemma \ref{lem:psi_cond}. Note first that $\mathcal U$ and  $\Xi$ are compact.
 $\psi$ is continuous and $\lambda\mapsto\psi(\lambda,u)$ is the gradient of the convex function $\lambda\mapsto E_P\left[\rho_\tau( Y-Du-X'\lambda)\right].$ The function $L_1(u)=\lambda_0(u)$  is defined as the exact solution of $\psi(\lambda,u)=0$.
Note also that, by Assumption \ref{ass:global_decentralization},
\begin{align}
	\frac{\partial^2}{\partial\theta_1\partial\theta_1'}Q_{P,1}(\theta_1,\theta_{-1})&=\frac{\partial}{\partial\theta_1'}E_P[ ( 1\{ Y\leq D'\theta_{-1}+X'\theta_1 \} -\tau)X ]\notag\\
	&=E_P[ \frac{\partial}{\partial\theta_1'}(F_{Y|D,X,Z}(D'\theta_{-1}+X'\theta_1) -\tau)X ]\notag\\
	&=E_P[f_{Y|D,X,Z}(D'\theta_{-1}+X'\theta_1)XX'],\label{eq:hessian1}
\end{align}
where the second equality follows from the dominated convergence theorem, and the last display is well-defined by the square integrability of $X$.
Similarly,
\begin{align}
\frac{\partial^2}{\partial\theta_1\partial\theta_{-1}'}Q_{P,1}(\theta_1,\theta_{-1})&=E_P[f_{Y|D,X,Z}(D'\theta_{-1}+X'\theta_1)XD'].
\end{align}
 Hence,  the derivative $$\frac{\partial}{\partial (\lambda',u')}\Psi(\lambda,u)=(\frac{\partial^2}{\partial\theta_1\partial\theta_{1}'}Q_{P,1}(\theta_1,\theta_{-1}),\frac{\partial^2}{\partial\theta_1\partial\theta_{-1}'}Q_{P,1}(\theta_1,\theta_{-1}))$$ exists and is continuous by  Assumption \ref{ass:global_decentralization}. By  Assumption \ref{ass:global_decentralization}.\ref{ass:global_decentralization4}, $\dot\Psi_{\lambda_0(u),u}=\frac{\partial^2}{\partial\theta_1\partial\theta_{1}'}Q_{P,1}(L_1(\theta_{-1}),\theta_{-1})$ obeys
\begin{align}
\inf_{u\in\mathcal U}\inf_{\|h\|=1}\|\dot\Psi_{\lambda_0(u),u}h\|=\inf_{\theta_{-1}\in \Theta_{-1}}\inf_{\|h\|=1}\|E_P[f_{Y|D,X,Z}(D'\theta_{-1}+X'\theta_1)XX']h\|>0.	\label{eq:LMdist3}
\end{align}
Then, by Lemma \ref{lem:psi_cond} and Lemma E.2 in CFM, $\phi$ is Hadamard differentiable tangentially to $\mathcal C(\mathcal N\times\mathcal U)^{K}\times\{0\}$ with the Hadamard derivative (of $L_1$)
\begin{align}
\phi'_{\Psi,0}(z,0)=-\left(\frac{\partial^2}{\partial\theta_1\partial\theta_1'}Q_{P,1}(L_1(\cdot),\cdot)\right)^{-1}z(L_1(\cdot),\cdot),\label{eq:phidot_B}
\end{align}
where $(z,0)\mapsto \phi'_{\Psi,0}(z,0)$ is continuous over $z\in \ell^\infty(\Theta)^{K}$.

For $j\ge 2$, the argument is similar. For example, for $j=2$, one may  take $\mathcal U=\Theta_{-2}$, $\Xi=\Theta_2$ and $\psi(\lambda,u)=E_P[ ( 1\{ Y\leq D_2\theta_2+(D_1,X)'u \} -\tau)Z_2]$ and write $L_2(\cdot)=\phi(\psi,0)$. The rest of the argument is the same.

Continuing with $j=1$, 
by Lemma \ref{lem:sample_BR}, one may write $\hat L_j(\cdot)=\phi(\psi_{N},r_{N,1})$ with $\psi_N(\lambda,u)=\frac{1}{N}\sum_{i=1}^N1\{Y_i\le D'_iu+X'_i\lambda\}X_i$ and $\sup_{\theta_{-1}\in\Theta_{-1}}|r_{N,1}(\theta_{-1})|=o_p(N^{-1/2}).$  By Lemma \ref{lem:f_donsker}  and applying the $\delta$-method (as in Lemma E.3 in CFM), we obtain
\begin{align}
\mathcal L_{N}(\cdot)\leadsto \mathbb W,
\end{align}
where $\mathbb W=(\mathbb W_1',\dots,\mathbb W_J')'$ is a tight Gaussian process in $\ell^\infty(\Theta)^\dparam$, where for each $j$,  $\mathbb W_j\in\ell^\infty(\Theta_{-j})^{d_j}$ is given pointwise by
\begin{align}
\mathbb W_j(\theta_{-j})&=-\left(\frac{\partial^2}{\partial\theta_j\partial\theta_j'}Q_{P,j}(L_j(\theta_{-j}),\theta_{-j})\right)^{-1}\mathbb Gf_j(w;L_j(\theta_{-j}),\theta_{-j}),~j=1,\cdots,\dpl;
\end{align}
Hence, its  covariance kernel is as given in \eqref{eq:cov_kernel}.
By Lemma 1.3.8. in \cite{VanderVaartWellner1996}, $\{\mathcal L_{N}\}$ is asymptotically tight, which in turn means that $\{\mathcal L_{N}\}$ is stochastically equicontinuous with respect to $\rho$ by Theorem 1.5.7 in \cite{VanderVaartWellner1996}.

\bigskip
\noindent
(ii) For each $j$, let $\mathcal L^*_{N,j}\in \ell^\infty(\Theta_{-j})^{d_j}$ be defined pointwise by
\begin{align}
  \mathcal L^*_{N,j}(\theta_{-j})=\sqrt N(\hat L^*_j(\theta_{-j}))-\hat L_j(\theta_{-j})).
\end{align}
Below, again we work with the case $j=1$. Using $\phi$ (the solution to \eqref{eq:LMdist2}) and applying Lemma \ref{lem:sample_BR}, we may write
\begin{align}
  \mathcal L^*_{N,1}(\theta_{-1})=\sqrt N(\phi(\hat \psi^*_N,r^*_N)-\phi(\hat \psi_N,r_N)),
\end{align}
where $\hat \psi_N(\lambda,u)=N^{-1}\sum_{i=1}^N( 1\{ Y_i\leq D_iu+X_i'\lambda \} -\tau)X_i$, and $\hat\psi_N^*$ is defined similarly for the bootstrap sample.
Let $E_{P^*}$ denote the conditional expectation with respect to $P^*$, the law of $\{W^*_i\}_{i=1}^N$ conditional on the sample path. Let $BL_1$ denote the space of bounded Lipschitz functions on $\mathbb R^{d_1}$ with Lipschitz constant 1. Then, for any $\epsilon>0$,
\begin{multline}
\sup_{h\in BL_1}\Big|E_{P^*}h\big(\sqrt N\big[\phi(\hat \psi^*_N,r^*_N)-\phi(\hat \psi_N,r_N)\big]\big)-E_{P^*}h\big(\phi'_{\Psi,0}\big(\sqrt N\big[(\hat \psi^*_N,r^*_N)-(\hat \psi_N,r_N)\big]\big)\big)\Big|\\
\le \epsilon +2P^*\Big(\big\|\sqrt N\big[\phi(\hat \psi^*_N,r^*_N)-\phi(\hat \psi_N,r_N)\big]-\phi'_{\Psi,0}\big(\sqrt N\big[(\hat \psi^*_N,r^*_N)-(\hat \psi_N,r_N)\big]\big)\big\|>\epsilon\Big).\label{eq:boot_cons_L1}
\end{multline}
By Lemma \ref{lem:f_donsker} and Theorem 3.6.2 in \cite{VanderVaartWellner1996}, $\sqrt N(\hat \psi^*_N-\hat \psi_N)\stackrel{L^*}{\leadsto} \mathbb Gf_1$. Noting that $h\circ\phi'_{\Psi,0}\in BL_1(\ell^\infty(\Theta_{-1})^{d_1}\times\mathbb R)$ and $r_N=o_p(N^{-1/2})$, it follows that
\begin{align}
\sup_{h\in BL_1}\Big|E_{P^*}h\big(\phi'_{\Psi,0}\big(\sqrt N\big[(\hat \psi^*_N,r^*_N)-(\hat \psi_N,r_N)\big]\big)\big)-E_{P^*}h\circ \phi'_{\Psi,0}(\mathbb Gf_1,0)\Big|\to 0,\label{eq:boot_cons_L2}
\end{align}
with probability approaching 1 due to $r_N=o_P(N^{-1/2})$. Hence, for the conclusion of the theorem, it suffices to show that the second term on the right hand side of \eqref{eq:boot_cons_L1} tends to 0.

As shown in the proof of (i), $\phi$ is Hadamard differentiable at $(\psi,0)$. Hence, by Theorem 3.9.4 in \cite{VanderVaartWellner1996},
\begin{align*}
\sqrt N\big[\phi(\hat \psi^*_N,r^*_N)-\phi(\psi,0)\big]&=\phi'_{\Psi,0}(\sqrt N[(\hat \psi^*_N,r^*_N)-(\psi,0)])+o_{P^*}(1) \\
\sqrt N\big[\phi(\hat \psi_N,r_N)-\phi(\psi,0)]&=\phi'_{\Psi,0}(\sqrt N[(\hat \psi_N,r_N)-(\psi,0)])+o_{P}(1),
\end{align*}
Take the difference of the left and right hand sides respectively and note that $\phi'_{\Psi,0}$ is linear. This implies the right hand side of \eqref{eq:boot_cons_L1} tends to 0 in probability.
 This, together with \eqref{eq:boot_cons_L1}-\eqref{eq:boot_cons_L2}, ensures
\begin{align}
 	\mathcal L^*_{N,1}\stackrel{L^*}{\leadsto}\mathbb W_1,
\end{align}
where $\mathbb W_1(\theta_{-1})=-\frac{\partial^2}{\partial\theta_1\partial\theta_1'}Q_{P,1}(L_1(\theta_{-1}),\theta_{-1})^{-1}\mathbb Gf_j(\cdot;L_1(\theta_{-1}),\theta_{-1}).$
The analysis for any $j\ne 1$ is similar, and one may apply the arguments above jointly across $j=1,\dots, J$, which yields the second claim of the lemma.

\bigskip
\noindent
(iii) Consider the first submatrix of $E_P[(g(W;\theta)-E_P[g(W;\theta)])(g(w;\tilde\theta)-E_P[g(w;\tilde\theta)])']$.
It is given by
\begin{align}
\text{Var}\Big(&-\Big(\frac{\partial^2}{\partial\theta_1\partial\theta_1'}Q_{P,1}(\br_1(\theta_{-1}),\theta_{-1})\Big)^{-1}f_1(w;\br_1(\theta_{-1}),\theta_{-1})\Big)\notag\\
&\qquad -	\text{Var}\Big(-\Big(\frac{\partial^2}{\partial\theta_1\partial\theta_1'}Q_{P,1}(\br_1(\tilde \theta_{-1}),\tilde \theta_{-1})\Big)^{-1}f_1(w;\br_1(\tilde\theta_{-1}),\tilde\theta_{-1})\Big)\notag\\
&=\left(\frac{\partial^2}{\partial\theta_1\partial\theta_1'}Q_{P,1}(\br_1(\theta_{-1}),\theta_{-1})\right)^{-1}\text{Var}(f_1(w;\br_1(\theta_{-1}),\theta_{-1}))\left(\frac{\partial^2}{\partial\theta_1\partial\theta_1'}Q_{P,1}(\br_1(\theta_{-1}),\theta_{-1})\right)^{-1}\notag\\
&\qquad-\left(\frac{\partial^2}{\partial\theta_1\partial\theta_1'}Q_{P,1}(\br_1(\tilde \theta_{-1}),\tilde \theta_{-1})\right)^{-1}\text{Var}(f_1(w;\br_1(\tilde\theta_{-1}),\tilde\theta_{-1}))\left(\frac{\partial^2}{\partial\theta_1\partial\theta_1'}Q_{P,1}(\br_1(\tilde \theta_{-1}),\tilde \theta_{-1})\right)^{-1}.
\end{align}
Note that $\Theta$ is compact and $\theta_{-1}\mapsto \big(\frac{\partial^2}{\partial\theta_1\partial\theta_1'}Q_{P,1}(\br_1(\theta_{-1}),\theta_{-1})\big)^{-1}$ is continuous by Lemma \ref{lem:global_decentralization}, which implies that this map is uniformly continuous. Therefore, it remains to show the uniform continuity of $\theta\mapsto \text{Var}(f_1(w;\theta))$. Note that
\begin{multline}
	\text{Var}(f_1(w;\br_1(\theta_{-1}),\theta_{-1}))=E_P[(1\{ Y\leq D'\theta_{-1}+X'L_1(\theta_{-1}) \} -\tau)XX']\\
	\qquad-E_P[(1\{ Y\leq D'\theta_{-1}+X'L_1(\theta_{-1}) \}
	-\tau)X]E_P[(1\{ Y\leq D'\theta_{-1}+X'L_1(\theta_{-1}) \} -\tau)X]'.
\end{multline}
The right hand side of the display above is continuous on the compact domain $\Theta$, and hence it is uniformly continuous. One can argue the same way for the other subcomponents of $\text{diag}\big(E_P[(g(W;\theta)-E_P[g(W;\theta)])(g(w;\tilde\theta)-E_P[g(w;\tilde\theta)])']\big)$. This completes the proof.
\end{proof}

\begin{lemma}\label{cor:LMdist}
Suppose that Assumptions \ref{ass:ivqr} and \ref{ass:global_decentralization} hold. (i) Let $W_i=(Y_i,D_i',X_i',Z_i')',i=1,\dots,N$ be an i.i.d. sample generated from the IVQR model.  Then,
	\begin{align}
	\sqrt N \big(
\hat \K-\K
\big)
	\leadsto \mathbb W.\label{eq:LMdist0a}
	\end{align}
(ii) Let $W_i^*=(Y_i^*,D_i^{*\prime},X^{*\prime}_i,Z^{*\prime}_i)',i=1,\dots,N$ be an bootstrap sample from the empirical distribution $\hat P_N$ of $\{W_i\}_{i=1}^N$. Then,
\begin{align*}
\sqrt N\big(\hat \K^*-\hat\K \big) \stackrel{L^*}{\leadsto}\mathbb W.
\end{align*}
\end{lemma}

\begin{proof}
(i) By Lemma \ref{lem:sequi}, it follows that
\begin{align*}
\sqrt N\big(\hat \br_1(\cdot)-\br_1(\cdot),\dots,\hat \br_{\dpl}(\cdot)-\br_{\dpl}(\cdot)\big)'\leadsto\mathbb W.
\end{align*}
 Note that,  by the definition of $\hat\br$ and $\br$, one has
\begin{align*}
\sqrt N(\hat K_j(\theta)-K_j(\theta))&=\sqrt N(\hat L_j(\theta_{-j})-L_j(\theta_{-j})),j=1,\cdots,J.
\end{align*}
The conclusion of the lemma then follows. The proof of (ii) is similar, and is therefore omitted.
\end{proof}

\section{Consistency of the Contraction Estimator}\label{sec:consistency_contraction}
Below, we adopt the framework of \cite{Dominitz:2005aa}.
Let $(\mathcal X,d)$ be a metric space. For a contraction map $F:\mathcal X\to\mathcal X$, let $c_F$ be the modulus of contraction such that
\begin{align*}
d(F(x),F(x'))\le c_Fd(x,x'),
\end{align*}
for any $x,x'\in\mathcal X.$ 
As discussed in Section \ref{sec:sample_algorithms} the fixed point estimator $\est$ can be computed using the sample sequential dynamical system (in \eqref{eq:sample_sequential_ds}) or the following sample simultaneous dynamical system.
\begin{align}
\theta^{(s+1)}=\hat K\left(\theta^{(s)}\right),~~s=0,1,2,\dots,~~\param^{(0)}~\text{given}.\label{eq:sample_simultaneous_ds}
\end{align}

\begin{lemma}
Suppose Assumptions \ref{ass:ivqr}, \ref{ass:global_decentralization}, and \ref{ass:contraction_global} hold. Let $\est$ be an estimator constructed by iterating the dynamical system in \eqref{eq:sample_simultaneous_ds} or (in \eqref{eq:sample_sequential_ds}) $s_N$ times, where $s_N\ge -\frac{1}{2}\ln N/\ln c_K$.	Then,
\begin{align*}
\est-\true=O_p(N^{-1/2}).
\end{align*}
\end{lemma}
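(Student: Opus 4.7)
The strategy I would follow is the one built into the framework of \cite{Dominitz:2005aa} and invoked explicitly by the authors: denote by $\tilde\theta_N$ the exact fixed point of $\hat K$ (shown to exist below), and use the triangle decomposition
\begin{equation*}
\|\est - \true\| \;\le\; \|\theta^{(s_N)} - \tilde\theta_N\| \;+\; \|\tilde\theta_N - \true\|.
\end{equation*}
It then suffices to show each piece is $O_p(N^{-1/2})$.

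For the first piece, I would first argue that $\hat K$ is a contraction on (a subset of) $\mathrm{cl}(\tilde D_K)$ with probability approaching one, with contraction modulus $c_{\hat K} = c_K + o_p(1) < 1$. Under Assumption \ref{ass:contraction_global}, $\|J_K(\theta)\| \le c_K < 1$ on $\tilde D_K$. Uniform convergence of $\hat K$ to $K$ on $\Theta$ (which follows from Lemma \ref{lem:sequi} combined with the Lipschitz property of the population BR maps established in Lemma \ref{lem:global_decentralization}), together with the structure of the QR-based sample BR maps, lets one control the Lipschitz constant of $\hat K$ by that of $K$ up to a $o_p(1)$ remainder; alternatively, Lemma \ref{lemma:local_decentralization}-type arguments applied to $\hat K$ give uniform convergence of $J_{\hat K}$ to $J_K$ on a neighborhood of $\true$. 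Banach's fixed-point theorem then delivers existence and uniqueness of $\tilde\theta_N \in \mathrm{cl}(\tilde D_K)$ with probability approaching one, and the standard iteration bound
\begin{equation*}
\|\theta^{(s_N)} - \tilde\theta_N\| \;\le\; c_{\hat K}^{s_N}\, \|\theta^{(0)} - \tilde\theta_N\|
\end{equation*}
applies. Since $\theta^{(0)}\in\Theta$ (compact) and $\tilde\theta_N$ is bounded, the right-hand side is $O_p(c_K^{s_N})$, and the choice $s_N \ge -\tfrac{1}{2}\ln N/\ln c_K$ gives $c_K^{s_N} \le N^{-1/2}$, so $\|\theta^{(s_N)} - \tilde\theta_N\| = O_p(N^{-1/2})$.

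For the second piece, I would observe that the exact fixed point $\tilde\theta_N$ trivially satisfies $\|\tilde\theta_N - \hat K(\tilde\theta_N)\| = 0 \le \inf_{\theta'\in\Theta}\|\theta'-\hat K(\theta')\| + o_p(N^{-1/2})$, hence qualifies as an estimator in the sense of \eqref{eq:hattheta}. Theorem \ref{thm:asynorm} then applies and yields $\sqrt N(\tilde\theta_N - \true) \stackrel{L}{\to} N(0,V)$, so in particular $\|\tilde\theta_N - \true\| = O_p(N^{-1/2})$. Combining the two bounds by the triangle inequality completes the proof.

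The main obstacle I anticipate is justifying the contraction property of $\hat K$ uniformly on (a closed subset of) $\tilde D_K$ with probability approaching one, as opposed to merely locally near $\true$; the pointwise stochastic expansions in Lemma \ref{lem:sequi} need to be strengthened to a uniform-in-$\theta$ statement on the derivatives (or Lipschitz constants) of $\hat L_j$. A convenient route is an empirical-process argument on the score-type functions of the check objectives in \eqref{eq:QN1}--\eqref{eq:QNj}, exploiting the Donsker property established in Lemma \ref{lem:f_donsker} together with the inverse-function arguments used in Lemma \ref{lem:global_decentralization} to translate score convergence into uniform convergence of the BR map derivatives.
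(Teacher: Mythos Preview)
Your decomposition via the exact sample fixed point $\tilde\theta_N$ differs from the paper's route and introduces an obstacle that the paper deliberately avoids. The paper never shows that $\hat K$ is a contraction. Instead, it applies Theorem~1 of \cite{Dominitz:2005aa} directly, which only requires (i) that the \emph{population} map $K$ is a contraction (Assumption~\ref{ass:contraction_global}), (ii) that the \emph{population} iterates satisfy $\|\theta^{(s_N)}-\true\|\le N^{-1/2}\|\theta^{(0)}-\true\|$ under the stated lower bound on $s_N$, and (iii) that $\sqrt N\sup_{\theta}\|\hat K(\theta)-K(\theta)\|=O_p(1)$, which the paper reads off from Corollary~\ref{cor:LMdist} and tightness of $\mathbb W$. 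The Dominitz--Sherman telescoping bound then controls $\|\hat\theta^{(s_N)}-\theta^{(s_N)}\|$ by a geometric series in $c_K$ times the uniform error, without ever needing Lipschitz control of $\hat K$.

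The step in your argument that is a genuine gap is the claim that uniform convergence of $\hat K$ to $K$ ``lets one control the Lipschitz constant of $\hat K$ by that of $K$ up to a $o_p(1)$ remainder.'' Sup-norm convergence does not imply convergence of Lipschitz constants, and the alternative route you sketch---uniform convergence of $J_{\hat K}$ to $J_K$---is problematic here because the sample BR maps $\hat L_j$ are solutions of quantile regressions and are piecewise constant (or piecewise linear) step functions in $\theta_{-j}$; they are not differentiable, and empirical-process control of the score (Lemma~\ref{lem:f_donsker}) gives you convergence of the level of $\hat L_j$, not of its increments at the $o_p(1)$ scale needed for a Lipschitz bound. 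You correctly flag this as the main obstacle, but the point is that the Dominitz--Sherman framework used in the paper makes it unnecessary: only the population contraction modulus and the uniform $O_p(N^{-1/2})$ bound on $\hat K-K$ are used. Your second piece, invoking Theorem~\ref{thm:asynorm} for $\tilde\theta_N$, would be fine logically (that theorem does not rely on the present lemma), but it is also not needed once you follow the paper's route.
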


\begin{proof}
We show the result by applying Theorem 1 in \cite{Dominitz:2005aa} to the estimator obtained from the simultaneous dynamical system. The argument for the sequential system is similar.

By Assumption \ref{ass:contraction_global}, $K$ is a contraction map on $D_K$. Let $\theta^{(s)}$ be obtained from iterating $s$-times the population dynamical system in \eqref{eq:simultaneous_ds}.
The iteration on the dynamical system is covergent at least linearly \citep[][Proposition 1.1]{Bertsekas1989}. Under the condition on $s_N$, arguing as in \cite[p.842]{Dominitz:2005aa},  it follows that $N^{1/2}\|\theta^{(s_N)}-\true\|\le \|\theta^{(0)}-\true\|$. Finally, by Lemma \ref{cor:LMdist} and  tightness of $\mathbb W$, $N^{1/2}\sup_{\theta\in D_K}\|\hat K(\theta)-K(\theta)\|=O_p(1)$. These imply the conditions of Theorem 1 in \cite{Dominitz:2005aa} with $\delta=1/2$. The claim of the lemma then follows.
\end{proof}

\bibliographystyle{econometrica}
\bibliography{references}

\end{document}